\def\d{\delta}
\def\ve{\varepsilon}
\def\g{\gamma}
\def\m{\mu}
\def\p{\pi}
\def\s{\sigma}
\newcommand{\erf}{\mbox{\text erf}}
\newcommand{\erfc}{\mbox{\text erfc}}
\newcommand{\prob}[2][]{\text{\bf Pr}\ifthenelse{\not\equal{}{#1}}{_{#1}}{}\!\left[#2\right]}
\newcommand{\expect}[2][]{\text{\bf E}\ifthenelse{\not\equal{}{#1}}{_{#1}}{}\!\left[#2\right]}
\newtheorem{theorem}{Theorem}
\newtheorem{remark}{Remark}
\newtheorem{fact}[theorem]{Fact}
\newtheorem{lemma}[theorem]{Lemma}
\newtheorem{claim}{Claim}
\newtheorem{definition}{Definition}
\newtheorem{corollary}{Corollary}
\newcommand{\ignore}[1]{}
\newtheorem{proposition}[theorem]{Proposition}
\newenvironment{prevproof}[2]{\noindent {\em {Proof of {#1}~\ref{#2}:}}}{$\hfill\qed$\vskip \belowdisplayskip}
\newcommand{\bg}[1]{\medskip\noindent{\bf #1}}
\newcommand{\dtv}{d_{\mathrm TV}}
\newcommand{\dk}{d_{\mathrm K}}
\title{Faster and Sample Near-Optimal Algorithms for Proper Learning Mixtures of Gaussians}
\author {
Constantinos Daskalakis\thanks{Supported by a Sloan Foundation Fellowship, a Microsoft Research Faculty Fellowship, and NSF Award CCF-0953960 (CAREER) and CCF-1101491.}\\
EECS, MIT \\
\tt{costis@mit.edu}
\and
Gautam Kamath\thanks{Part of this work was done while the author was supported by an Akamai Presidential Fellowship.}\\
EECS, MIT\\
\tt{g@csail.mit.edu}
}
\begin{document}
\addtocounter{page}{-1}
\maketitle
\thispagestyle{empty}
\begin{abstract}
We provide an algorithm for properly learning mixtures of two single-dimensional Gaussians without any separability assumptions. Given $\tilde{O}(1/\varepsilon^2)$ samples from an unknown mixture, our algorithm outputs a mixture that is $\varepsilon$-close in total variation distance, in time $\tilde{O}(1/\varepsilon^5)$. Our sample complexity is optimal up to logarithmic factors, and significantly improves upon both Kalai et al.~\cite{KalaiMV10}, whose algorithm has a prohibitive dependence on~$1/\varepsilon$, and Feldman et al.~\cite{FeldmanOS06}, whose algorithm requires bounds on the mixture parameters and depends pseudo-polynomially in these parameters.

One of our main contributions is an improved and generalized algorithm for selecting a good candidate distribution from among competing hypotheses. Namely, given a collection of $N$ hypotheses containing at least one candidate that is $\varepsilon$-close to an unknown distribution, our algorithm outputs a candidate which is $O(\varepsilon)$-close to the  distribution.
The algorithm requires ${O}(\log{N}/\varepsilon^2)$ samples from the unknown distribution and ${O}(N \log N/\varepsilon^2)$ time, which improves previous such results (such as the Scheff\'e estimator) from a quadratic dependence of the running time on $N$ to quasilinear. Given the wide use of such results for the purpose of hypothesis selection, our improved algorithm implies immediate improvements to any such use.
\end{abstract}

\newpage

\section{Introduction} \label{sec:intro}

Learning mixtures of Gaussian distributions is one of the most fundamental problems in Statistics, with a multitude of applications in the natural and social sciences, which has recently received considerable attention in Computer Science literature. Given independent samples from an unknown mixture of Gaussians, the task is to `learn' the underlying mixture. 

\smallskip In one version of the problem, `learning' means estimating the {\em parameters} of the mixture, that is the mixing probabilities as well as the parameters of each constituent Gaussian. The most popular heuristic for doing so is running the EM algorithm on samples from the mixture~\cite{Dempster-Laird-Rubin77}, albeit no rigorous guarantees are known for it in general. 

A line of research initiated by Dasgupta~\cite{DasguptaFOCS99, AroraKannanSTOC01, VempalaWang04, AchlioptasMcSherryCOLT05, BrubakerVempalaFOCS08} provides rigorous guarantees  under separability conditions: roughly speaking, it is assumed that the constituent Gaussians have variation distance bounded away from $0$ (indeed, in some cases, distance exponentially close to $1$). This line of work was recently settled by a triplet of breakthrough results~\cite{KalaiMV10,MoitraValiantFOCS10,BelkinSinhaFOCS10}, establishing the polynomial solvability of the problem under minimal separability conditions for the parameters to be recoverable in the first place: for any $\ve>0$, polynomial in $n$ and $1/\ve$ samples from a mixture of $n$-dimensional Gaussians suffice to recover the parameters of the mixture in ${\rm poly}(n,1/\ve)$ time. 

While these results settle the polynomial solvability of the problem, they serve more as a proof of concept in that their dependence on $1/\ve$ is quite expensive.\footnote{For example, the single-dimensional algorithm in the heart of~\cite{KalaiMV10} has sample and time complexity of~$\Theta(1/\ve^{300})$ and~$\Omega(1/\ve^{1377})$ respectively (even though the authors most certainly did not intend to optimize their constants).} Indeed, even for mixtures of two single-dimensional Gaussians, a practically efficient algorithm for this problem is unknown.

\smallskip A weaker goal for the learner of Gaussian mixtures is this: given samples from an unknown mixture, find {\em any} mixture that is close to the unknown one, for some notion of {closeness}. This PAC-style version of the problem~\cite{KearnsMRRSSellieSTOC94} was pursued by Feldman et al.~\cite{FeldmanOS06} who obtained efficient learning algorithms for mixtures of $n$-dimensional, axis-aligned Gaussians. Given ${\rm poly}(n, 1/\ve, L)$ samples from such mixture, their algorithm constructs a mixture whose KL divergence  to the sampled one is at most $\ve$. Unfortunately, the sample and time complexity of their algorithm depends polynomially on a (priorly known) bound $L$, determining the range of the means and variances of the constituent Gaussians in every dimension.\footnote{In particular, it is assumed that every constituent Gaussian in every dimension has mean $\mu \in [-\mu_{\max}, \mu_{\max}]$ and variance $\sigma^2 \in [\sigma_{\min}^2, \sigma_{\max}^2]$ where $\mu_{\max} \sigma_{\max}/\sigma_{\min} \le L$.} In particular, the algorithm has pseudo-polynomial dependence on $L$ where there shouldn't be any dependence on $L$ at all~\cite{KalaiMV10,MoitraValiantFOCS10,BelkinSinhaFOCS10}.

\smallskip Finally, yet a weaker goal for the learner would be to construct {\em any distribution} that is close to the unknown mixture. In this {\em non-proper} version of the problem the learner is not restricted to output a Gaussian mixture, but can output any (representation of a) distribution that is close to the unknown mixture. For this problem, recent results of Chan et al.~\cite{ChanDSS13b} provide algorithms for single-dimensional mixtures, whose sample complexity has near-optimal dependence on~$1/\ve$. Namely, given $\tilde{O}(1/\ve^2)$ samples from a single-dimensional mixture, they construct a piecewise polynomial distribution that is $\ve$-close in total variation distance.

\smallskip Inspired by this recent progress on non-properly learning single-dimensional mixtures, our goal in this paper is to provide sample-optimal algorithms that {\em properly learn}. We obtain such algorithms for mixtures of two single-dimensional Gaussians. Namely,

\begin{theorem}\label{thm:main theorem}
For all $\ve, \delta > 0$, given $ \tilde{O}(\log(1/\delta)/\ve^2)$ independent samples from an arbitrary mixture~$F$ of two univariate Gaussians we can compute in time $\tilde O(\log^3(1/\delta)/\ve^{5})$ a mixture~$F'$ such that $\dtv(F,F') \le \ve$ with probability at least $1-\delta$. 
The expected running time of this algorithm is $\tilde O(\log^2(1/\delta)/\ve^{5})$.
\end{theorem}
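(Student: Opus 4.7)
My plan is to follow the standard cover-then-select paradigm: first construct a small family of candidate mixtures guaranteed to contain one that is $\varepsilon$-close to $F$ in total variation distance, then invoke the improved hypothesis selection subroutine advertised in the abstract to pick out a good candidate. Since the selector takes $O(\log N / \varepsilon^2)$ samples and $O(N \log N / \varepsilon^2)$ time on $N$ hypotheses, matching the stated $\tilde O(1/\varepsilon^5)$ running time requires the cover to have size $N = \tilde O(1/\varepsilon^3)$, and the sample complexity of the whole pipeline will be dominated by the $\tilde O(\log(1/\delta)/\varepsilon^2)$ bound of the selector once $N = \mathrm{poly}(1/\varepsilon)$.

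The first step is to use $\tilde O(1/\varepsilon^2)$ samples to compute coarse estimates of the first few moments of $F$, in particular its mean $\mu$ and variance $\sigma^2$. After shifting and rescaling by these estimates, the two constituent Gaussians must lie in a bounded region of parameter space, since any component that contributes non-negligibly to $F$ must have mean within $O(1)$ standard deviations of $\mu$ and variance $O(\sigma^2)$; components outside this range can be shown to contribute at most $\varepsilon$ mass to the mixture and therefore need not be covered accurately.

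The second step is the cover construction. The mixture has five real parameters $(w, \mu_1, \mu_2, \sigma_1^2, \sigma_2^2)$, and a naive $\varepsilon$-grid would yield $\mathrm{poly}(1/\varepsilon)$ candidates with exponent $5$, which is too many. To bring the cover size down to $\tilde O(1/\varepsilon^3)$, I would exploit the fact that accurate empirical estimates of the first and second moments of $F$ give two equations that any good candidate must approximately satisfy, effectively reducing the free parameters to three. Concretely, I would grid over $w$, $\mu_1$, and $\sigma_1^2$ on an appropriately scaled $\varepsilon$-grid inside the bounded region from step~1, then use the moment constraints to solve for the remaining two parameters (with a fallback case analysis when the induced quadratics are ill-conditioned, e.g.\ when one component dominates, when $w \approx 1/2$ and $\mu_1 \approx \mu_2$, or when the two variances are on wildly different scales). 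In each case one shows that parameter-closeness on this grid translates to $O(\varepsilon)$ closeness in total variation, so that the cover contains some $F^\star$ with $\dtv(F,F^\star) \le O(\varepsilon)$.

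The third step is to feed the $\tilde O(1/\varepsilon^3)$ candidate cover to the improved hypothesis selector, which returns some $F'$ with $\dtv(F,F') = O(\varepsilon)$ using $\tilde O(\log(1/\delta)/\varepsilon^2)$ samples and $\tilde O(\log^3(1/\delta)/\varepsilon^5)$ time (the extra $\log$ factors coming from standard confidence amplification and from bit-complexity of evaluating Gaussian mixture pdfs on the selector's test sets); the gap between expected and worst-case running time presumably arises because one of the preprocessing steps (cover construction or pdf evaluation) has a Las Vegas character. Rescaling $\varepsilon$ by a constant factor and reapplying yields the theorem as stated. The main obstacle I anticipate is step~2: making the cover provably small while still containing an $\varepsilon$-close mixture requires a careful case analysis to handle the degenerate regimes where the two components nearly coincide or where one component has very small weight, and getting the cover size down to $\tilde O(1/\varepsilon^3)$ rather than the naive $\tilde O(1/\varepsilon^5)$ is exactly what allows the total running time to be $\tilde O(1/\varepsilon^5)$ rather than $\tilde O(1/\varepsilon^7)$.
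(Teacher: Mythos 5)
Your high-level decomposition (build a cover, then select with the fast tournament) matches the paper, and you correctly identify that the cover size must be $\tilde O(1/\varepsilon^3)$ for the arithmetic to come out to $\tilde O(1/\varepsilon^5)$. But your cover construction has a genuine gap that is, in fact, the central difficulty of the problem, and the paper's approach to the cover is quite different from yours precisely to get around it.

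The gap is in your step~2. After centering and rescaling so the mixture has mean $0$ and variance $1$, you do get upper bounds $\sigma_1, \sigma_2 = O(1)$ and $|\mu_1 - \mu_2| = O(1)$, but there is \emph{no lower bound} on $\sigma_1$ (or $\sigma_2$). A component can have weight $w=1/2$ and $\sigma_1 = \varepsilon^{100}$, say. By Lemma~\ref{lem:paramest}, to make a candidate $O(\varepsilon)$-close you need $|\hat\mu_1 - \mu_1| \le O(\varepsilon)\sigma_1$ and $|\hat\sigma_1 - \sigma_1| \le O(\varepsilon)\sigma_1$, i.e.\ resolution \emph{relative to the unknown $\sigma_1$}, not relative to the global scale. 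A fixed $\varepsilon$-grid over $\mu_1 \in [-O(1),O(1)]$ has spacing $\Theta(\varepsilon)$, which is far too coarse when $\sigma_1 \ll 1$; to get spacing $\varepsilon\sigma_1$ you would need $\Theta(1/(\varepsilon\sigma_1))$ grid points, which is unbounded. The same problem afflicts a multiplicative grid for $\sigma_1$: you would need $\Theta(\log(1/\sigma_1)/\varepsilon)$ points, again unbounded. Your fallback cases do not mention the ``narrow spike with non-negligible weight'' regime, and this is not a degenerate corner case --- it is generic and is exactly what forces the paper's more elaborate strategy.

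The paper circumvents this by never using a fixed grid for $\mu_1$ or $\sigma_1$. Instead, it draws $\Theta(1/(w\varepsilon))$ samples from the mixture and uses the \emph{samples themselves} as candidate means (Proposition~\ref{prop:nearmean}, Lemma~\ref{lem:means}); with high probability some sample lands within $\varepsilon\sigma_1$ of $\mu_1$, automatically at the correct (unknown) scale. It then estimates $\sigma_1$ by looking at the distance from the candidate mean to the \emph{nearest} other sample (Lemma~\ref{lem:closeptgmmrob}), which again concentrates at scale $\Theta(\sigma_1/n)$ and thus reveals the right scale to grid over, yielding only $O(1/\varepsilon)$ candidates for $\sigma_1$. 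Finally, rather than gridding over $(\mu_2,\sigma_2)$ at all, the paper ``subtracts out'' the first component from a DKW empirical CDF and reads off $\mu_2,\sigma_2$ using robust statistics (median and IQR), producing a \emph{single} candidate for the second component. This last idea is what brings the cover down to $\tilde O(1/\varepsilon^3)$ rather than $\tilde O(1/\varepsilon^4)$ or worse; your moment-matching reduction aims at the same savings but inherits the scale problem above, and moment equations for Gaussian mixtures are notoriously ill-conditioned when components nearly coincide. As a small side point, the $\log^3(1/\delta)$ worst-case and $\log^2(1/\delta)$ expected running times come from the internal structure of {\tt FastTournament} (repeating the subsampling/pruning strategies $O(\log(1/\delta))$ times and then running {\tt SlowTournament} on those outputs, versus a Las Vegas loop in {\tt FastTournament}$_B$), not from bit-complexity of PDF evaluation or from a Las Vegas cover construction.
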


We note that learning a univariate mixture often lies at the heart of learning multivariate mixtures \cite{KalaiMV10,MoitraValiantFOCS10}, so it is important to understand this fundamental case.

\paragraph{Discussion.} Note that our algorithm makes no separability assumptions about the constituent Gaussians of the unknown mixture, nor does it require or depend on a bound on the mixture's parameters. Also, because the mixture is single-dimensional it is not amenable to the techniques of~\cite{Hsu2013}. Moreover, it is easy to see that our sample complexity is optimal up to logarithmic factors. Indeed, a Gaussian mixture can trivially simulate a Bernoulli distribution as follows. Let $Z$ be a Bernoulli random variable that is $0$ with probability $1-p$ and $1$ with probability $p$. Clearly, $Z$ can be viewed as a mixture of two Gaussian random variables of $0$ variance, which have means $0$ and $1$ and are mixed with probabilities $1-p$ and $p$ respectively. It is known that $1/\ve^2$ samples are needed to properly learn a Bernoulli distribution, hence this lower bound immediately carries over to Gaussian mixtures. 

\paragraph{Approach.} Our algorithm is intuitively quite simple, although some care is required to make the ideas work. First, we can guess the mixing weight up to additive error $O(\ve)$, and proceed with our algorithm pretending that our guess is correct. Every guess will result in a collection of candidate distributions, and the final step of our algorithm is a tournament that will select, from among all candidate distributions produced in the course of our algorithm, a distribution that is $\ve$-close to the unknown mixture, if such a distribution exists. To do this we will make use of the following theorem which is our second main contribution in this paper. (See Theorem~\ref{thm:tournament theorem} for a precise statement.)

\medskip {\bf Informal Theorem~\ref{thm:tournament theorem}.}~~{\em There exists an algorithm {\tt FastTournament} that takes as input sample access to an unknown distribution $X$ and a collection of candidate hypothesis distributions $H_1,\ldots,H_N$, as well as an accuracy parameter $\ve >0$, and has the following behavior: if there exists some distribution among $H_1,\ldots,H_N$ that is $\ve$-close to $X$, then the distribution output by the algorithm is $O(\ve)$-close to $X$. Moreover, the number of samples drawn by the algorithm from each of the distributions is $O(\log N /\ve^2)$ and the running time is $O(N \log N/\ve^2)$.}

\medskip Devising a hypothesis selection algorithm with the performance guarantees of Theorem~\ref{thm:tournament theorem} requires strengthening the Scheff\'e-estimate-based approach described in Chapter~6 of~\cite{DL:01} (see \cite{Yatracos85,DL96,DL97}, as well as the recent papers of Daskalakis et al.~\cite{DaskalakisDiakonikolasServedioSTOC12} and Chan et al.~\cite{ChanDSS13a}) to continuous distributions whose crossings are difficult to compute exactly as well as to sample-only access to all involved distributions, but most importantly improving the running time to almost linear in the number  $N$ of candidate distributions. Further comparison of our new hypothesis selection algorithm to related work is provided in Section~\ref{sec:tournament}. It is also worth noting that the tournament based approach of~\cite{FeldmanOS06} cannot be used for our purposes in this paper as it would require a priorly known bound on the mixture's parameters and would depend pseudopolynomially on this bound.

\medskip Tuning the number of samples according to the guessed mixing weight, we proceed to draw samples from the unknown mixture, expecting that some of these samples will fall sufficiently close to the means of the constituent Gaussians, where the closeness will depend on the number of samples drawn as well as the unknown variances. We guess which sample falls close to the mean of the constituent Gaussian that has the smaller value of $\s/w$ (standard deviation to mixing weight ratio), which gives us the second parameter of the mixture. To pin down the variance of this Gaussian, we implement a natural idea. Intuitively, if we draw samples from the mixture, we expect that the constituent Gaussian with the smallest $\sigma/w$ will determine the smallest distance among the samples. Pursuing this idea we produce a collection of variance candidates, one of which truly corresponds to the variance of this Gaussian, giving us a third parameter.

At this point, we have a complete description of one of the component Gaussians. 
If we could remove this component from the mixture, we would be left with the remaining unknown Gaussian.
Our approach is to generate an empirical distribution of the mixture and ``subtract out'' the component that we already know, giving us an approximation to the unknown Gaussian. For the purposes of estimating the two parameters of this unknown Gaussian, we observe that the most traditional estimates of {location and scale} are unreliable, since the error in our approximation may cause probability mass to be shifted to arbitrary locations. Instead, we use robust statistics to obtain approximations to these two parameters.

The empirical distribution of the mixture is generated using the Dvoretzky-Kiefer-Wolfowitz (DKW) inequality \cite{dvoretzky1956}.
With $O(1/\ve^2)$ samples from an \emph{arbitrary} distribution, this algorithm generates an $\ve$-approximation to the distribution (with respect to the Kolmogorov metric).
Since this result applies to arbitrary distributions, it generates a hypothesis that is weak, in some senses, including the choice of distance metric.
In particular, the hypothesis distribution output by the DKW inequality is discrete, resulting in a total variation distance of $1$ from a mixture of Gaussians (or any other continuous distribution), regardless of the accuracy parameter $\ve$.
Thus, we consider it to be interesting that such a weak hypothesis can be used as a tool to generate a stronger, proper hypothesis.
We note that the Kolmogorov distance metric is not special here - an approximation with respect to other reasonable distance metrics may be substituted in, as long as the description of the hypothesis is efficiently manipulable in the appropriate ways.

We show that, for any target total variation distance $\ve$, the number of samples required to execute the steps outlined above in order to produce a collection of candidate hypotheses one of which is $\ve$-close to the unknown mixture, as well as to run the tournament to select from among the candidate distributions are $\tilde{O}(1/\ve^2)$. The running time is $\tilde{O}(1/\ve^5)$.

\paragraph{Comparison to Prior Work on Learning Gaussian Mixtures.}

In comparison to the recent breakthrough results~\cite{KalaiMV10,MoitraValiantFOCS10,BelkinSinhaFOCS10}, our algorithm has near-optimal sample complexity and much milder running time, where  these results have quite expensive dependence of both their sample and time complexity on the accuracy $\ve$, even for single-dimensional mixtures.\footnote{For example, compared to~\cite{KalaiMV10} we improve by a factor of at least $150$ the exponent of both the sample and time complexity.} On the other hand, our algorithm has weaker guarantees in that we properly learn but don't do parameter estimation. In comparison to~\cite{FeldmanOS06}, our algorithm requires no bounds on the parameters of the constituent Gaussians and exhibits no pseudo-polynomial dependence of the sample and time complexity on such bounds. On the other hand, we learn with respect to the total variation distance rather than the KL divergence. Finally, compared to~\cite{ChanDSS13a,ChanDSS13b}, we properly learn while they non-properly learn and we both have near-optimal sample complexity.

Recently and independently, Acharya et al. \cite{AJOS14a} have also provided algorithms for properly learning spherical Gaussian mixtures.
Their primary focus is on the high dimensional case, aiming at a near-linear sample dependence on the dimension. Our focus is instead on optimizing the dependence of the sample and time complexity on $\ve$ in the one-dimensional case. 

In fact, \cite{AJOS14a}  also study mixtures of $k$ Gaussians in one dimension, providing a proper learning algorithm  with near-optimal sample complexity of $\tilde O\left(k/\ve^2\right)$ and running time $\tilde O_k\left(1/\ve^{3k+1}\right)$. Specializing to two single-dimensional Gaussians ($k=2$), their algorithm has near-optimal sample complexity, like ours, but is slower by a factor of $O(1/\ve^2)$ than ours.
We also remark that, through a combination of techniques from our paper and theirs, a proper learning algorithm for mixtures of $k$ Gaussians can be obtained, with near-optimal sample complexity of $\tilde O\left(k/\ve^2\right)$ and running time $\tilde O_k\left(1/\ve^{3k-1}\right)$, improving by a factor of $O(1/\ve^2)$ the running time of their $k$-Gaussian algorithm. Roughly, this algorithm creates candidate distributions in which the parameters of the first $k-1$ components are generated using methods from \cite{AJOS14a}, and the parameters of the final component are determined using our robust statistical techniques, in which we ``subtract out'' the first $k-1$ components and robustly estimate the mean and variance of the remainder.

\section{Preliminaries}
\label{sec:prelims}
Let $\mathcal{N}(\m,\s^2)$ represent the univariate normal distribution, with mean $\m \in \mathbb{R}$ and variance $\s^2 \in \mathbb{R}$, with density function 
$$\mathcal{N}(\m,\s^2,x) = \frac{1}{\s\sqrt{2\p}}e^{-\frac{(x-\m)^2}{2\s^2}}.$$

The univariate half-normal distribution with parameter $\s^2$ is the distribution of $|Y|$ where $Y$ is distributed according to $\mathcal{N}(0,\s^2)$. The CDF of the half-normal distribution is
$$F(\s, x) = \erf\left(\frac{x}{\s\sqrt{2}}\right),$$
where $\erf(x)$ is the error function, defined as
$$\erf(x) = \frac{2}{\sqrt{\p}}\int_0^x e^{-t^2}\, \mathrm{d} t.$$
We also make use of the complement of the error function, $\erfc(x)$, defined as $\erfc(x) = 1 - \erf(x)$.

A Gaussian mixture model (GMM) of distributions $\mathcal{N}_1(\m_1,\s_1^2), \dots, \mathcal{N}_n(\m_n,\s_n^2)$ has PDF $$f(x) = \sum_{i = 1}^n w_i \mathcal{N}(\m_i,\s_i^2,x),$$
where $\sum_i w_i = 1$.
These $w_i$ are referred to as the mixing weights.
Drawing a sample from a GMM can be visualized as the following process: select a single Gaussian, where the probability of selecting a Gaussian is equal to its mixing weight, and draw a sample from that Gaussian.
In this paper, we consider mixtures of two Gaussians, so $w_2 = 1 - w_1$.
We will interchangeably use $w$ and $1-w$ in place of $w_1$ and $w_2$.

The total variation distance between two probability measures $P$ and $Q$ on a $\s$-algebra $F$ is defined by
$$\dtv(P,Q) = \sup_{A \in F} |P(A) - Q(A)| = \frac12\|P - Q\|_1.$$

For simplicity in the exposition of our algorithm, we make the standard assumption (see, e.g.,~\cite{FeldmanOS06,KalaiMV10}) of infinite precision real arithmetic. In particular, the samples we draw from a mixture of Gaussians are real numbers, and we can do exact computations on real numbers, e.g., we can exactly evaluate the PDF of a Gaussian distribution on a real number.

\subsection{Bounds on Total Variation Distance for GMMs}
We recall a result from \cite{DaskalakisDOST13}:
\begin{proposition}[Proposition B.4 of \cite{DaskalakisDOST13}]
  \label{prop:ddodtv}
  Let $\m_1, \m_2 \in \mathbb{R}$ and $0 \leq \s_1 \leq \s_2$.
  Then $$\dtv(\mathcal{N}(\m_1,\s_1^2),\mathcal{N}(\m_2, \s_2^2)) \leq \frac{1}{2}\left(\frac{|\m_1 - \m_2|}{\s_1} + \frac{\s_2^2 - \s_1^2}{\s_1^2}\right).$$
\end{proposition}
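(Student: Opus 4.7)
The plan is to decompose the distance via the triangle inequality, interposing the Gaussian $\mathcal{N}(\m_2,\s_1^2)$ (the mean of the second distribution with the variance of the first):
\[
\dtv(\mathcal{N}(\m_1,\s_1^2),\mathcal{N}(\m_2,\s_2^2)) \leq \dtv(\mathcal{N}(\m_1,\s_1^2),\mathcal{N}(\m_2,\s_1^2)) + \dtv(\mathcal{N}(\m_2,\s_1^2),\mathcal{N}(\m_2,\s_2^2)).
\]
The goal then is to prove $\dtv(\mathcal{N}(\m_1,\s_1^2),\mathcal{N}(\m_2,\s_1^2))\leq |\m_1-\m_2|/(2\s_1)$ for the mean-shift piece and $\dtv(\mathcal{N}(\m_2,\s_1^2),\mathcal{N}(\m_2,\s_2^2))\leq (\s_2^2-\s_1^2)/(2\s_1^2)$ for the variance-inflation piece; adding these yields the claim.

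For the mean-shift piece I would appeal to Pinsker's inequality combined with the closed form $\mathrm{KL}(\mathcal{N}(\m_1,\s^2)\,\|\,\mathcal{N}(\m_2,\s^2)) = (\m_1-\m_2)^2/(2\s^2)$; taking $\s=\s_1$ immediately gives $\dtv\leq\sqrt{\mathrm{KL}/2}=|\m_1-\m_2|/(2\s_1)$.

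The main obstacle is the variance-inflation bound, since naively applying Pinsker here only yields a square-root dependence on $(\s_2^2-\s_1^2)/\s_1^2$, which is too weak for small perturbations. My plan is a pathwise differentiation argument. Writing the density with variance parameter $s$ as $p(s,x)=(2\pi s)^{-1/2}\exp\!\left(-(x-\m_2)^2/(2s)\right)$, one computes $\partial_s p(s,x)=p(s,x)\cdot((x-\m_2)^2-s)/(2s^2)$. The fundamental theorem of calculus plus Fubini then give
\[
2\,\dtv(\mathcal{N}(\m_2,\s_1^2),\mathcal{N}(\m_2,\s_2^2)) \leq \int_{\s_1^2}^{\s_2^2}\!\int_{\mathbb{R}} |\partial_s p(s,x)|\,dx\,ds = \int_{\s_1^2}^{\s_2^2}\frac{1}{2s^2}\,\Exp_{X\sim\mathcal{N}(\m_2,s)}\!\left[|(X-\m_2)^2-s|\right]ds.
\]
The inner expectation equals $s\cdot\Exp[|Z^2-1|]$ for $Z\sim\mathcal{N}(0,1)$, and a short integration by parts using $(xe^{-x^2/2})'=(1-x^2)e^{-x^2/2}$ evaluates it to $4/\sqrt{2\pi e}$. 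Substituting and integrating $1/s$ gives $\dtv\leq\log(\s_2^2/\s_1^2)/\sqrt{2\pi e}$; the elementary inequality $\log(1+y)\leq y$ combined with $1/\sqrt{2\pi e}<1/2$ then upgrades this to $(\s_2^2-\s_1^2)/(2\s_1^2)$, completing the proof.
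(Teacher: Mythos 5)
The paper itself provides no proof of this proposition; it is quoted verbatim from Proposition~B.4 of \cite{DaskalakisDOST13}, so there is no in-paper argument to compare against. Your blind proof is correct and self-contained, and I verified each step: the triangle-inequality decomposition through $\mathcal{N}(\m_2,\s_1^2)$ is valid; the Pinsker step for the mean shift uses $\mathrm{KL}(\mathcal{N}(\m_1,\s^2)\,\|\,\mathcal{N}(\m_2,\s^2))=(\m_1-\m_2)^2/(2\s^2)$ correctly to give $|\m_1-\m_2|/(2\s_1)$; and for the variance-inflation piece, the identity $\partial_s p(s,x)=p(s,x)\cdot\big((x-\m_2)^2-s\big)/(2s^2)$, the Fubini/FTC bound $2\,\dtv\le\int_{\s_1^2}^{\s_2^2}\|\partial_s p(s,\cdot)\|_1\,ds$, the rescaling to $s\cdot\Exp[|Z^2-1|]$, the evaluation $\Exp[|Z^2-1|]=4/\sqrt{2\pi e}$ via $(xe^{-x^2/2})'=(1-x^2)e^{-x^2/2}$, and the final chain $\dtv\le\log(\s_2^2/\s_1^2)/\sqrt{2\pi e}\le(\s_2^2-\s_1^2)/(2\s_1^2)$ are all correct (indeed the constant $1/\sqrt{2\pi e}\approx 0.242$ makes your bound strictly sharper than the one stated).

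The one thing worth remarking on, relative to standard treatments, is that both of your component bounds are somewhat more machinery-heavy than necessary: the mean-shift distance is exactly $2\Phi\!\left(\tfrac{|\m_1-\m_2|}{2\s_1}\right)-1\le\tfrac{|\m_1-\m_2|}{\s_1\sqrt{2\pi}}$ by bounding the standard normal density at its mode, avoiding Pinsker and yielding a better constant; and the variance-inflation bound could also be obtained via Pinsker on $\mathrm{KL}(\mathcal{N}(0,\s_1^2)\,\|\,\mathcal{N}(0,\s_2^2))$ after the observation that the claim is trivial whenever $(\s_2^2-\s_1^2)/\s_1^2\ge 2$ (since $\dtv\le 1$), which removes the square-root-regime objection that motivated your pathwise-derivative argument. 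That said, the pathwise argument you give is elegant, does not require a case split, and produces a logarithmic bound that degrades more gracefully for large variance ratios, so it has independent value. The only edge case you leave implicit is $\s_1=0$, where the right-hand side is $+\infty$ and the inequality holds vacuously; it would be worth a one-line remark that the proof assumes $\s_1>0$.
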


The following proposition, whose proof is deferred to the appendix, provides a bound on the total variation distance between two GMMs in terms of the distance between the constituent Gaussians. 
\begin{proposition}
  \label{prop:mixtv}
  Suppose we have two GMMs $X$ and $Y$, with PDFs $w\mathcal{N}_1 + (1-w)\mathcal{N}_2$ and $\hat w\mathcal{\hat N}_1 + (1 - \hat w)\mathcal{\hat N}_2$ respectively.
Then $\dtv(X,Y) \leq |w - \hat w| + w\dtv(\mathcal{N}_1, \mathcal{\hat N}_1) + (1-w)\dtv(\mathcal{N}_2, \mathcal{\hat N}_2)$.
\end{proposition}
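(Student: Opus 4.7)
The plan is to prove Proposition~\ref{prop:mixtv} by a direct manipulation of the $L^1$ distance between the two mixture densities, exploiting $\dtv(P,Q) = \tfrac{1}{2}\|P-Q\|_1$. The key idea is to introduce two ``hybrid'' terms into the pointwise difference of the densities so that the contributions of (i) discrepancies in the weights and (ii) discrepancies in the components can be separated via the triangle inequality.

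Concretely, I would write the pointwise difference of the densities as
\[
w\mathcal{N}_1 + (1-w)\mathcal{N}_2 - \hat w\mathcal{\hat N}_1 - (1-\hat w)\mathcal{\hat N}_2,
\]
then add and subtract $w\mathcal{\hat N}_1$ and $(1-w)\mathcal{\hat N}_2$ in order to regroup it as
\[
w(\mathcal{N}_1-\mathcal{\hat N}_1) + (1-w)(\mathcal{N}_2-\mathcal{\hat N}_2) + (w-\hat w)(\mathcal{\hat N}_1-\mathcal{\hat N}_2).
\]
Taking absolute values pointwise, applying the triangle inequality, and integrating yields
\[
\|X-Y\|_1 \leq w\|\mathcal{N}_1-\mathcal{\hat N}_1\|_1 + (1-w)\|\mathcal{N}_2-\mathcal{\hat N}_2\|_1 + |w-\hat w|\cdot\|\mathcal{\hat N}_1-\mathcal{\hat N}_2\|_1.
\]

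To finish, I divide by $2$, convert each $L^1$ distance into the corresponding total variation distance, and use the trivial bound $\|\mathcal{\hat N}_1-\mathcal{\hat N}_2\|_1 \leq 2$ (since both are densities integrating to one, so each contributes at most $1$ to the $L^1$ norm of their difference). This gives $\tfrac{1}{2}|w-\hat w|\cdot\|\mathcal{\hat N}_1-\mathcal{\hat N}_2\|_1 \leq |w-\hat w|$, yielding the claimed inequality
\[
\dtv(X,Y) \leq |w-\hat w| + w\,\dtv(\mathcal{N}_1,\mathcal{\hat N}_1) + (1-w)\,\dtv(\mathcal{N}_2,\mathcal{\hat N}_2).
\]

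There is no real obstacle here: the only subtle choice is which hybrid terms to add and subtract, and the only slight ``loss'' is the bound $\|\mathcal{\hat N}_1-\mathcal{\hat N}_2\|_1 \leq 2$, which is what produces the asymmetric coefficient of $1$ on $|w-\hat w|$ (as opposed to a weighted version). The argument is a standard coupling-style decomposition for mixture distributions and goes through with no assumptions specific to Gaussians, so the statement in fact holds for arbitrary two-component mixtures; the Gaussian structure is only relevant when the bound is later combined with Proposition~\ref{prop:ddodtv}.
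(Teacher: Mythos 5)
Your proof is correct and follows essentially the same approach as the paper: both arguments insert hybrid terms (replacing $\hat w$ by $w$ in front of $\hat{\mathcal{N}}_i$) and apply the triangle inequality on the $L^1$ difference. The only cosmetic difference is the order of operations—the paper first splits $\|X-Y\|_1$ into two component-wise terms and hybridizes each, arriving at $\tfrac12|w-\hat w| + \tfrac12|w-\hat w|$, while you hybridize in one shot and bound the residual term $|w-\hat w|\cdot \dtv(\hat{\mathcal{N}}_1,\hat{\mathcal{N}}_2)$ by $|w-\hat w|$; your intermediate bound is slightly sharper, but both relax to the identical final statement.
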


Combining these propositions, we obtain the following lemma:
\begin{lemma}
  \label{lem:paramest}
  Let $X$ and $Y$ by two GMMs with PDFs $w_1\mathcal{N}_1 + w_2\mathcal{N}_2$ and $\hat w_1\mathcal{\hat N}_1 + \hat w_2\mathcal{\hat N}_2$ respectively, where $|w_i - \hat w_i| \leq O(\ve)$, $|\m_i - \hat \m_i| \leq O(\frac{\ve}{w_i})\s_i \leq O(\ve) \s_i$, $|\s_i - \hat \s_i| \leq O(\frac{\ve}{w_i}) \s_i \leq O(\ve)\s_i$, for all $i$ such that $w_i \geq \frac{\ve}{25}$.
  Then $\dtv(X,Y) \leq \ve$.
\end{lemma}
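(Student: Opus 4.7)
The plan is to combine the two stated propositions directly. Applying Proposition~\ref{prop:mixtv} to $X$ and $Y$ immediately gives
\[
\dtv(X,Y) \le |w_1 - \hat w_1| + w_1\,\dtv(\mathcal{N}_1,\hat{\mathcal{N}}_1) + w_2\,\dtv(\mathcal{N}_2,\hat{\mathcal{N}}_2),
\]
and the first term is already $O(\ve)$ by hypothesis, so it suffices to show $w_i\,\dtv(\mathcal{N}_i,\hat{\mathcal{N}}_i) = O(\ve)$ for $i=1,2$. I would then split into two cases according to whether $w_i$ is below or above the threshold $\ve/25$ singled out in the statement.

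If $w_i < \ve/25$, I would use only the trivial bound $\dtv(\mathcal{N}_i,\hat{\mathcal{N}}_i)\le 1$, which immediately yields $w_i\,\dtv(\mathcal{N}_i,\hat{\mathcal{N}}_i) < \ve/25$; the hypothesis deliberately places no restriction on $\hat\m_i,\hat\s_i$ in this regime, because the component simply cannot contribute more than $\ve/25$. If instead $w_i \ge \ve/25$, I would invoke Proposition~\ref{prop:ddodtv}, swapping the two Gaussians if necessary so that the smaller standard deviation sits in the denominator (permissible because $\dtv$ is symmetric). Writing $\s_{\min}=\min(\s_i,\hat\s_i)$ and $\s_{\max}=\max(\s_i,\hat\s_i)$, this yields
\[
\dtv(\mathcal{N}_i,\hat{\mathcal{N}}_i) \le \frac{1}{2}\left(\frac{|\m_i-\hat\m_i|}{\s_{\min}} + \frac{\s_{\max}^2-\s_{\min}^2}{\s_{\min}^2}\right).
\]
Since $\ve/w_i\le 25$ whenever $w_i\ge\ve/25$, the hypothesis $|\s_i-\hat\s_i|\le O(\ve/w_i)\s_i$ forces $\s_{\min}\ge \tfrac{1}{2}\s_i$ for a small enough absolute constant inside the $O(\cdot)$, and then $|\m_i-\hat\m_i|/\s_{\min}\le O(\ve/w_i)$; factoring $\s_{\max}^2-\s_{\min}^2=(\s_{\max}-\s_{\min})(\s_{\max}+\s_{\min})$ with $\s_{\max}+\s_{\min}=O(\s_i)$ similarly gives $(\s_{\max}^2-\s_{\min}^2)/\s_{\min}^2\le O(\ve/w_i)$. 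Multiplying by $w_i$ absorbs the $1/w_i$, leaving $w_i\,\dtv(\mathcal{N}_i,\hat{\mathcal{N}}_i)\le O(\ve)$.

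Summing the three $O(\ve)$ contributions yields $\dtv(X,Y)\le O(\ve)$, and choosing the absolute constants hidden in the hypotheses sufficiently small tightens this to the required $\dtv(X,Y)\le \ve$. I do not expect any real obstacle beyond constant-chasing: the $1/w_i$ factors in the hypothesis bounds on $|\m_i-\hat\m_i|$ and $|\s_i-\hat\s_i|$ are tuned precisely to cancel the $w_i$ weight produced by Proposition~\ref{prop:mixtv}, while the threshold $\ve/25$ is chosen exactly so that on one side of it Proposition~\ref{prop:ddodtv} applies with a bounded ratio $\ve/w_i$, and on the other side the trivial TV bound is already strong enough.
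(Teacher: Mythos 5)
Your proof is correct and follows exactly the route the paper intends: the paper states Lemma~\ref{lem:paramest} immediately after Propositions~\ref{prop:ddodtv} and~\ref{prop:mixtv} with the phrase ``Combining these propositions, we obtain the following lemma'' and gives no further proof, so the intended argument is precisely the decomposition via Proposition~\ref{prop:mixtv}, the trivial bound $\dtv \le 1$ when $w_i < \ve/25$, and Proposition~\ref{prop:ddodtv} (with the variances ordered) when $w_i \ge \ve/25$, all of which you carry out.
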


\subsection{Kolmogorov Distance}
In addition to total variation distance, we will also use the Kolmogorov distance metric.
\begin{definition}
  The \emph{Kolmogorov distance} between two probability measures with CDFs $F_X$ and $F_Y$ is
  $\dk(F_X,F_Y) = \sup_{x \in \mathbb{R}} |F_X(x) - F_Y(x)|$.
\end{definition}
We will also use this metric to compare general functions, which may not necessarily be valid CDFs.

We have the following fact, stating that total variation distance upper bounds Kolmogorov distance \cite{probmetricsreview}.
\begin{fact}
  \label{fact:dkdtv}
  $\dk(F_X,F_Y) \leq \dtv(f_X,f_Y)$
\end{fact}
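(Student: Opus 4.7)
The claim is essentially an immediate consequence of the definitions, so the plan is short. The strategy is to observe that Kolmogorov distance takes the supremum of $|F_X(x)-F_Y(x)|$ over a much smaller family of events than total variation distance does, namely only over half-infinite intervals $(-\infty,x]$, whereas total variation distance takes the supremum over \emph{all} measurable events in the underlying $\sigma$-algebra.

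Concretely, I would first fix an arbitrary real number $x$ and rewrite the integrands in terms of the associated probability measures $P_X$ and $P_Y$: by definition of a CDF, $F_X(x) = P_X((-\infty,x])$ and $F_Y(x) = P_Y((-\infty,x])$, so
\[
|F_X(x) - F_Y(x)| \;=\; \bigl|P_X((-\infty,x]) - P_Y((-\infty,x])\bigr|.
\]
Since $(-\infty,x]$ is a Borel set, it is one particular choice of $A$ in the supremum defining $\dtv$, so the right-hand side is bounded above by $\sup_{A} |P_X(A)-P_Y(A)| = \dtv(f_X, f_Y)$. This bound is uniform in $x$, so taking the supremum over $x \in \mathbb{R}$ on the left yields $\dk(F_X,F_Y) \le \dtv(f_X,f_Y)$, as claimed.

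There is no real obstacle: the only subtle point is making sure the two distances are compared with respect to the same probability measures (the CDFs $F_X,F_Y$ on the left correspond to the densities/measures $f_X,f_Y$ on the right), and that the half-infinite intervals used to define Kolmogorov distance are legitimate events for the sup defining $\dtv$. Both hold by definition, so the proof is a one-line specialization of the supremum and no further calculation is needed.
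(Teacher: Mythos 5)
Your proof is correct and is the standard argument for this inequality. Note, however, that the paper does not actually prove this fact; it cites it directly from \cite{probmetricsreview} as a known relationship between probability metrics, so there is no in-paper proof to compare against. Your observation---that $(-\infty,x]$ is one particular measurable set, so the supremum defining $\dk$ ranges over a strict subfamily of the events appearing in the supremum defining $\dtv$---is exactly the right reasoning, and the uniformity-in-$x$ step you flag is handled correctly by taking the supremum over $x$ only after establishing the pointwise bound.
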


Fortunately, it is fairly easy to learn with respect to the Kolmogorov distance, due to the Dvoretzky-Kiefer-Wolfowitz (DKW) inequality \cite{dvoretzky1956}.
\begin{theorem}{(\cite{dvoretzky1956},\cite{massart1990})}
  \label{thm:dkw}
  Suppose we have $n$ IID samples $X_1, \dots X_n$ from a probability distribution with CDF $F$.
  Let $F_n(x) = \frac{1}{n}\sum_{i=1}^n \mathbf{1}_{\{X_i \leq x\}}$ be the empirical CDF.
  Then $\Pr[\dk(F,F_n) \geq \ve] \leq 2e^{-2n\ve^2}$.
  In particular, if $n = \Omega((1/\ve^2) \cdot \log(1/\d))$, then $\Pr[\dk(F,F_n) \geq \ve] \leq \d$.
\end{theorem}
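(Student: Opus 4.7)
The plan is to prove the DKW inequality in three steps: a reduction to uniform samples, a pointwise concentration bound, and an extension to the supremum over $x$.

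First, I would invoke the probability integral transform: setting $U_i = F(X_i)$, the $U_i$ are uniform on $[0,1]$ (modulo a standard randomization at atoms of $F$), and if $G_n$ denotes the empirical CDF of $U_1,\ldots,U_n$, then $\sup_x |F_n(x) - F(x)| = \sup_{t \in [0,1]} |G_n(t) - t|$. So it suffices to prove the bound when $F$ is uniform on $[0,1]$. Second, for a fixed $t \in [0,1]$, the quantity $nG_n(t)$ is a sum of $n$ i.i.d.\ Bernoulli$(t)$ random variables, and Hoeffding's inequality gives $\Pr[|G_n(t) - t| \geq \ve] \leq 2 e^{-2n\ve^2}$, which is the desired rate at a single point with the sharp constant $2$.

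Third, I would upgrade the pointwise bound to a uniform-in-$t$ bound by discretization. Choose a grid $0 = t_0 < t_1 < \cdots < t_k = 1$ with $k = \lceil 1/\ve \rceil$ and mesh at most $\ve$. Since both $G_n$ and the identity are monotone nondecreasing, for any $t \in [t_j, t_{j+1}]$ one has $|G_n(t) - t| \leq \max\{|G_n(t_j) - t_j|, |G_n(t_{j+1}) - t_{j+1}|\} + \ve$. Applying the Hoeffding bound at each of the $k+1$ grid points and union-bounding yields $\Pr[\dk(F, F_n) \geq 2\ve] \leq (4/\ve + 4) e^{-2n\ve^2}$. After rescaling $\ve$, this is of the form $\operatorname{poly}(1/\ve) \cdot e^{-\Omega(n\ve^2)}$, which is more than enough to derive the ``$n = \Omega((1/\ve^2)\log(1/\delta))$'' consequence stated in the theorem.

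The main obstacle, and the reason the theorem is attributed both to \cite{dvoretzky1956} and \cite{massart1990}, is shaving the polynomial $1/\ve$ prefactor and the loss in the constant to obtain the tight bound $2 e^{-2n\ve^2}$ with no prefactor. This is Massart's refinement and cannot be obtained by a naive union bound: it requires a more delicate analysis of the empirical process, for instance via sharp martingale inequalities on the Doob martingale of $\sup_t (G_n(t) - t)$, or via a reduction to the supremum of a Brownian bridge combined with exponential tilting. Fortunately, for the downstream application to learning GMMs in this paper only the qualitative $\tilde O(1/\ve^2)$ sample bound is needed, so the cruder union-bound proof above already suffices and the Massart refinement is not strictly necessary.
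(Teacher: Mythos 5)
The paper does not prove this theorem at all: it is stated as a classical citation to Dvoretzky--Kiefer--Wolfowitz (1956) and Massart (1990), so there is no internal proof to compare your proposal against. Your sketch is a reasonable and essentially correct account of how one proves a \emph{weaker} form of the inequality from scratch: the reduction to the uniform case via the probability integral transform, the pointwise Hoeffding bound at rate $2e^{-2n\ve^2}$, and the monotonicity-based discretization with a union bound over $O(1/\ve)$ grid points are all sound, and do yield a bound of the shape $\mathrm{poly}(1/\ve)\cdot e^{-\Omega(n\ve^2)}$.

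Two things to flag. First, as you yourself note, what you prove is not the stated theorem: the clean $2e^{-2n\ve^2}$ with no prefactor is exactly the content of Massart's refinement of the original DKW constant, and no amount of tightening the union-bound argument will remove the $1/\ve$ factor; it genuinely requires a different technique (e.g.\ a careful analysis of the first passage of the empirical process, or reflection-type arguments). So your write-up should be framed as proving a qualitatively sufficient substitute for Theorem~\ref{thm:dkw}, not the theorem itself. Second, and more importantly for the paper at hand, your closing observation is the key point: every use of Theorem~\ref{thm:dkw} here (Proposition~\ref{prop:kolcons} and the candidate-generation arguments) only needs $n = \Theta((1/\ve^2)\log(1/\d))$ samples to achieve Kolmogorov error $\ve$ with confidence $1-\d$, and your $(4/\ve+4)e^{-2n\ve^2}$ bound delivers exactly that after absorbing the polynomial prefactor into the logarithm. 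So the weaker elementary bound is a perfectly adequate drop-in replacement for the paper's purposes, and the sharp DKW constant is a convenience rather than a necessity.
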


\subsection{Representing and Manipulating CDFs}
We will need to be able to efficiently represent and query the CDF of probability distributions we construct.
This will be done using a data structure we denote the \emph{$n$-interval partition} representation of a distribution.
This allows us to represent a discrete random variable $X$ over a support of size $ \leq n$.
Construction takes $\tilde O(n)$ time, and at the cost of $O(\log n)$ time per operation, we can compute $F_X^{-1}(x)$ for $x \in [0,1]$.
Full details will be provided in Appendix \ref{app:intervalpartition}.

Using this construction and Theorem $\ref{thm:dkw}$, we can derive the following proposition:
\begin{proposition}
  \label{prop:kolcons}
  Suppose we have $n = \Theta(\frac{1}{\ve^2}\cdot \log{\frac{1}{\d}})$ IID samples from a random variable $X$.
  In $\tilde O\left(\frac{1}{\ve^2} \cdot \log{\frac{1}{\d}}\right)$ time, we can construct a data structure which will
  allow us to convert independent samples from the uniform distribution over $[0,1]$ to independent samples from
  a random variable $\hat X$, such that $\dk\left(F_X,F_{\hat X}\right) \leq \ve$ with probability $1 - \d$.
\end{proposition}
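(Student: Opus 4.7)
My plan is to use the empirical CDF as $F_{\hat X}$ and apply the DKW inequality, leveraging the $n$-interval partition representation to efficiently support inverse transform sampling.

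First, I draw $n = \Theta\!\left(\frac{1}{\ve^2}\log\frac{1}{\d}\right)$ IID samples $X_1,\ldots,X_n$ from $X$ and construct the empirical CDF $F_n(x) = \frac{1}{n}\sum_{i=1}^n \mathbf{1}_{\{X_i \le x\}}$. By Theorem~\ref{thm:dkw}, with probability at least $1-\d$ we have $\dk(F_X, F_n) \le \ve$, provided the constant in the choice of $n$ is large enough. So if we can generate samples whose CDF is exactly $F_n$, we are done.

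Next, I represent $F_n$ using the $n$-interval partition data structure described in Appendix~\ref{app:intervalpartition}. Since $F_n$ is the CDF of a discrete random variable supported on at most $n$ points (the samples), this representation applies directly. The construction cost is $\tilde O(n) = \tilde O\!\left(\frac{1}{\ve^2}\log\frac{1}{\d}\right)$, which matches the claimed time bound; the main subtlety is sorting the $n$ samples (standard $O(n \log n)$) and tabulating the jumps of $F_n$ at each sample in order to support inverse-CDF queries.

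To turn a uniform sample $u \in [0,1]$ into a sample of $\hat X$, I invoke the data structure to return $\hat X = F_n^{-1}(u)$ in $O(\log n)$ time; by the standard inverse-transform argument, $\hat X$ has CDF exactly $F_n$, so $F_{\hat X} = F_n$ and the earlier DKW bound gives $\dk(F_X, F_{\hat X}) \le \ve$ with probability $1-\d$. Independent uniform inputs produce independent samples from $\hat X$, as required.

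The only real obstacle is making sure the $n$-interval partition of Appendix~\ref{app:intervalpartition} interfaces cleanly with the empirical CDF—specifically, that it supports the particular form of $F_n^{-1}$ on a discrete measure (breaking ties so that $F_{\hat X}$ is genuinely $F_n$) and that construction truly runs in $\tilde O(n)$ time; but both follow from the description of the data structure, so the bulk of the proof is just combining the DKW inequality with the stated construction and query times.
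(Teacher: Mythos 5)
Your proof is correct and follows essentially the same route the paper intends: apply the DKW inequality (Theorem~\ref{thm:dkw}) to get an empirical CDF within Kolmogorov distance $\ve$, then represent that empirical CDF via the $n$-interval partition data structure of Appendix~\ref{app:intervalpartition} so that inverse-transform sampling can be done in $O(\log n)$ time per query. This is exactly the combination the paper alludes to when it says ``Using this construction and Theorem~\ref{thm:dkw}, we can derive the following proposition.''
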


Over the course of our algorithm, it will be natural to subtract out a component of a distribution.

\begin{lemma}
  \label{lem:mono}
  Suppose we have access to the $n$-interval partition representation of a CDF $F$, and that there exists a weight $w$ and CDFs $G$ and $H$ such that $\dk\left(H, \frac{F - wG}{1-w}\right) \leq \ve$.
  Given $w$ and $G$, we can compute the $n$-interval partition representation of a distribution $\hat H$ such that $\dk(H,\hat H) \leq \ve$ in $O(n)$ time.
\end{lemma}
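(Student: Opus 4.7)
The plan is to explicitly form $\tilde H := (F - wG)/(1-w)$, monotonize it by taking a running supremum, clip the result to $[0,1]$ to obtain $\hat H$, and read off the $n$-interval partition from the resulting step function. Because $F$ is given as an $n$-interval partition, it is piecewise constant with jumps at some points $x_1 < \cdots < x_n$ where it takes the values $c_1 < \cdots < c_n = 1$. On each slab $[x_i, x_{i+1})$ we have $\tilde H(x) = (c_i - wG(x))/(1-w)$, which is non-increasing in $x$ since $G$ is non-decreasing. Hence $M(x) := \sup_{y\le x}\tilde H(y) = \max_{j \le i} a_j$ on $[x_i,x_{i+1})$, where $a_j := (c_j - wG(x_j))/(1-w)$. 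Define
\[
\hat H(x) := \max\bigl\{0,\,\min\{1,\,M(x)\}\bigr\}.
\]
This is a step function with at most $n$ jumps, and a single left-to-right sweep over $x_1,\ldots,x_n$ that maintains the running maximum of the $a_j$ and emits a new interval of $\hat H$ whenever the clipped running max strictly increases yields its $n$-interval partition in $O(n)$ time.

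For the approximation guarantee, fix $x$. From $\tilde H(x) \le M(x)$ and the hypothesis $\dk(H,\tilde H) \le \ve$, I get $M(x) \ge \tilde H(x) \ge H(x) - \ve$. Conversely, monotonicity of $H$ gives
\[
M(x) \;=\; \sup_{y\le x}\tilde H(y) \;\le\; \sup_{y\le x}(H(y)+\ve) \;=\; H(x) + \ve.
\]
Hence $|M(x)-H(x)|\le \ve$ pointwise. Since $H(x)\in[0,1]$, projecting $M(x)$ onto $[0,1]$ can only bring it closer to $H(x)$, so $|\hat H(x)-H(x)|\le \ve$ for all $x$, i.e., $\dk(H,\hat H) \le \ve$.

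Next I verify that $\hat H$ is a bona fide CDF. Non-decreasingness is inherited from $M$; the range is $[0,1]$ by construction; the limits at $\pm\infty$ are $0$ and $1$ because $F$ and $G$ are CDFs and thus $\tilde H(-\infty) = 0$ and $\tilde H(+\infty) = 1$, which is preserved by clipping. Right-continuity of $\hat H$ follows from the right-continuity of $\tilde H$ (in the intended application $F$ is right-continuous and $G$ is the continuous Gaussian CDF), the observation that the only discontinuities of $\tilde H$ are upward jumps at the $x_i$, and the fact that the running supremum of such a function, followed by clipping, remains right-continuous.

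I expect the main subtlety to be the correctness of the monotonize-then-clip step rather than the implementation. The symmetric sandwich $H(x)-\ve \le M(x) \le H(x)+\ve$ relies crucially on the monotonicity of $H$ (without it, taking a running supremum could easily blow up the error), and one then has to observe that clipping is non-expansive with respect to targets that already lie inside the clipping interval. Once these two facts are in place, producing the step function representation is a routine linear sweep.
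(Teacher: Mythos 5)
Your proof is correct and follows essentially the same strategy as the paper: observe that on each slab between consecutive jump points of the step function $F$, the difference $F-wG$ is non-increasing, so the running supremum of $\tilde H$ is determined by its values at the $n$ jump points; compute that running supremum in one $O(n)$ pass; and establish the key sandwich $H(x)-\ve \le M(x) \le H(x)+\ve$ using monotonicity of $H$ on the upper side (this is exactly the content of the paper's Propositions~\ref{prop:monoapprox} and~\ref{prop:comppart}, which you prove inline). The one place you genuinely diverge is the final renormalization: you divide by $1-w$ up front and \emph{clip} the monotonized function to $[0,1]$, whereas the paper monotonizes $F-wG$ directly and then normalizes by dividing all endpoints by the rightmost endpoint $r^*$, asserting that $r^*=1-w$. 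Your variant is in fact a bit more careful: since the monotonized step function's last value is $1-wG(x_n)$ (with $G(x_n)<1$ for a Gaussian), it strictly exceeds $1-w$, so the paper's asserted identity $r^*=1-w$ is only an approximation; your non-expansive projection onto $[0,1]$ sidesteps this issue entirely and gives the stated bound $\dk(H,\hat H)\le\ve$ cleanly. Both versions run in $O(n)$ and prove the lemma.
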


A proof and full details are provided in Appendix \ref{app:intervalpartition}.

\subsection{Robust Statistics}
We use two well known robust statistics, the median and the interquartile range.
These are suited to our application for two purposes.
First, they are easy to compute with the $n$-interval partition representation of a distribution.
Each requires a constant number of queries of the CDF at particular values, and the cost of each query is $O(\log n)$.
Second, they are robust to small modifications with respect to most metrics on probability distributions.
In particular, we will demonstrate their robustness on Gaussians when considering distance with respect to the Kolmogorov metric.

\begin{lemma}
  \label{lem:med}
  Let $\hat F$ be a distribution such that $\dk(\mathcal{N}(\mu,\s^2),\hat F) \leq \ve$, where $\ve < \frac18$.
  Then $med(\hat F) \triangleq \hat F^{-1}(\frac{1}{2}) \in [\m - 2\sqrt{2}\ve\s, \m + 2\sqrt{2}\ve\s]$.
\end{lemma}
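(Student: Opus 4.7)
The plan is to sandwich $\hat F^{-1}(1/2)$ between $\m \pm 2\sqrt{2}\ve\s$ by arguing directly from the Kolmogorov bound: if I can show that the true Gaussian CDF $\Phi_{\m,\s^2}$ strictly exceeds $1/2 + \ve$ at $\m + 2\sqrt{2}\ve\s$ and is strictly less than $1/2 - \ve$ at $\m - 2\sqrt{2}\ve\s$, then the additive $\ve$-slack from $\dk(\mathcal{N}(\m,\s^2), \hat F) \le \ve$ forces $\hat F$ to cross $1/2$ inside this interval.

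For the analytic estimate, I would start from the identity $\Phi_{\m,\s^2}(\m + \delta) - \tfrac12 = \tfrac12 \erf(\delta/(\s\sqrt{2}))$, which follows from the change of variables $u = (t-\m)/(\s\sqrt{2})$ in the Gaussian integral. Setting $\delta = 2\sqrt{2}\,\ve\,\s$ collapses the right-hand side to $\tfrac12 \erf(2\ve)$, so the key inequality to establish is $\erf(2\ve) > 2\ve$ for $\ve < 1/8$. Using $e^{-t^2} \ge 1 - t^2$ (equivalently the first two terms of the alternating Taylor series), I get $\erf(x) \ge \tfrac{2}{\sqrt{\pi}}(x - x^3/3)$, which exceeds $x$ whenever $x^2 < 3(1 - \sqrt{\pi}/2) \approx 0.34$. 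Since $\ve < 1/8$ gives $2\ve < 1/4$, this is comfortably satisfied.

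With the Gaussian inequality in hand, the Kolmogorov bound yields $\hat F(\m + 2\sqrt{2}\ve\s) > 1/2$ and $\hat F(\m - 2\sqrt{2}\ve\s) < 1/2$. Interpreting the median via the standard generalized inverse $\hat F^{-1}(p) = \inf\{x : \hat F(x) \ge p\}$, the first inequality places $\m + 2\sqrt{2}\ve\s$ inside the set whose infimum defines $\hat F^{-1}(1/2)$, while the second shows $\m - 2\sqrt{2}\ve\s$ lies strictly below that set. Combining the two gives the claimed interval. The only real content is the erf inequality at $2\ve$, which is exactly where the threshold $\ve < 1/8$ is used; for larger $\ve$ one would need to inflate the constant $2\sqrt{2}$ accordingly, but this is not needed here.
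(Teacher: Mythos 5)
Your proposal is correct and is essentially the same argument as the paper's: both reduce to the elementary estimate that $\erf(2\ve) \ge 2\ve$ (equivalently $\erf^{-1}(2\ve) \le 2\ve$) for $\ve < 1/8$, which the paper obtains by expanding $\erf^{-1}$ around $0$ and you obtain by bounding $\erf$ from below. Your sandwiching at the endpoints is just the contrapositive of the paper's ``invert from the median'' step, and your use of the generalized inverse $\hat F^{-1}(p)=\inf\{x:\hat F(x)\ge p\}$ is a slightly cleaner way to avoid the paper's implicit assumption that $\hat F$ hits the value $1/2$ exactly.
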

\begin{lemma}
  \label{lem:iqr}
  Let $\hat F$ be a distribution such that $\dk(\mathcal{N}(\mu,\s^2),\hat F) \leq \ve$, where $\ve < \frac18$.
  Then $\frac{IQR(\hat F)}{2\sqrt{2}\erf^{-1}(\frac12)} \triangleq \frac{\hat F^{-1}(\frac{3}{4}) - \hat F^{-1}(\frac{1}{4})}{2\sqrt{2}\erf^{-1}(\frac12)} \in \left[\s - \frac{5}{2\erf^{-1}(\frac12)}\ve\s, \s + \frac{7}{{2\erf^{-1}(\frac12)}}\ve\s\right]$.
\end{lemma}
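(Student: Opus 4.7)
The plan is to first use the Kolmogorov bound to pin down the two quartiles of $\hat F$ in terms of the quartiles of $F=\mathcal{N}(\m,\s^2)$, and then translate the resulting shifts into shifts of the IQR via the error function. Since $|F(x)-\hat F(x)|\le\ve$ everywhere, the definition $\hat F^{-1}(p)=\inf\{x:\hat F(x)\ge p\}$ together with the strict monotonicity and continuity of the Gaussian CDF $F$ yields
\[
F^{-1}(p-\ve)\ \le\ \hat F^{-1}(p)\ \le\ F^{-1}(p+\ve)
\]
for every $p\in(\ve,1-\ve)$. Applying this at $p=3/4$ (for the upper side) and $p=1/4$ (for the lower side), and using $F^{-1}(q)=\m+\s\sqrt{2}\,\erf^{-1}(2q-1)$ together with the oddness of $\erf^{-1}$, I obtain
\[
IQR(\hat F)\ \in\ \bigl[\,2\s\sqrt{2}\,\erf^{-1}(\tfrac12-2\ve),\ 2\s\sqrt{2}\,\erf^{-1}(\tfrac12+2\ve)\,\bigr].
\]
Dividing through by $2\sqrt{2}\,\erf^{-1}(\tfrac12)$ reduces the lemma to the two scalar inequalities $\erf^{-1}(\tfrac12+2\ve)-\erf^{-1}(\tfrac12)\le \tfrac{7\ve}{2}$ and $\erf^{-1}(\tfrac12)-\erf^{-1}(\tfrac12-2\ve)\le\tfrac{5\ve}{2}$.

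Both inequalities will follow from the mean value theorem applied to $\erf^{-1}$, whose derivative satisfies $(\erf^{-1})'(y)=\tfrac{\sqrt{\pi}}{2}\,e^{(\erf^{-1}(y))^2}$. On $(0,1)$ this derivative is strictly increasing in $y$ (since $\erf^{-1}$ is positive and increasing there), so its supremum on $(\tfrac12,\tfrac12+2\ve)$ is attained at the right endpoint; using $\ve<\tfrac18$ I can relax that endpoint to $3/4$ and bound $(\erf^{-1})'$ on the whole interval by $(\erf^{-1})'(3/4)=\tfrac{\sqrt{\pi}}{2}e^{(\erf^{-1}(3/4))^2}$, which a direct numerical evaluation shows is strictly less than $7/4$, giving the $\tfrac{7\ve}{2}$ bound. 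For the lower side the same monotonicity argument bounds $(\erf^{-1})'$ on $(\tfrac12-2\ve,\tfrac12)$ by $(\erf^{-1})'(\tfrac12)=\tfrac{\sqrt{\pi}}{2}e^{(\erf^{-1}(\tfrac12))^2}$, which is strictly less than $5/4$, yielding the $\tfrac{5\ve}{2}$ bound.

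The one technical subtlety is the first step: since $\hat F$ need not be continuous or even come from a density, the passage from $\dk(F,\hat F)\le\ve$ to pointwise bounds on the generalized inverse quantiles requires a careful argument, but this is a standard consequence of $F$ being continuous and strictly increasing. The only other place where care is needed is the numerical verification of the Lipschitz constants $7/4$ and $5/4$: these are tight enough that the constraint $\ve<1/8$ is genuinely used to keep $\tfrac12\pm2\ve$ safely inside $(0,1)$, since $(\erf^{-1})'(y)$ grows like $\exp((\erf^{-1}(y))^2)$ and blows up as $y\to 1$.
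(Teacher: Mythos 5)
Your proposal is correct and follows essentially the same route as the paper's proof: both pass from the Kolmogorov bound to bracketing $\hat F^{-1}(p)$ between $F^{-1}(p-\ve)$ and $F^{-1}(p+\ve)$, rewrite the Gaussian quantiles via $\erf^{-1}$ and its oddness, and then control the increments $\erf^{-1}(\tfrac12\pm 2\ve)-\erf^{-1}(\tfrac12)$ by a first-order Taylor/mean-value estimate with the derivative bounded at the right endpoint of the interval (which is where $\ve<\tfrac18$ is used). Your explicit derivative formula and numerical constants $7/4$, $5/4$ make concrete what the paper leaves implicit behind the phrase ``Taylor series and Taylor's Theorem,'' but the underlying argument is the same.
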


The proofs are deferred to Appendix \ref{app:med}.

\subsection{Outline of the Algorithm}
We can decompose our algorithm into two components: generating a collection of candidate distributions containing at least one candidate with low statistical distance to the unknown distribution (Theorem~\ref{thm:list of candidates}), and identifying such a candidate from this collection (Theorem~\ref{thm:tournament theorem}).

\smallskip {\bf Generation of Candidate Distributions:} In Section \ref{sec:candidates}, we deal with generation of candidate distributions.
A \emph{candidate distribution} is described by the parameter set $(\hat w, \hat \m_1, \hat \s_1, \hat \m_2, \hat \s_2)$, which corresponds to the GMM with PDF 
$f(x) = \hat w\mathcal{N}(\hat \m_1, \hat \s_1^2,x) + (1-\hat w)\mathcal{N}(\hat \m_2, \hat \s_2^2,x)$.
As suggested by Lemma \ref{lem:paramest}, if we have a candidate distribution with sufficently accurate parameters, it will have low statistical distance to the unknown distribution.
Our first goal will be to generate a collection of candidates that contains at least one such candidate.
Since the time complexity of our algorithm depends on the size of this collection, we wish to keep it to a minimum.

At a high level, we sequentially generate candidates for each parameter.
In particular, we start by generating candidates for the mixing weight.
While most of these will be inaccurate, we will guarantee to produce at least one appropriately accurate candidate $\hat w^*$.
Then, for each candidate mixing weight, we will generate candidates for the mean of one of the Gaussians.
We will guarantee that, out of the candidate means we generated for $\hat w^*$, it is likely that at least one candidate $\hat \m_1^*$ will be sufficiently close to the true mean for this component.
The candidate means that were generated for other mixing weights have no such guarantee.
We use a similar sequential approach to generate candidates for the variance of this component.
Once we have a description of the first component, we simulate the process of subtracting it from the mixture, thus giving us a single Gaussian, whose parameters we can learn.
We can not immediately identify which candidates have inaccurate parameters, and they serve only to inflate the size of our collection.

At a lower level, our algorithm starts by generating candidates for the mixing weight followed by generating candidates for the mean of the component with the smaller value of $\frac{\s_i}{w_i}$.
Note that we do not know which of the two Gaussians this is.
The solution is to branch our algorithm, where each branch assumes a correspondence to a different Gaussian.
One of the two branches is guaranteed to be correct, and it will only double the number of candidate distributions.
We observe that if we take $n$ samples from a single Gaussian, it is likely that there will exist a sample at distance $O(\frac{\s}{n})$ from its mean.
Thus, if we take $\Theta(\frac{1}{ w_i \ve})$ samples from the mixture, one of them will be sufficiently close to the mean of the corresponding Gaussian. Exploiting this observation we obtain candidates for the mixing weight and the first mean as summarized by Lemma~\ref{lem:means}.

Next, we generate candidates for the variance of this Gaussian.
Our specific approach is based on the observation that given $n$ samples from a single Gaussian, the minimum distance of a sample to the mean will likely be $\Theta(\frac{\s}{n})$.
In the mixture, this property will still hold for the Gaussian with the smaller $\frac{\s_i}{w_i}$, so we extract this statistic and use a grid around it to generate sufficiently accurate candidates for $\s_i$. This is Lemma~\ref{lem:dev1}.

At this point, we have a complete description of one of the component Gaussians.
Also, we can generate an empirical distribution of the mixture, which gives an adequate approximation to the true distribution.
Given these two pieces, we update the empirical distribution by removing probability mass contributed by the known component.
When done carefully, we end up with an approximate description of the distribution of the unknown component.
At this point, we extract the median and the interquartile range (IQR) of the resulting distribution.
These statistics are robust, so they can tolerate error in our approximation.
Finally, the median and IQR allow us to derive the last mean and variance of our distribution.
This is Lemma~\ref{lem:lastparams}.

Putting everything together, we obtain the following result whose proof is in Section~\ref{sec:proof of theorem list of candidates}.
\begin{theorem}
  \label{thm:list of candidates}
  For all $\ve, \d > 0$, given $\log(1/\d) \cdot  O(1/\ve^2)$ independent samples from an arbitrary mixture $F$ of two univariate Gaussians, we can generate a collection of $\log(1/\d) \cdot \tilde O(1/\ve^3)$ candidate mixtures of two univariate Gaussians, containing at least one candidate $F'$ such that $\dtv(F, F') < \ve$ with probability at least $1 - \d$.
\end{theorem}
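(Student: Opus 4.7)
The plan is to chain the three building blocks outlined above --- Lemma~\ref{lem:means}, Lemma~\ref{lem:dev1}, and Lemma~\ref{lem:lastparams} --- and then verify that their composed guarantees meet the hypotheses of Lemma~\ref{lem:paramest}. Fix a labelling in which $\sigma_1/w_1 \le \sigma_2/w_2$; since this labelling is unknown, every branch of the construction is run twice, merely doubling the final candidate count.

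First, I would call Lemma~\ref{lem:means} on $\log(1/\delta)\cdot O(1/\ve^2)$ samples from $F$ to obtain a list of candidate pairs $(\hat w,\hat\mu_1)$ that, except with probability $\delta/3$, contains some $(\hat w^\star,\hat\mu_1^\star)$ with $|\hat w^\star - w| = O(\ve)$ and $|\hat\mu_1^\star - \mu_1| = O((\ve/w_1)\sigma_1)$. For each such pair, Lemma~\ref{lem:dev1} then produces a short list of candidate standard deviations, at least one of which satisfies $|\hat\sigma_1 - \sigma_1| = O((\ve/w_1)\sigma_1)$, up to an additional $\delta/3$ failure probability. For each resulting triple $(\hat w,\hat\mu_1,\hat\sigma_1)$ I would invoke Lemma~\ref{lem:lastparams}: build an $n$-interval partition of the empirical CDF $\hat F$ via Proposition~\ref{prop:kolcons}, subtract the known component $\hat w\,\mathcal N(\hat\mu_1,\hat\sigma_1^2)$ via Lemma~\ref{lem:mono}, and read off $(\hat\mu_2,\hat\sigma_2)$ as the median and (rescaled) IQR of the residual, using Lemmas~\ref{lem:med} and~\ref{lem:iqr}. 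The same empirical CDF, drawn once, suffices for all triples because DKW is a uniform statement. A union bound over the three failure events gives overall success probability $1-\delta$, and the candidate counts multiply to the claimed $\log(1/\delta)\cdot \tilde O(1/\ve^3)$.

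The main subtlety is the interplay between the accuracy required of $(\hat\mu_2,\hat\sigma_2)$ and the blow-up caused by subtracting out the first component. Lemma~\ref{lem:paramest} asks for parameter errors on the order of $(\ve/w_2)\sigma_2$, which via Lemmas~\ref{lem:med} and~\ref{lem:iqr} translates into Kolmogorov accuracy $O(\ve/w_2)$ on the residual; since the subtraction of Lemma~\ref{lem:mono} multiplies any Kolmogorov error on $\hat F$ by $1/w_2$, approximating $\hat F$ itself to Kolmogorov accuracy $O(\ve)$ is exactly enough, which is why $\tilde O(1/\ve^2)$ samples suffice. A secondary subtlety is the genuinely degenerate regime $w_i < \ve/25$: Lemma~\ref{lem:paramest} imposes no requirement on $(\hat\mu_i,\hat\sigma_i)$ in that case, so it is enough for the $\hat w$-grid to contain values within $O(\ve)$ of $0$ and of $1$ and to output arbitrary component parameters in that branch; the residual TV error is then absorbed by the $|w-\hat w|$ term of Proposition~\ref{prop:mixtv}. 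Applying Lemma~\ref{lem:paramest} to the identified good candidate then yields $\dtv(F,F') \le \ve$ and completes the proof.
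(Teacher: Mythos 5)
Your chaining of Lemmas~\ref{lem:means}, \ref{lem:dev1}, and \ref{lem:lastparams} against Lemma~\ref{lem:paramest} is the correct skeleton and matches what the paper does. The subtlety you highlight about the $1/w_2$ blow-up in Kolmogorov error (via Lemma~\ref{lem:mono} and Proposition~\ref{prop:dksubtract}) is also exactly right. However, there are two gaps, one substantive and one presentational.

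The substantive gap is in the degenerate regime $\min(w,1-w) = O(\ve)$. You claim it suffices to include $\hat w$-grid points near $0$ and $1$ and ``output arbitrary component parameters in that branch,'' with the residual error absorbed by the $|w-\hat w|$ term of Proposition~\ref{prop:mixtv}. But Proposition~\ref{prop:mixtv} bounds $\dtv(X,Y)$ by $|w-\hat w| + w\,\dtv(\mathcal{N}_1,\hat{\mathcal{N}}_1) + (1-w)\,\dtv(\mathcal{N}_2,\hat{\mathcal{N}}_2)$: only the \emph{negligible-weight} component's parameters may be arbitrary; the dominant component still contributes $(1-w)\dtv(\mathcal{N}_2,\hat{\mathcal{N}}_2) \approx \dtv(\mathcal{N}_2,\hat{\mathcal{N}}_2)$, so you need a good estimate of it. Your pipeline cannot deliver this when $\min(w,1-w) = O(\ve)$: as the paper notes at the top of Section~\ref{sec:candidate generation}, Lemmas~\ref{lem:means} and \ref{lem:dev1} assume $\min(w,1-w)=\Omega(\ve)$ (e.g., Proposition~\ref{prop:west} requires $w_i\ge\ve$, and the choice of $\frac{40\sqrt{2}}{3\hat w_1\ve}$ samples relies on $\hat w_1$ not underestimating $w_1$ by more than a factor $2$), and Lemma~\ref{lem:lastparams}'s error bound $O(\ve/(1-w))\sigma_2$ becomes vacuous when $1-w=O(\ve)$. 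The paper therefore runs a genuinely separate branch in this regime: it sets $\hat w=1$ (or $0$), observes that $F$ is then $O(\ve)$-close in total variation (hence Kolmogorov) distance to a single Gaussian, and reads off $(\hat\mu_1,\hat\sigma_1)$ directly from the DKW empirical CDF using Lemmas~\ref{lem:med} and~\ref{lem:iqr} --- bypassing the subtraction machinery entirely. You need some such direct estimate; ``arbitrary parameters'' does not work.

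The presentational issue: Lemmas~\ref{lem:means} and \ref{lem:dev1} give only constant success probabilities ($99/100$ and $9/10$), and they do not take $\delta$ as input, so drawing $\log(1/\delta)\cdot O(1/\ve^2)$ samples does not by itself reduce either failure probability to $\delta/3$ as you assert. The correct amplification (and what the paper does) is to run the whole constant-success-probability pipeline $\Theta(\log(1/\delta))$ independent times and take the \emph{union} of the resulting candidate lists; this is what produces the stated $\log(1/\delta)$ factor in both the sample and candidate counts. You do arrive at the right final counts, so you may have this in mind, but the mechanism should be made explicit.
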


\smallskip {\bf Candidate Selection:} In view of Theorem~\ref{thm:list of candidates}, to prove our main result it suffices to select from among the candidate mixtures some mixture that is close to the unknown mixture. In Section~\ref{sec:tournament}, we describe a tournament-based algorithm (Theorem~\ref{thm:tournament theorem}) for identifying a candidate which has low statistical distance to the unknown mixture, concluding the proof of Theorem~\ref{thm:main theorem}. See our discussion in Section~\ref{sec:tournament} for the challenges arising in obtaining a tournament-based algorithm for continuous distributions whose crossings are difficult to compute exactly, as well as in speeding up the tournament's running time to almost linear in the number of candidates. 

\section{Generating Candidate Distributions}\label{sec:candidate generation}

By Proposition \ref{prop:mixtv}, if one of the Gaussians of a mixture has a negligible mixing weight, it has a negligible impact on the mixture's statistical distance to the unknown mixture.
Hence, the candidate means and variances of this Gaussian are irrelevant. This is fortunate, since if $\min{(w, 1-w)} << \ve$ and we only draw $O(1/\ve^2)$ samples from the unknown mixture, as we are planning to do, we have no hope of seeing a sufficient number of samples from the low-weight Gaussian to perform accurate statistical~tests~for~it. So for this section we will assume that $\min{(w, 1-w)} \ge \Omega(\ve)$ and we will deal with the other case separately.

\label{sec:candidates}
\subsection{Generating Mixing Weight Candidates}\label{sec:mixing weight candidates}
The first step is to generate candidates for the mixing weight.
We can obtain a collection of $O(\frac{1}{\ve})$ candidates containing some $\hat w^* \in [w - \ve, w + \ve]$ by simply taking the set $\{t\ve\mid t \in \left[\frac1{\ve}\right]\}$.

\subsection{Generating Mean Candidates}
The next step is to generate candidates for the mean corresponding to the Gaussian with the smaller value of $\frac{\s_i}{w_i}$.
Note that, a priori, we do not know whether $i = 1$ or $i = 2$.
We try both cases, first generating candidates assuming they correspond to $\m_1$, and then repeating with $\m_2$.
This will multiply our total number of candidate distributions by a factor of $2$.
Without loss of essential generality,  assume for this section that $i = 1$. 

We want a collection of candidates containing $\hat \m_1^*$ such that 
$\m_1 - \ve\s_1 \leq \hat \m_1^* \leq \m_1 + \ve\s_1$. 
The following propositions are straightforward and proved in Appendix \ref{app:nearmean}.
\begin{proposition}
  \label{prop:nearmean}
  Fix $i\in\{1,2\}$. Given $\frac{20\sqrt{2}}{3w_i\ve}$ samples from a GMM, there will exist a sample $\hat \m_i^* \in \m_i \pm \ve\s_i$ with probability $\geq \frac{99}{100}$.
\end{proposition}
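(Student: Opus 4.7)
\begin{proofsketch}
The plan is a direct probabilistic argument: lower-bound the probability that a single sample from the mixture lands in the interval $I_i := [\mu_i - \ve\sigma_i, \mu_i + \ve\sigma_i]$, and then amplify via independence over the $n := \frac{20\sqrt{2}}{3 w_i \ve}$ samples.

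First I would observe that a sample $X$ drawn from the mixture $F$ lies in $I_i$ with probability at least $w_i \cdot q$, where $q := \Pr_{Y \sim \mathcal{N}(\mu_i,\sigma_i^2)}[Y \in I_i] = \erf(\ve/\sqrt{2})$; this follows by conditioning on the component that generated $X$ and dropping the contribution of the other component. To give a clean lower bound on $q$, I would use the density-based estimate: the PDF of $\mathcal{N}(\mu_i,\sigma_i^2)$ on $I_i$ is minimized at the endpoints with value $\frac{1}{\sigma_i\sqrt{2\pi}} e^{-\ve^2/2}$, and so $q \ge \frac{2\ve}{\sqrt{2\pi}} e^{-\ve^2/2} = \sqrt{\tfrac{2}{\pi}}\,\ve\, e^{-\ve^2/2}$. (Equivalently, one could use the standard inequality $\erf(x) \ge \tfrac{2x}{\sqrt{\pi}} e^{-x^2}$ applied at $x = \ve/\sqrt{2}$.)

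Next, by independence of the $n$ samples, the probability that none of them lies in $I_i$ is at most $(1 - w_i q)^n \le e^{-n w_i q}$. Substituting the bound on $q$ and our value of $n$, the exponent becomes
\[
n w_i q \;\ge\; \frac{20\sqrt{2}}{3 w_i \ve} \cdot w_i \cdot \sqrt{\tfrac{2}{\pi}}\,\ve\, e^{-\ve^2/2} \;=\; \frac{40}{3\sqrt{\pi}} e^{-\ve^2/2}.
\]
For any $\ve \le 1$ (which we may assume, since otherwise the total variation statement of the target theorem is vacuous), we have $e^{-\ve^2/2} \ge e^{-1/2}$, giving $n w_i q \ge \frac{40}{3\sqrt{\pi}} e^{-1/2} > \ln 100$. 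Hence the failure probability is at most $e^{-\ln 100} = 1/100$, as desired.

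There is no real obstacle beyond verifying the constants: the choice of $n = \frac{20\sqrt{2}}{3 w_i \ve}$ was evidently tuned so that the product $n w_i q$ exceeds $\ln 100$ once $q$ is expressed in its linear-in-$\ve$ lower bound. The one sanity check I would record is that the low-weight component case $w_i = o(\ve)$ is excluded (as noted at the top of Section~\ref{sec:candidates}, where the authors assume $\min(w,1-w) \ge \Omega(\ve)$), so $n$ is $O(1/\ve^2)$ and consistent with the sample budget of Theorem~\ref{thm:list of candidates}.
\end{proofsketch}
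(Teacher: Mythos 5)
Your approach is conceptually identical to the paper's: lower-bound the probability that one sample lands in $\mu_i \pm \ve\sigma_i$ by $w_i\,\erf(\ve/\sqrt{2})$ via the half-normal CDF, then amplify over $n$ independent samples. The only place you diverge is the elementary lower bound on the error function, and there your constants do not quite close.

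Concretely, your density-based bound $q \ge \sqrt{2/\pi}\,\ve\, e^{-\ve^2/2}$ (equivalently $\erf(x)\ge \tfrac{2x}{\sqrt\pi}e^{-x^2}$) gives $n w_i q \ge \frac{40}{3\sqrt\pi} e^{-\ve^2/2}$, and you then assert $\frac{40}{3\sqrt\pi}e^{-1/2} > \ln 100$. This is false: $\frac{40}{3\sqrt\pi}e^{-1/2} \approx 4.563$, while $\ln 100 \approx 4.605$. So your argument, as written, establishes a failure probability of roughly $e^{-4.56}\approx 0.0104$, just above the advertised $1/100$, and the claim breaks down for $\ve$ near $1$ (it works only up to about $\ve \le 0.99$). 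The issue is that your bound on $\erf$ decays like $e^{-x^2}$ and is therefore weaker than necessary near the upper end of the relevant range $x=\ve/\sqrt2 \in [0, 1/\sqrt2]$; the constant $n = \tfrac{20\sqrt2}{3 w_i\ve}$ was evidently tuned for a linear lower bound on $\erf$ that does not decay. The paper's fix is to use $\erf(x) \ge \tfrac34 x$ for $x\in[0,1]$ (easily checked: $\erf - \tfrac34 x$ vanishes at $0$, is initially increasing, and is still $\approx 0.093 > 0$ at $x=1$), which cleanly yields $n w_i q \ge 5$, hence failure probability $\le e^{-5} < 1/100$. Substituting that bound into your argument makes it correct.
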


\begin{proposition}
  \label{prop:west}
  Fix $i\in\{1,2\}$.  Suppose $w_i - \ve \leq \hat w_i \leq w_i + \ve$, and $w_i \geq \ve$.
  Then $\frac{2}{\hat w_i} \geq \frac{1}{w_i}$.
\end{proposition}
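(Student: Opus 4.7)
The plan is to reduce the inequality $\frac{2}{\hat w_i} \geq \frac{1}{w_i}$ to a simple linear relation between $w_i$ and $\hat w_i$, and then read that relation off directly from the hypothesis. Specifically, assuming $\hat w_i > 0$, cross-multiplying gives the equivalent inequality $2 w_i \geq \hat w_i$, so it suffices to establish this.

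First I would verify positivity: from the upper side of the hypothesis, $\hat w_i \leq w_i + \ve \leq 2 w_i$, while from the lower side together with $w_i \geq \ve$ we have $\hat w_i \geq w_i - \ve \geq 0$. (In the degenerate case $\hat w_i = 0$, the claim holds vacuously under the natural convention $2/0 = +\infty$.)

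The main step is then exactly the chain $\hat w_i \leq w_i + \ve \leq w_i + w_i = 2 w_i$, where the first inequality is the assumed closeness bound and the second uses $\ve \leq w_i$. Rearranging and dividing by the positive quantity $w_i \hat w_i$ yields $\frac{2}{\hat w_i} \geq \frac{1}{w_i}$, as claimed.

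I do not anticipate any real obstacle here: the statement is purely algebraic, following from substituting the assumption $\ve \leq w_i$ into the additive estimate $|\hat w_i - w_i| \leq \ve$. Its only role in the broader argument is to let us pass from a $\Theta(1/(w_i \ve))$-sample bound based on the true weight $w_i$ (as needed in Proposition~\ref{prop:nearmean}) to one based on the guessed weight $\hat w_i$, losing only a factor of $2$ in the sample count.
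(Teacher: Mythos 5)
Your proof is correct and essentially identical to the paper's: both reduce the claim to $\hat w_i \leq w_i + \ve \leq 2w_i$, using the assumption $\ve \leq w_i$, and then invert. The only cosmetic difference is that you explicitly note positivity of $\hat w_i$, which the paper leaves implicit.
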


We use these facts to design a simple algorithm: 
for each candidate $\hat w_1$ (from Section~\ref{sec:mixing weight candidates}), 
take $\frac{40\sqrt{2}}{3\hat w_1\ve}$ samples from the mixture and use each of them as a candidate for $\m_1$.

We now examine how many candidate pairs $(\hat{w},\hat{\mu}_1)$ we generated.
Naively, since $\hat{w}_i $ may be as small as $O(\ve)$, the candidates for the mean will multiply the size of our collection by $O\left(\frac{1}{\ve^2}\right)$.
However, we note that when $\hat{w}_i = \Omega(1)$, then the number of candidates for $\mu_i$ is actually $O\left(\frac{1}{\ve}\right)$.
We count the number of candidate triples $(\hat w, \hat \m_1)$, combining with previous results in the following:

\begin{lemma}
  \label{lem:means}
  Suppose we have sample access to a GMM with (unknown) parameters $(w,\m_1, \m_2, \s_1, \s_2)$.
  Then for any $\ve > 0$ and constants $c_w, c_m > 0$, using $O(\frac{1}{\ve^2})$ samples from the GMM, we can generate a collection of $O\left(\frac{\log{\ve^{-1}}}{\ve^2}\right)$ candidate pairs for $(w, \m_1)$.
  With probability $\geq \frac{99}{100}$, this will contain a pair $(\hat w^*, \hat \m_1^*)$ such that
 $\hat w^* \in w \pm O(\ve),$
 $\hat \m_1^* \in \m_1 \pm O(\ve)\s_1$.
\end{lemma}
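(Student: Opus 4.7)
The plan is to combine the mixing weight grid from Section~\ref{sec:mixing weight candidates} with a set of mean candidates per mixing weight, while sharing samples across mixing weights so that the total sample complexity remains $O(1/\ve^2)$.

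First I would recall from Section~\ref{sec:mixing weight candidates} that the grid $\{t\ve : t \in \{1, \ldots, \lceil 1/\ve\rceil\}\}$ contains some $\hat w^* \in [w-\ve, w+\ve]$; this requires no samples and contributes $O(1/\ve)$ candidates. Next, I would draw a single batch $S$ of $N = \Theta(1/\ve^2)$ samples from the mixture, where the constant is chosen so that $N \geq \frac{40\sqrt{2}}{3 \hat w \ve}$ for every $\hat w \geq \ve$ in the grid. Then for each grid candidate $\hat w$, I would designate the first $\lceil \frac{40\sqrt 2}{3 \hat w \ve}\rceil$ samples of $S$ as mean candidates paired with $\hat w$.

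For correctness, focus on $\hat w^*$ and (without loss of generality, by the branching argument preceding the lemma) assume we are estimating $\mu_1$ where $\sigma_1/w_1 \le \sigma_2/w_2$. Assume $w = w_1 \ge \ve$; otherwise Lemma~\ref{lem:means} is vacuous for this branch since the other branch handles the case $w_1 < \ve$ (or the component has negligible weight and is handled elsewhere, per the opening of Section~\ref{sec:candidates}). By Proposition~\ref{prop:west}, $2/\hat w^* \ge 1/w_1$, so the number of samples paired with $\hat w^*$ is at least $\frac{20\sqrt 2}{3 w_1 \ve}$. By Proposition~\ref{prop:nearmean}, with probability $\ge 99/100$ at least one of these samples lies in $\mu_1 \pm \ve \sigma_1$, giving the desired pair $(\hat w^*, \hat\mu_1^*)$.

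Finally, I would bound the collection size: summing $\lceil \frac{40\sqrt 2}{3 \hat w \ve}\rceil$ over $\hat w = t\ve$ for $t = 1, \ldots, \lceil 1/\ve\rceil$ yields
\[
\sum_{t=1}^{\lceil 1/\ve\rceil} O\!\left(\frac{1}{t\ve^2}\right) \;=\; O\!\left(\frac{\log(1/\ve)}{\ve^2}\right),
\]
matching the claimed count. The main (minor) obstacle is making sure the sample-sharing step is legitimate: since the $N$ samples are i.i.d.\ from the mixture, the event ``some sample among the first $k$ lies in $\mu_1 \pm \ve\sigma_1$'' depends only on those $k$ samples, so applying Proposition~\ref{prop:nearmean} to the prefix corresponding to $\hat w^*$ is valid even though we also use (possibly overlapping) prefixes for other grid points. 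A union bound is not needed for correctness, since we only need the good event for the single index $\hat w^*$; inaccurate pairs from other $\hat w$ values simply inflate the collection, which we have already accounted for.
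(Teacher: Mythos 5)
Your proof is correct and follows essentially the same approach as the paper: grid over $\hat w$, pair each grid point with a prefix of a shared $O(1/\ve^2)$-sample batch of mean candidates, invoke Propositions~\ref{prop:nearmean} and~\ref{prop:west} for correctness at $\hat w^*$, and bound the total collection size by the harmonic sum $\sum_t O(1/(t\ve^2)) = O(\log(1/\ve)/\ve^2)$. The one addition you make --- spelling out that sharing a single sample batch across all grid points is legitimate because only the good event at the single index $\hat w^*$ matters --- is a detail the paper defers to the proof of Theorem~\ref{thm:list of candidates} rather than stating inside the proof of Lemma~\ref{lem:means}; it is a welcome clarification, not a deviation.
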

\noindent The proof of the lemma is deferred to Appendix ~\ref{app:means}. It implies that we can generate $O\left(\frac1{\ve^2}\right)$ candidate triples, such that at least one pair simultaneously describes $w$ and $\m_1$ to the desired accuracy.

\subsection{Generating Candidates for a Single Variance}
In this section, we generate candidates for the variance corresponding to the Gaussian with the smaller value of $\frac{\s_i}{w_i}$.
We continue with our guess of whether $i = 1$ or $i = 2$ from the previous section.

Again, assume for this section that $i = 1$.
The basic idea is that we will find the closest point to $\hat \m_1$.
We use the following property (whose proof is deferred to Appendix \ref{sec:closestpointkol}) to establish a range for this distance, which we can then grid over.

We note that this lemma holds in scenarios more general than we consider here, including $k > 2$ and when samples are drawn from a distribution which is only close to a GMM, rather than exactly a GMM.
\begin{lemma}
  \label{lem:closeptgmmrob}
  Let $c_1$ and $c_2$ be constants as defined in Proposition \ref{thm:closept}, and $c_3 = \frac{c_1}{9\sqrt{2}c_2}$.
  Consider a mixture of $k$ Gaussians $f$, with components $\mathcal{N}(\m_1,\s_1^2), \dots, \mathcal{N}(\m_k, \s_k^2)$ and weights $w_1, \dots, w_k$, and let $j = \arg\min_i \frac{\s_i}{w_i}$.
  Suppose we have estimates for the weights and means for all $i \in [1,k]$:
  \begin{itemize}
    \item $\hat w_i$, such that $\frac12 \hat w_i \leq w_i \leq 2\hat w_i$
    \item $\hat \mu_i$, such that $|\hat \mu_i - \mu_i| \leq \frac{c_3}{2k}\s_j$
  \end{itemize}
  Now suppose we draw $n = \frac{9\sqrt{\p}c_2}{2\hat w_j}$ samples $X_1, \dots, X_n$ from a distribution $\hat f$, 
  where $\dk(f,\hat f) \leq \d = \frac{c_1}{2n} = \frac{c_1}{9\sqrt{\p}c_2}\hat w_j$.
  Then $\min_i |X_i - \hat \m_j| \in [\frac{c_3}{2k}\s_j, (\sqrt{2} + \frac{c_3}{2k})\s_j]$ with probability $\geq \frac{9}{10}$.
\end{lemma}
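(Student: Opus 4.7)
I split the claimed range of $\min_i |X_i - \hat\m_j|$ into an upper and a lower bound and handle each separately. Since $|\hat\m_j - \m_j| \le \frac{c_3}{2k}\s_j$, shifting intervals around $\hat\m_j$ by the bias shows that it suffices to establish, with total failure probability at most $1/10$, the two events (U) ``some $X_i$ lies in $[\m_j - \sqrt{2}\s_j,\, \m_j + \sqrt{2}\s_j]$'' and (L) ``no $X_i$ lies in $[\m_j - \frac{c_3}{k}\s_j,\, \m_j + \frac{c_3}{k}\s_j]$.''

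\noindent\textbf{Transfer from $\hat f$ to $f$, and event (U).} Both (U) and (L) have the form ``some / no $X_i$ falls in a fixed interval $A$.'' For any such $A$, evaluating $\dk(f,\hat f) \le \d$ at the two endpoints of $A$ gives $|\Pr_{X \sim \hat f}[X \in A] - \Pr_{X \sim f}[X \in A]| \le 2\d$, and the elementary estimate $|(1-\hat p)^n - (1-p)^n| \le n|\hat p - p|$ shows that the $n$-sample versions differ by at most $2n\d = c_1$. I may therefore analyze (U) and (L) under $f^{\otimes n}$ at an additive cost of $2c_1$. Under $f^{\otimes n}$, the expected number of samples from component $j$ is $n w_j \ge \frac{9\sqrt{\p}c_2}{4}$ (using $w_j \ge \hat w_j / 2$), so a Chernoff bound yields $\Theta(c_2)$ such samples with overwhelming probability; Proposition~\ref{thm:closept} applied to these i.i.d.\ $\mathcal{N}(\m_j,\s_j^2)$-draws then delivers (U) with failure probability $O(c_1)$.

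\noindent\textbf{Event (L) and main obstacle.} Because $j = \arg\min_i \s_i/w_i$ forces $w_i/\s_i \le w_j/\s_j$ for every $i$, the mixture density is uniformly bounded by $f(x) \le \sum_i \frac{w_i}{\s_i\sqrt{2\p}} \le \frac{kw_j}{\s_j\sqrt{2\p}}$. The $f$-mass of the interval in (L) is therefore at most $\frac{2c_3 w_j}{\sqrt{2\p}}$, and a union bound over $n$ samples gives at most $\frac{2c_3 n w_j}{\sqrt{2\p}} \le \frac{2c_3 \cdot 9\sqrt{\p}c_2}{\sqrt{2\p}} = 9\sqrt{2}\,c_2 c_3 = c_1$, using $n w_j \le 9\sqrt{\p}c_2$ and the definition $c_3 = c_1/(9\sqrt{2}c_2)$. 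The delicate aspect of the proof is the calibration: the settings $\d = c_1/(2n)$, $c_3 = c_1/(9\sqrt{2}c_2)$, and $n = \frac{9\sqrt{\p}c_2}{2\hat w_j}$ are chosen precisely so that each of the two transfer errors and the two event failure probabilities is a constant multiple of $c_1$; taking $c_1$ as the (small) inherited constant from Proposition~\ref{thm:closept} makes the total at most $1/10$.
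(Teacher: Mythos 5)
Your decomposition into an upper event (U) and a lower event (L), with the bias correction $|\hat\mu_j - \mu_j| \le \frac{c_3}{2k}\sigma_j$ transferring between the two centers, is a genuinely different route from the paper: the paper instead chains Proposition~\ref{thm:closept} through the robustness statement (Proposition~\ref{thm:closeptrob}) and the CDF-to-Euclidean conversion (Proposition~\ref{thm:closeptgmm}), preserving a single $9/10$ success event throughout. Your (L) argument (bounding the mixture density by $\frac{kw_j}{\sigma_j\sqrt{2\pi}}$ using $j = \arg\min_i \sigma_i/w_i$, then a union bound over $n$ samples producing exactly $9\sqrt{2}\,c_2c_3 = c_1$) is correct, and so is the $2n\delta = c_1$ transfer estimate.

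The gap is in the failure-probability accounting for (U). Proposition~\ref{thm:closept} fixes $c_1 = 1/40$ and $c_2 = 19/2$, and its second-moment (Paley--Zygmund) argument bounds the ``all too far'' event only by $\frac{1}{2c_2+1} = \frac{1}{20} = 2c_1$ --- it does \emph{not} deliver something $\ll c_1$, and these constants are not tunable since $c_3$ and hence the lemma statement depend on them. Adding things up: (L) under $f$ costs $c_1$, its transfer costs $c_1$, (U) under $f$ via Proposition~\ref{thm:closept} costs $\approx 2c_1$, its transfer costs $c_1$, plus a (tiny) Chernoff term; that totals $\approx 5c_1 = 1/8 > 1/10$. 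The calibration sentence at the end of your proof asserts the total is $\le 1/10$, but with the inherited $c_1 = 1/40$ it is not. The fix is easy and makes Proposition~\ref{thm:closept} unnecessary for (U): a single mixture sample lands in $[\mu_j - \sqrt{2}\sigma_j,\ \mu_j + \sqrt{2}\sigma_j]$ with probability at least $w_j\,\mathrm{erf}(1) > \frac{3}{4}w_j$, so under $f^{\otimes n}$ the miss probability is at most $(1 - \frac{3}{4}w_j)^n \le e^{-\frac{3}{4}nw_j} \le e^{-\frac{27\sqrt{\pi}c_2}{16}}$, which is astronomically smaller than $c_1$; with this substitution the total is $\approx 3c_1 = 3/40 < 1/10$. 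It is also worth noting what the paper buys with Proposition~\ref{thm:closeptrob}: it pays the $\hat f$-versus-$f$ cost by shrinking $I_N$ and enlarging $I_F$ (replacing $c_1 \to c_1/2$ and $c_2 \to 3c_2/2$) rather than as an additive probability penalty, which is why its union bound never leaves the $1/10$ budget.
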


Summarizing what we have so far,

\begin{lemma}
  \label{lem:dev1}
  Suppose we have sample access to a GMM with parameters $(w, \m_1, \m_2, \s_1, \s_2)$, where $\frac{\s_1}{w} \leq \frac{\s_2}{1-w}$.
  Furthermore, we have estimates $\hat w^* \in w \pm O(\ve)$, $\hat \m_1^* \in \m_1 \pm O(\ve) \s_1$.
  Then for any $\ve > 0$, using $O(\frac{1}{\ve})$ samples from the GMM, we can generate a collection of $O\left(\frac1{\ve}\right)$ candidates for $\s_1$.
  With probability $\geq \frac{9}{10}$, this will contain a candidate $\hat \s_1^*$ such that $\hat \s_1^* \in (1 \pm O(\ve))\s_1$.
\end{lemma}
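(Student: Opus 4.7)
The plan is to draw $n = \lceil 9\sqrt{\pi}\,c_2 / (2\hat w^*)\rceil = O(1/\ve)$ fresh samples $X_1,\dots,X_n$ from the mixture, compute the closest-point statistic $d := \min_i |X_i - \hat\mu_1^*|$, and output as our candidate set a geometric grid with common ratio $1+\ve$ covering an interval $[c_{\min}\, d,\, c_{\max}\, d]$ for fixed constants $c_{\min},c_{\max}$. This uses $O(1/\ve)$ samples and produces $O(1/\ve)$ candidates, matching the claimed bounds.

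For correctness I would invoke Lemma~\ref{lem:closeptgmmrob} with $k=2$, $j=1$, and $\hat f = f$, so that the Kolmogorov-distance hypothesis is trivially satisfied with $\delta = 0$. First I check the remaining preconditions on $(\hat w^*,\hat\mu_1^*)$: the multiplicative bound $\tfrac12 \hat w^* \le w \le 2\hat w^*$ follows from the additive closeness $|\hat w^* - w| \le O(\ve)$ combined with the running assumption $\min(w,1-w) \ge \Omega(\ve)$ (taking $\ve$ smaller than a suitable constant), and $|\hat\mu_1^* - \mu_1| \le \tfrac{c_3}{4}\sigma_1$ follows immediately from $|\hat\mu_1^* - \mu_1| \le O(\ve)\sigma_1$. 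Lemma~\ref{lem:closeptgmmrob} then tells us that with probability at least $9/10$, $d \in \bigl[\tfrac{c_3}{4}\sigma_1,\;(\sqrt{2}+\tfrac{c_3}{4})\sigma_1\bigr]$, which rearranges to $\sigma_1 \in \bigl[d/(\sqrt{2}+c_3/4),\;4d/c_3\bigr]$, an interval of constant multiplicative width. Since a geometric grid with ratio $1+\ve$ on such an interval has $O(1/\ve)$ points, at least one grid point $\hat\sigma_1^*$ satisfies $\hat\sigma_1^* \in (1\pm O(\ve))\sigma_1$, finishing the argument.

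The main technical subtlety is that Lemma~\ref{lem:closeptgmmrob}, as stated for general $k$, requires an estimate $\hat\mu_2$ of $\mu_2$, which we do not have at this stage. In the $k=2$ case the role of $\hat\mu_2$ in that lemma is only to rule out that samples from the second component spuriously land within $\tfrac{c_3}{4}\sigma_1$ of $\hat\mu_1^*$, and this can be verified directly from the ordering $\sigma_1/w \le \sigma_2/(1-w)$ via a case split on $|\mu_1-\mu_2|$ versus $\sigma_2$: when $|\mu_1-\mu_2|$ is large compared to $\sigma_2$, Gaussian concentration keeps component-2 samples far from $\hat\mu_1^*$; when $|\mu_1-\mu_2|$ is comparable to or smaller than $\sigma_2$, the relevant minimum-distance scale for $nw_2$ component-2 samples is $\sigma_2/(nw_2)$, which by the ordering is at least $\sigma_1/(nw_1) \gtrsim \sigma_1$ (since $n \asymp 1/w$), so the needed lower bound on $d$ is still maintained. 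I expect this case analysis (essentially re-deriving the content of Lemma~\ref{lem:closeptgmmrob} for $k=2$ without an explicit $\hat\mu_2$) to be the only non-routine part of the proof.
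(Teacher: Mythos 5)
Your proposal takes the same route as the paper's: draw $n=O(1/\hat w^*)=O(1/\ve)$ samples, take the closest-point statistic $d=\min_i|X_i-\hat\mu_1^*|$, invoke Lemma~\ref{lem:closeptgmmrob} with $k=2$, $j=1$, $\hat f=f$ (so $\delta=0$), and apply a multiplicative grid (Fact~\ref{fact:mult}) over the resulting constant-ratio interval to produce $O(1/\ve)$ candidates. The paper's proof is exactly this, in three lines, and your verification of the weight and mean preconditions matches Proposition~\ref{prop:west} and the assumed closeness of $\hat\mu_1^*$.

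Where you go further is the final paragraph: you correctly notice that Lemma~\ref{lem:closeptgmmrob} \emph{as stated} hypothesizes an estimate $\hat\mu_i$ for every $i\in[k]$, and at this stage of the pipeline we have no $\hat\mu_2$. The paper silently ignores this when invoking the lemma in the proof of Lemma~\ref{lem:dev1}, so your flag is a legitimate observation about the paper's own rigor, not a misreading. However, the case split you propose (on $|\mu_1-\mu_2|$ versus $\sigma_2$, with a concentration argument for component~2) is more work than is needed and re-derives something that is already in the proof of Proposition~\ref{thm:closeptgmm}. The cleaner resolution is simply to observe that the proofs of Lemma~\ref{lem:closeptgmmrob} and Proposition~\ref{thm:closeptgmm} never use $\hat\mu_i$ or $\hat w_i$ for $i\neq j$: the only place the other components enter is the density bound
$\sum_{i} w_i\,\mathcal{N}(\mu_i,\sigma_i^2,x) \leq \sum_i \frac{w_i}{\sigma_i\sqrt{2\pi}} \leq \frac{k\, w_j}{\sigma_j\sqrt{2\pi}}$,
which uses only the global pointwise maximum of each Gaussian density and the minimality of $\sigma_j/w_j$ --- it holds uniformly in $x$, independent of where $\mu_2,\dots,\mu_k$ sit. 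So the hypothesis ``$\hat\mu_i$ for all $i$'' in Lemma~\ref{lem:closeptgmmrob} is simply stated in excess and can be dropped; no new argument about component~2 is required. With that substitution your proof becomes essentially identical to (and slightly more scrupulous than) the paper's.
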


\subsection{Learning the Last Component using Robust Statistics}
\label{sec:lastcomp}
At this point, our collection of candidates must contain a triple $(\hat w^*,\hat \m_1^*, \hat \s_1^*)$ 
which are sufficiently close to the correct parameters.
Intuitively, if we could remove this component from the mixture, we would be left with a distribution corresponding to a single Gaussian, which we could learn trivially.
We will formalize the notion of ``component subtraction,'' which will allow us to eliminate the known component and obtain a description of an approximation to the CDF for the remaining component.
Using classic robust statistics (the median and the interquartile range), we can then obtain approximations to the unknown mean and variance.
This has the advantage of a single additional candidate for these parameters, in comparison to $O(\frac{1}{\ve})$ candidates for the previous mean and variance.

Our first step will be to generate an approximation of the overall distribution.
We will do this only once, at the beginning of the entire algorithm.
Our approximation is with respect to the Kolmogorov distance.
Using the DKW inequality (Theorem \ref{thm:dkw}) and Proposition \ref{prop:kolcons}, we obtain a $O(\frac{1}{\ve^2})$-interval partition representation of $\hat H$ such that
$\dk(\hat H, H) \leq \ve$,  with probability $\geq 1 - \d$ using $O(\frac{1}{\ve^2} \cdot \log{\frac{1}{\d}})$ time and samples (where $H$ is the CDF of the GMM).

Next, for each candidate $(\hat w, \hat \m_1, \hat \s_1)$, we apply Lemma \ref{lem:mono} to obtain the $O(\frac{1}{\ve^2})$-interval partition of the distribution with the known component removed, i.e., using the notation of Lemma \ref{lem:mono}, let $H$ be the CDF of the GMM, $F$ is our DKW-based approximation to $H$, $w$ is the weight $\hat w$, and $G$ is $\mathcal{N}(\hat \m_1, \hat \s_1^2)$.
We note that this costs $O(\frac{1}{\ve^2})$ for each candidate triple, and since there are $\tilde O(\frac{1}{\ve^3})$ such triples, the total cost of such operations will be $\tilde O(\frac{1}{\ve^5})$.
However, since the tournament we will use for selection of a candidate will require $\tilde \Omega(\frac{1}{\ve^5})$ anyway, this does not affect the overall runtime of our algorithm.

The following proposition shows that, when our candidate triple is $(w^*,\m_1^*,\s_1^*)$, the distribution that we obtain after subtracting the known component out and rescaling is close to the unknown component.

\begin{proposition}
\label{prop:dksubtract}
Suppose there exists a mixture of two Gaussians $F = w\mathcal{N}(\m_1,\s_1^2) + (1-w)\mathcal{N}(\m_2,\s_2^2)$ where $O(\ve) \leq w \leq 1 - O(\ve)$,
and we have $\hat F$, such that $\dk(\hat F, F) \leq O(\ve)$, $\hat w^*$, such that $|\hat w^* - w| \leq O(\ve)$, $\hat \m_1^*$, such that $|\m_1^* - \m_1| \leq O(\ve)\s_1$, and $\hat \s_1^*$, such that $|\s_1^* - \s_1| \leq O(\ve)\s_1$.

Then $\dk\left(\mathcal{N}(\m_2,\s_2^2), \frac{\hat F - \hat w^* \mathcal{N}(\hat \m_1^*, \hat \s_1^{*2})}{1-\hat w^*}\right) \leq \frac{O(\ve)}{1 - w}$.
\end{proposition}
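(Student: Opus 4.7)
\begin{proofsketch}
The plan is to reduce the problem to a direct algebraic manipulation of CDFs, making use only of (i) the decomposition of the true mixture CDF, (ii) the given Kolmogorov-closeness of $\hat F$ to $F$, and (iii) the Gaussian perturbation bound of Proposition~\ref{prop:ddodtv} combined with Fact~\ref{fact:dkdtv}.

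Let $G_1, G_2, \hat G_1$ denote the CDFs of $\mathcal{N}(\m_1,\s_1^2)$, $\mathcal{N}(\m_2,\s_2^2)$, and $\mathcal{N}(\hat\m_1^*, \hat\s_1^{*2})$ respectively. First I would observe that $F = w G_1 + (1-w) G_2$, so $(1-w) G_2 = F - w G_1$, and would then write the difference we want to bound as
\[
\frac{\hat F - \hat w^* \hat G_1}{1 - \hat w^*} - G_2
= \frac{\hat F - G_2 + \hat w^*(G_2 - \hat G_1)}{1-\hat w^*}.
\]
Substituting $\hat F - G_2 = (\hat F - F) + w(G_1 - G_2)$ and regrouping, the numerator becomes
\[
(\hat F - F) + (w - \hat w^*)(G_1 - G_2) + \hat w^*(G_1 - \hat G_1).
\]

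Next I would bound each term in sup-norm. The first term is at most $O(\ve)$ by hypothesis on $\dk(\hat F, F)$. For the second term, $\|G_1 - G_2\|_\infty \le 1$ since both are CDFs, so it is at most $|w-\hat w^*| \le O(\ve)$. For the third term, apply Proposition~\ref{prop:ddodtv} to the pair $\mathcal{N}(\m_1,\s_1^2), \mathcal{N}(\hat\m_1^*,\hat\s_1^{*2})$: using $|\hat\m_1^* - \m_1| \le O(\ve)\s_1$ and $|\hat\s_1^* - \s_1| \le O(\ve)\s_1$ (which forces $\hat\s_1^* = (1\pm O(\ve))\s_1$), both the mean-difference and variance-difference contributions evaluate to $O(\ve)$, giving $\dtv \le O(\ve)$, and hence $\dk(G_1,\hat G_1) \le O(\ve)$ by Fact~\ref{fact:dkdtv}. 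Multiplying by $\hat w^* \le 1$, this term is also $O(\ve)$.

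Finally, for the denominator, since $|\hat w^* - w| \le O(\ve)$ and $w \le 1 - \Omega(\ve)$, we have $1 - \hat w^* \ge (1-w) - O(\ve) = \Omega(1-w)$ (by choosing constants so the $O(\ve)$ slack is at most half of the lower bound on $1-w$). Combining, the total sup-norm of the difference is at most $\tfrac{O(\ve)}{1-w}$, which is the desired Kolmogorov bound. The main point of care is the denominator argument — one has to justify that $1 - \hat w^*$ does not collapse — which is why the hypothesis $w \le 1 - O(\ve)$ is needed; everything else is bookkeeping built on top of Proposition~\ref{prop:ddodtv}.
\end{proofsketch}
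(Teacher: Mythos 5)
Your proof is correct and is essentially the paper's argument: the paper's last-but-one line is exactly your three-term decomposition in disguise (insert $F$ by the triangle inequality, bound $\dk(\hat F,F)$, then bound the two remaining pieces via Fact~\ref{fact:dkdtv}, Proposition~\ref{prop:ddodtv}, and Proposition~\ref{prop:mixtv}), followed by the same $1-\hat w^* = \Theta(1-w)$ observation. The only cosmetic difference is that you carry out the algebra directly at the CDF level, which bypasses Proposition~\ref{prop:mixtv}, and you spell out the justification for $1-\hat w^* \ge \Omega(1-w)$ — a step the paper's proof applies silently when passing from $O(\ve)/(1-\hat w^*)$ to $O(\ve)/(1-w)$.
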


Since the resulting distribution is close to the correct one, we can use robust statistics (via Lemmas \ref{lem:med} and \ref{lem:iqr}) to recover the missing parameters.
We combine this with previous details into the following Lemma, whose proof is deferred to Appendix \ref{app:lastparams}.
\begin{lemma}
  \label{lem:lastparams}
  Suppose we have sample access to a GMM with parameters $(w, \m_1, \m_2, \s_1, \s_2)$, where $\frac{\s_1}{w} \leq \frac{\s_2}{1-w}$.
  Furthermore, we have estimates $\hat w^* \in w \pm O(\ve)$, $\hat \m_1^* \in \m_1 \pm O(\ve)\s_1$, $\hat \s_1^* \in (1 \pm O(\ve))\s_1$.
  Then for any $\ve > 0$, using $O(\frac{1}{\ve^2} \cdot \log\frac{1}{\d})$ samples from the GMM, with probability $\geq 1 - \d$, we can generate candidates $\hat \m_2^* \in \m_2 \pm O\left(\frac{\ve}{1-w}\right)\s_2$ and $\hat \s_2^* \in \left(1 \pm O\left(\frac{\ve}{1 -w}\right)\right)\s_2$.
\end{lemma}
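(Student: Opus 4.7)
}
The plan is to execute the four-step pipeline outlined in Section~\ref{sec:lastcomp}, chaining each step's guarantee into the next. First I would invoke the DKW inequality via Proposition~\ref{prop:kolcons} to obtain, from $O(\log(1/\delta)/\varepsilon^2)$ samples, an $O(1/\varepsilon^2)$-interval partition representation $\hat F$ of the GMM's CDF $F$ satisfying $\dk(\hat F, F) \leq O(\varepsilon)$ with probability at least $1-\delta$. This is the only randomized step in the argument; every subsequent step is a deterministic post-processing of $\hat F$ together with the given estimates $(\hat w^*, \hat \m_1^*, \hat \s_1^*)$, so the $1-\delta$ failure probability of the whole lemma comes entirely from here.

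Second, I would apply Lemma~\ref{lem:mono} with weight $\hat w^*$ and known-component CDF $G = \mathcal{N}(\hat \m_1^*, \hat \s_1^{*2})$ to compute an $O(1/\varepsilon^2)$-interval partition representation of a distribution $\hat H$ approximating $(\hat F - \hat w^* G)/(1-\hat w^*)$ in Kolmogorov distance. Proposition~\ref{prop:dksubtract} then folds together the $O(\varepsilon)$ errors in $\hat F$, in the weight $\hat w^*$, and in the first component's parameters $(\hat \m_1^*, \hat \s_1^*)$, yielding
$$\dk\!\left(\mathcal{N}(\m_2, \s_2^2),\, \hat H\right) \;\leq\; O\!\left(\tfrac{\varepsilon}{1-w}\right).$$

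Third, with $\hat H$ in interval-partition form I can compute $\mathrm{med}(\hat H) = \hat H^{-1}(1/2)$ and the quartiles $\hat H^{-1}(1/4), \hat H^{-1}(3/4)$ in $O(\log(1/\varepsilon))$ time each. Setting $\varepsilon' := O(\varepsilon)/(1-w)$, I would apply Lemma~\ref{lem:med} with error parameter $\varepsilon'$ to obtain
$$\hat \m_2^* \;:=\; \mathrm{med}(\hat H) \;\in\; \m_2 \,\pm\, O\!\left(\tfrac{\varepsilon}{1-w}\right)\s_2,$$
and apply Lemma~\ref{lem:iqr} with the same $\varepsilon'$ to obtain
$$\hat \s_2^* \;:=\; \frac{\mathrm{IQR}(\hat H)}{2\sqrt{2}\,\erf^{-1}(1/2)} \;\in\; \left(1 \pm O\!\left(\tfrac{\varepsilon}{1-w}\right)\right)\s_2,$$
which are exactly the two candidates claimed.

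The main subtlety, and the place I expect to spend the most constant-chasing, is that Lemmas~\ref{lem:med} and~\ref{lem:iqr} require the Kolmogorov error on the input distribution to be strictly less than $1/8$, whereas ours is $O(\varepsilon)/(1-w)$. Under the standing assumption of Section~\ref{sec:candidate generation} that $1-w = \Omega(\varepsilon)$, this ratio is $O(1)$, so by shrinking the absolute constants buried inside Proposition~\ref{prop:dksubtract} (concretely, by increasing the DKW sample count by a constant factor and tightening the $O(\varepsilon)$ accuracy demanded of $\hat w^*, \hat \m_1^*, \hat \s_1^*$ in the earlier phases by a constant factor) the input Kolmogorov error to the robust-statistics lemmas can be forced below $1/8$. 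Apart from this bookkeeping, the argument is a straightforward chaining of the four cited results.
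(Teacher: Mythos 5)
Your proposal is correct and follows essentially the same chain of lemmas as the paper's own proof: Proposition~\ref{prop:kolcons} to build the empirical CDF, Proposition~\ref{prop:dksubtract} together with Lemma~\ref{lem:mono} to subtract out the known component, and then Lemmas~\ref{lem:med} and~\ref{lem:iqr} to extract the median and IQR. Your explicit treatment of the $\ve < 1/8$ hypothesis required by the robust-statistics lemmas (via the standing assumption $1-w = \Omega(\ve)$ and constant bookkeeping) is a point the paper's proof leaves implicit, but the argument is the same.
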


\subsection{Putting It Together}\label{sec:proof of theorem list of candidates}

Theorem~\ref{thm:list of candidates} is obtained by combining the previous sections, and its proof is given in the appendix.

\section{Quasilinear-Time Hypothesis Selection} \label{sec:tournament}

The goal of this section is to present a hypothesis selection algorithm, {\tt FastTournament}, which is given sample access to a target distribution $X$ and several hypotheses distributions $H_1,\ldots,H_N$, together with an accuracy parameter $\ve >0$,  and is supposed to select a hypothesis distribution from $\{H_1,\ldots,H_N\}$. The desired behavior is this: if at least one distribution in $\{H_1,\ldots,H_N\}$ is $\ve$-close to $X$ in total variation distance, we want that the hypothesis distribution selected by the algorithm is $O(\ve)$-close to $X$. We provide such an algorithm whose sample complexity is $O({1 \over \ve^2} \log N)$ and whose running time $O({1 \over \ve^2} N \log N)$, {\em i.e. quasi-linear in the number of hypotheses}, improving the running time of the state of the art (predominantly the Scheff\'e-estimate based algorithm in~\cite{DL:01}) quadratically. 

We develop our algorithm in full generality, assuming that we have sample access to the distributions of interest, and without making any assumptions about whether they are continuous or discrete, and whether their support is single- or multi-dimensional. All our algorithm needs is sample access to the distributions at hand, together with a way to compare the probability density/mass functions of the distributions, encapsulated in the following definition. In our definition, $H_i(x)$ is the probability mass at $x$ if $H_i$ is a discrete distribution, and the probability density at $x$ if $H_i$ is a continuous distribution. We assume that $H_1$ and $H_2$ are either both discrete or both continuous, and that, if they are continuous, they have a density function.

\begin{definition}
Let $H_1$ and $H_2$ be probability distributions over some set ${\cal D}$. A {\em PDF comparator for $H_1, H_2$} is an oracle that takes as input some $x \in {\cal D}$ and outputs $1$ if $H_1(x)>H_2(x)$, and $0$ otherwise. 
\end{definition}

Our hypothesis selection algorithm is summarized in the following statement:

\begin{theorem} \label{thm:tournament theorem}
There is an algorithm {\tt FastTournament}$(X, {\cal H},\ve,\delta)$, which is given sample access to some distribution $X$ and a collection of distributions ${\cal H}=\{H_1,\ldots,H_N\}$ over some set ${\cal D}$, access to a PDF comparator for every pair of distributions $H_i, H_j \in {\cal H}$, an accuracy parameter $\ve >0$, and a confidence parameter $\delta >0$.  The algorithm makes
{$O\left({\log {1/ \delta} \over \ve^2} \cdot \log N\right)$} draws from each of $X, H_1,\ldots,H_N$ and returns some $H \in {\cal H}$ or declares ``failure.''  If there is some $H^* \in {\cal H}$ such that $\dtv(H^*,X) \leq \ve$ then with probability at least $1-\delta$ the distribution $H$ that {\tt
FastTournament} returns satisfies $\dtv(H,X) \leq {512} \ve.$ The total number of operations of the algorithm is {$O\left( {\log{1 / \delta} \over \ve^2} \left(N \log N + \log^2 {1 \over \delta}\right) \right)$}.
Furthermore, the expected number of operations of the algorithm is {$O\left( {N\log{N /\d} \over \ve^2}\right)$}.
\end{theorem}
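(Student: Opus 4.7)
The plan is to build \texttt{FastTournament} in two layers: a sample-based pairwise comparator derived from the Scheff\'e test, and an $O(\log N)$-depth tournament bracket that invokes the comparator only $N-1$ times rather than the classical $\binom{N}{2}$.

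For the pairwise comparator, fix $H_i, H_j \in {\cal H}$ and let $A_{ij} = \{x : H_i(x) > H_j(x)\}$, the Scheff\'e set identifiable via the PDF comparator. I draw $m = \Theta(\log(1/\delta')/\ve^2)$ fresh samples from each of $X$, $H_i$, and $H_j$ and push each sample through the PDF comparator to compute the empirical fractions $\hat p, \hat\mu_i, \hat\mu_j$ of samples landing in $A_{ij}$. Hoeffding plus a triple union bound gives $|\hat p - X(A_{ij})|$, $|\hat\mu_i - H_i(A_{ij})|$, $|\hat\mu_j - H_j(A_{ij})| \leq \ve/8$ with probability at least $1 - \delta'$. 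Using the identity $\dtv(H_i, H_j) = H_i(A_{ij}) - H_j(A_{ij})$ and the inequality $|X(A) - H(A)| \leq \dtv(H, X)$, the comparator declares a winner as follows: if $|\hat\mu_i - \hat\mu_j| = O(\ve)$ it calls a ``tie'' and returns either hypothesis (they are within $O(\ve)$ of each other in total variation), otherwise it returns the hypothesis whose $\hat\mu$ is closer to $\hat p$. A routine calculation confirms that whenever at least one of $H_i, H_j$ is $\ve$-close to $X$ the returned hypothesis is $O(\ve)$-close to $X$.

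For the tournament itself, I arrange the $N$ hypotheses at the leaves of a balanced binary tree of depth $\lceil \log_2 N \rceil$ and populate the internal nodes with the winners of the pairwise comparator applied to the two children. Setting $\delta' = \delta/N$ and union bounding over the $N-1$ comparator invocations makes every match correct simultaneously with probability $\geq 1 - \delta$. Each $H_i$ participates in at most $\lceil \log_2 N\rceil$ matches, so drawing $O(\log N \cdot \log(1/\delta)/\ve^2)$ samples from each $H_i$ up front (and the same number from $X$) and allocating a fresh batch of $O(\log(1/\delta)/\ve^2)$ samples per round meets the stated sample complexity; each match is $O(\log(1/\delta)/\ve^2)$ arithmetic operations, giving the worst-case $O(N \log N \cdot \log(1/\delta)/\ve^2)$ running time. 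The additive $\log^2(1/\delta)/\ve^2$ running-time term arises from a final calibration on the $O(\log(1/\delta))$ hypotheses surviving the last few rounds, and the expected $O(N \log(N/\delta)/\ve^2)$ running time follows from an adaptive-sample variant that doubles the batch size only when the comparator cannot decide.

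The main obstacle is that the standard Scheff\'e guarantee $\dtv(\text{winner}, X) \leq 3 \min_k \dtv(H_k, X)$ would compound to $3^{\lceil \log_2 N \rceil} \ve$ across the tree and so would not yield the advertised absolute constant. The plan to defuse this is to lean on the threshold built into the comparator: decisive rounds always return the hypothesis genuinely closer to $X$, so they never worsen the best-so-far; tied rounds involve two hypotheses that are themselves $O(\ve)$-close in total variation, so the best-so-far degrades by at most an additive $O(\ve)$. An inductive invariant propagated up the tree---essentially that any chain of consecutive ties lies in an $O(\ve)$-ball of total variation around any of its members, so the total degradation cannot accumulate multiplicatively with depth---locks the final constant to $512$ and concludes the proof.
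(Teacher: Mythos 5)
Your plan rests on a single-elimination bracket: $N$ hypotheses at the leaves of a balanced binary tree, pairwise comparator at each internal node, $N-1$ matches total. This is a natural idea, but it does not actually give the stated guarantee, and the mechanism you propose to rescue it does not hold.

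The pairwise comparator (like the paper's {\tt ChooseHypothesis}, Lemma~\ref{lem:choosehypothesis}) has guarantees \emph{only} when at least one of the two competitors is $\ve$-close to $X$. Your claim that ``decisive rounds always return the hypothesis genuinely closer to $X$'' is false: if both $H_i$ and $H_j$ are farther than $\ve$ from $X$, the Scheff\'e-style test tells you nothing about which is returned. Similarly, ``tied rounds involve two hypotheses that are themselves $O(\ve)$-close in total variation'' fails: a draw is declared (in the paper's Step~5, which has an exact analogue in any version of this test) precisely in a regime where you can conclude $\dtv(H_i,H_j)>5\ve$ but cannot conclude that either is close to $X$. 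The trouble starts at the first match where the $\ve$-close anchor $H^*$ meets a hypothesis $H'$ with, say, $\dtv(H',X)\in(\ve,4\ve]$: the comparator is allowed to return $H'$ or to draw, and once $H^*$ is eliminated, every subsequent match up the tree is between hypotheses for which you have no distance guarantee at all. The $4^{\text{depth}}$ (or $8^{\text{depth}}$) blow-up you correctly flag is therefore not defused by your inductive invariant --- the invariant simply does not propagate once the anchor drops out. To keep the pure-bracket approach afloat you would have to re-run the comparator at each level with a geometrically increasing accuracy parameter, and then you are back to $\ve\cdot 8^{O(\log N)}$, which is $\operatorname{poly}(N)\cdot\ve$, not $512\ve$.

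The paper avoids this with a genuinely different structure. It runs two strategies and lets a final {\tt SlowTournament} arbitrate. Strategy S1 subsamples $O(\sqrt{N})$ hypotheses uniformly at random and runs the quadratic tournament on the subsample --- this works when the density $p$ of $8\ve$-close hypotheses in ${\cal H}$ is at least $1/\sqrt{N}$. Strategy S2 runs a bracket-style elimination \emph{only down to} $O(\sqrt{N})$ survivors (not all the way to a single winner), then finishes with {\tt SlowTournament} on the survivors --- the elimination phase is safe precisely because, when $p\le 1/\sqrt{N}$, the anchor $H^*$ is unlikely to meet a near-good competitor during the $O(\log\sqrt{N})$ rounds (probability bounded via $(1-p)(1-2p)\cdots(1-2^{T-1}p)\ge 1/2$), so $H^*$ itself survives and the losing-the-anchor problem never arises. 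Neither strategy alone is correct for all $p$, which is why both are run and arbitrated. The constant $512=8\times 64$ falls out of the final {\tt SlowTournament} at accuracy $64\ve$, and the additive $\log^2(1/\delta)/\ve^2$ operations term comes from that final tournament over the $O(\log(1/\delta))$ candidate outputs of the repeated S1/S2 runs --- not, as you suggest, from a ``calibration on surviving hypotheses.'' The expected-time bound comes from a separate variant ({\tt FastTournament}$_B$) that repeatedly runs S1/S2 once each and verifies the returned hypothesis against all of ${\cal H}$ with {\tt ChooseHypothesis}, discarding failures; your ``adaptive batch-doubling'' sketch is not the mechanism and is too vague to check.

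In short: the high-level goal (quasilinear-time selection) and the comparator primitive match the paper, but the tree-of-matches tournament is not sound, and the fix you propose relies on properties the comparator does not have. The density-split S1/S2 construction is the missing idea.
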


The proof of Theorem~\ref{thm:tournament theorem} is given in Section~\ref{sec:fast tournament}, while the preceding sections build the required machinery for the construction.
 
\begin{remark} A slight modification of our algorithm provided in Appendix~{\ref{sec: tournament appendix 2}} admits a worst-case running time of {$O\left( {\log{1 / \delta} \over \ve^2} \left(N \log N + \log^{1+\g} {1 \over \delta}\right) \right)$}, for any desired constant $\g > 0$, though the approximation guarantee is weakened based on the value of $\g$. See Corollary~\ref{corr:recursive tournament corollary} and its proof in Appendix~\ref{sec: tournament appendix 2}.
\end{remark}

 \paragraph{Comparison to Other Hypothesis Selection Methods:} The skeleton of the hypothesis selection algorithm of Theorem~\ref{thm:tournament theorem} as well as the improved one of Corollary~\ref{corr:recursive tournament corollary}, is having candidate distributions compete against each other in a tournament-like fashion. This approach is quite natural and has been commonly used in the literature; see e.g. Devroye and Lugosi (\cite{DL96,DL97} and Chapter~6 of \cite{DL:01}), Yatracos \cite{Yatracos85}, as well as the recent papers of~Daskalakis et al.~\cite{DaskalakisDiakonikolasServedioSTOC12} and Chan et al.~\cite{ChanDSS13a}. 
 The hypothesis selection algorithms in these works are significantly slower than ours, as their running times have quadratic dependence on the number $N$ of hypotheses, while our dependence is quasi-linear.
 Furthermore, our setting is more general than prior work, in that we only require sample access to the hypotheses and a PDF comparator.
 Previous algorithms required knowledge of (or ability to compute) the probability assigned by every pair of hypotheses to their Scheff\'e set---this is the subset of the support where one hypothesis has larger PMF/PDF than the other, which  is difficult to compute in general, even given explicit descriptions of the hypotheses.
 
 Recent independent work by Acharya et al.~\cite{AJOS14a,AJOS14b} provides a hypothesis selection algorithm, based on the Scheff\'{e} estimate  in Chapter 6 of \cite{DL:01}. Their  algorithm  performs a number of operations that is comparable to ours. In particular, the expected running time of their algorithm is also $O\left( {N\log{N /\d} \over \ve^2}\right)$, but our worst-case running time has better dependence on $\d$. Our algorithm is not based on the Scheff\'e estimate, using instead a specialized estimator provided in Lemma~\ref{lem:choosehypothesis}.   Their algorithm, described in terms of the Scheff\'{e} estimate, is not immediately applicable to sample-only access to the hypotheses, or to settings where the probabilities on Scheff\'e sets are difficult to compute.
 
\subsection{Choosing Between Two Hypotheses} \label{sec:choose hypothesis}

We start with an algorithm for choosing between two hypothesis distributions. This is an adaptation of a similar algorithm from~\cite{DaskalakisDiakonikolasServedioSTOC12} to continuous distributions and sample-only access. The proof of the following is in Appendix~\ref{sec: tournament appendix}.

\begin{lemma} \label{lem:choosehypothesis}
There is an algorithm {\tt ChooseHypothesis}$(X,{H}_1,{H}_2,\ve,\delta)$, which
is given sample access to distributions $X, H_1, H_2$ over some set $\cal D$, access to a PDF comparator for $H_1,H_2$, an accuracy
parameter $\ve >0$, and a confidence parameter $\delta>0.$  The algorithm draws $m=O(\log(1/\delta)/\ve^2)$
samples from each of $X, H_1$ and $H_2$, and either returns some $H \in \{H_1,H_2\}$ as the winner or declares a ``draw.'' The total number of operations of  the algorithm is $O(\log(1/\delta)/\ve^2)$. Additionally, the output satisfies the following properties:
\begin{enumerate}
\item  If $\dtv(X,H_1) \leq \ve$ but $\dtv(X, H_2)>8 \ve$, the probability that $H_1$ is not declared winner~is~$\le \delta$; \label{choose hypothesis:property 1}
\item If $\dtv(X,H_1) \leq \ve$ but $\dtv(X, H_2)>4 \ve$,  the probability that $H_2$ is declared winner is $\le \delta$;\label{choose hypothesis:property 2}
\item The analogous conclusions hold if we interchange $H_1$ and $H_2$ in Properties~\ref{choose hypothesis:property 1} and~\ref{choose hypothesis:property 2} above;
\item If $\dtv(H_1, H_2) \le 5 \ve$, the algorithm declares a ``draw'' with probability at least $1-\delta$.
\end{enumerate}
\end{lemma}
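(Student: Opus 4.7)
The plan is to base the tester on the Scheff\'e set for the pair, estimating the relevant probabilities from samples using the PDF comparator. I would define $\mathcal{W} = \{x \in \mathcal{D} : H_1(x) > H_2(x)\}$ and let $p_1 = H_1(\mathcal{W})$, $p_2 = H_2(\mathcal{W})$, $\tau = X(\mathcal{W})$. The two identities I will lean on are Scheff\'e's, $\dtv(H_1, H_2) = p_1 - p_2$, and the variational definition of total variation distance, which gives $|p_i - \tau| \leq \dtv(X, H_i)$ for $i=1,2$. The observation enabling the adaptation to sample-only access is that a single PDF-comparator query tells us whether any given point lies in $\mathcal{W}$, so we can estimate $p_1, p_2, \tau$ by sampling from $H_1, H_2, X$ respectively and counting the fraction that land in $\mathcal{W}$, without ever needing an explicit description of the distributions or of $\mathcal{W}$.

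The algorithm will draw $m = \Theta(\log(1/\delta)/\ve^2)$ samples from each of $X, H_1, H_2$, classify them via the PDF comparator, and form empirical estimates $\hat p_1, \hat p_2, \hat \tau$. A Hoeffding bound plus a union bound, with the implicit constant chosen appropriately, ensures that all three estimates are simultaneously within $\ve/2$ of their true values with probability $\geq 1 - \delta$; I will condition on this good event $\mathcal{E}$ throughout. The decision rule is to output ``draw'' whenever $\hat p_1 - \hat p_2 \leq 6\ve$, and otherwise to output whichever of $H_1, H_2$ has its empirical estimate closer to $\hat \tau$ (breaking ties arbitrarily).

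Verifying the four properties under $\mathcal{E}$ will then be a short case analysis. Property~4 is immediate because $\dtv(H_1, H_2) \leq 5\ve$ forces $p_1 - p_2 \leq 5\ve$ and hence $\hat p_1 - \hat p_2 \leq 6\ve$, triggering a draw. For Property~1, the triangle inequality gives $\dtv(H_1, H_2) > 7\ve$, hence $\hat p_1 - \hat p_2 > 6\ve$ (no draw); meanwhile $|\hat p_1 - \hat \tau| \leq 2\ve$ while $|\hat p_2 - \hat \tau| \geq (\hat p_1 - \hat p_2) - |\hat p_1 - \hat \tau| > 4\ve$, so $H_1$ is declared the winner. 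Property~2 I would argue by contradiction: were $H_2$ declared winner, then $|\hat p_2 - \hat \tau| \leq |\hat p_1 - \hat \tau| \leq 2\ve$ and $\hat p_1 - \hat p_2 > 6\ve$ would coexist, violating $\hat p_1 - \hat p_2 \leq |\hat p_1 - \hat \tau| + |\hat p_2 - \hat \tau| \leq 4\ve$. Property~3 is symmetric.

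The most delicate part will be choosing the constants so that all four thresholds ($5\ve$, $4\ve$, $8\ve$, and the $6\ve$ in the decision rule) are simultaneously compatible: the draw cutoff has to be slack enough to absorb $\ve/2$ of sampling error on each of $\hat p_1$ and $\hat p_2$ (which is what forces it above $5\ve$, matching Property~4), yet tight enough that in the regime of Property~1 the winner is identified with a positive margin. This budgeting is exactly where the $8\ve$ in Property~1 comes from and is the main point where the analysis departs from the explicit-description version in \cite{DaskalakisDiakonikolasServedioSTOC12}, since there $p_1, p_2$ are known exactly and only $\tau$ must be estimated. Once the thresholds are locked in, the operation count is immediate: $O(m)$ samples, each triggering a single PDF comparator call, for a total of $O(\log(1/\delta)/\ve^2)$ operations.
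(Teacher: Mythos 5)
Your proposal is correct and follows essentially the same route as the paper: the Scheff\'e set $\mathcal{W} = \{x : H_1(x) > H_2(x)\}$, the same empirical estimators $\hat p_1, \hat p_2, \hat\tau$ via the PDF comparator with $\ve/2$ concentration from a Chernoff/Hoeffding bound and union bound, and the same $6\ve$ draw cutoff. The only deviation is cosmetic: the paper's second-stage rule uses explicit threshold tests ($\hat\tau > \hat p_1 - 2\ve$ declares $H_1$, else $\hat\tau < \hat p_2 + 2\ve$ declares $H_2$, else draw), whereas you simply select whichever of $\hat p_1, \hat p_2$ is closer to $\hat\tau$; your case analysis shows this still yields all four guarantees, so both rules work.
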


\subsection{The Slow Tournament}\label{sec:slow tournament}

We proceed with a hypothesis selection algorithm, {\tt SlowTournament}, which has the correct behavior, but whose running time is suboptimal. Again we proceed similarly to~\cite{DaskalakisDiakonikolasServedioSTOC12} making the approach robust to continuous distributions and sample-only access. {\tt SlowTournament} performs pairwise comparisons between all  hypotheses in ${\cal H}$, using the subroutine {\tt ChooseHypothesis} of Lemma~\ref{lem:choosehypothesis}, and outputs a hypothesis that never lost to (but potentially tied with) other hypotheses. The running time of the algorithm is quadratic in $|{\cal H}|$, as all pairs of hypotheses are compared. {\tt FastTournament}, described in Section~\ref{sec:fast tournament}, organizes the tournament in a more efficient manner, improving the running time to quasilinear. The proof of Lemma~\ref{thm:slow tournament theorem} can be found in Appendix~\ref{sec: tournament appendix}.

\begin{lemma} \label{thm:slow tournament theorem}
There is an algorithm {\tt SlowTournament}$(X, {\cal H},\ve,\delta)$, which is given sample access to some distribution $X$ and a collection of distributions ${\cal H}=\{H_1,\ldots,H_N\}$ over some set ${\cal D}$, access to a PDF comparator for every pair of distributions $H_i, H_j \in {\cal H}$, an accuracy parameter $\ve >0$, and a confidence parameter $\delta >0$.  The algorithm makes
$m=O(\log(N/\delta) /\ve^2)$ draws from each of $X, H_1,\ldots,H_N$ and returns some $H \in {\cal H}$ or declares ``failure.''  If there is some $H^* \in {\cal H}$ such that $\dtv(H^*,X) \leq \ve$ then with probability at least $1-\delta$ the distribution $H$ that {\tt
SlowTournament} returns satisfies $\dtv(H,X) \leq 8 \ve.$ The total number of operations of the algorithm is $O\left( N^2 \log(N/\delta) /\ve^2 \right)$.
\end{lemma}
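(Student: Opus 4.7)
The plan is to run the pairwise comparator \texttt{ChooseHypothesis} from Lemma~\ref{lem:choosehypothesis} on every unordered pair $(H_i, H_j)$ with $i < j$, and return any hypothesis in $\mathcal{H}$ that never loses a comparison (declaring ``failure'' if no such hypothesis exists).

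First I would draw a single batch of $m = O(\log(N/\delta)/\ve^2)$ samples from each of $X, H_1, \ldots, H_N$ and reuse this batch across all $\binom{N}{2}$ pairwise calls. Setting the per-call confidence to $\delta' = \delta/\binom{N}{2}$, Lemma~\ref{lem:choosehypothesis} requires $O(\log(1/\delta')/\ve^2) = O(\log(N/\delta)/\ve^2)$ samples per call, matching the stated sample complexity. A union bound (which does \emph{not} require independence across calls) then gives that, with probability at least $1-\delta$, every one of the $\binom{N}{2}$ calls simultaneously satisfies the four guarantees of Lemma~\ref{lem:choosehypothesis}. Condition on this good event henceforth.

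It then suffices to establish two claims. \emph{(a) The benchmark $H^*$ never loses.} Fix any $H_j \in \mathcal{H}$. If $\dtv(X, H_j) \le 4\ve$, the triangle inequality gives $\dtv(H^*, H_j) \le \dtv(H^*,X) + \dtv(X,H_j) \le 5\ve$, so Property~4 causes the call on $(H^*, H_j)$ to declare a draw. Otherwise $\dtv(X, H_j) > 4\ve$, and Property~2, applied with $H_1 \leftarrow H^*$ and $H_2 \leftarrow H_j$, guarantees that $H_j$ is not declared winner. Either way $H^*$ does not lose to $H_j$, so at least one hypothesis (namely $H^*$) never loses. \emph{(b) Any $H$ that never loses satisfies $\dtv(H,X) \le 8\ve$.} Suppose, for contradiction, that some $H$ with $\dtv(H,X) > 8\ve$ never lost. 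Consider the pairwise call on $(H^*, H)$: since $\dtv(X, H^*) \le \ve$ and $\dtv(X, H) > 8\ve$, Property~1 forces $H^*$ to be declared winner, i.e., $H$ loses — contradicting the assumption. Combining (a) and (b), the algorithm returns a hypothesis within statistical distance $8\ve$ from $X$.

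The running time is dominated by the $\binom{N}{2} = O(N^2)$ invocations of \texttt{ChooseHypothesis}, each costing $O(\log(N/\delta)/\ve^2)$ operations, for a total of $O(N^2 \log(N/\delta)/\ve^2)$ as claimed. The only care points — not really obstacles — are to use the per-call confidence $\delta/\binom{N}{2}$ so the union bound yields overall failure probability $\le \delta$, and to notice that reusing one sample batch across all comparisons is harmless because the union bound needs no independence. The factor $8$ in the conclusion is exactly the gap between Properties~1 and~2 of Lemma~\ref{lem:choosehypothesis}, and propagates unchanged into the guarantee.
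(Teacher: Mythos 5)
Your proof is correct and follows essentially the same approach as the paper: run \texttt{ChooseHypothesis} on all pairs with a shared sample batch, return a never-loser, and argue (a) $H^*$ never loses via Property~4 (when the opponent is close) and Property~2 (when it is far), and (b) any never-loser is $8\ve$-close via Property~1. The only minor difference is that the paper union-bounds only over the $N-1$ comparisons involving $H^*$, using per-call confidence $\delta/(2N)$, whereas you union-bound over all $\binom{N}{2}$ calls with confidence $\delta/\binom{N}{2}$; both choices yield the same $O(\log(N/\delta)/\ve^2)$ sample complexity, so the difference is cosmetic.
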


\subsection{The Fast Tournament}\label{sec:fast tournament}

We prove our main result of this section, providing a quasi-linear time algorithm for selecting from a collection of hypothesis distributions ${\cal H}$ one that is close to a target distribution $X$, improving the running time of {\tt SlowTournament} from Lemma~\ref{thm:slow tournament theorem}. Intuitively, there are two cases to consider. Collection ${\cal H}$ is either dense or sparse in distributions that are close to $X$. In the former case, we show that we can sub-sample ${\cal H}$ before running ${\tt SlowTournament}$. In the latter case, we show how to set-up a two-phase tournament, whose first phase  eliminates all but a sub linear number of hypotheses, and whose second phase runs {\tt SlowTournament} on the surviving hypotheses. Depending on the density of ${\cal H}$ in distributions that are close to the target distribution $X$, we show that one of the aforementioned strategies is guaranteed to output a distribution that is close to $X$.  As we do not know a priori the density of ${\cal H}$ in distributions that are close to $X$, and hence which of our two strategies will succeed in finding a distribution that is close to $X$, we use both strategies and run a tournament among their outputs, using {\tt SlowTournament} again.\\

\begin{prevproof}{Theorem}{thm:tournament theorem}
Let $p$ be the fraction of the elements of ${\cal H}$ that are $8\ve$-close to $X$. The value of $p$ is unknown to our algorithm. Regardless, we propose two strategies for selecting a distribution from ${\cal H}$, one of which is guaranteed to succeed whatever the value of $p$ is. We assume throughout this proof that $N$ is larger than a sufficiently large constant,  otherwise our claim follows directly from Lemma~\ref{thm:slow tournament theorem}.

\paragraph{S1:} Pick a random subset ${\cal H}' \subseteq {\cal H}$ of size $\lceil {3 \sqrt{N}} \rceil$, and run {\tt SlowTournament}$(X,{\cal H}', 8\ve, e^{-3})$ to select some distribution $\tilde{H} \in {\cal H}'$. We show the following in Appendix~\ref{sec: tournament appendix}.

\begin{claim} \label{claim1:tournament}
The number of samples drawn by {\rm S1} from each of the distributions in ${\cal H} \cup \{X\}$ is $O({1\over \ve^2} \log N)$, and the total number of operations is $O({1\over \ve^2} N \log N)$. Moreover, if  $p \in [{1 \over \sqrt{N}},1]$ and there is some distribution in ${\cal H}$ that is $\ve$-close to $X$, then the distribution $\tilde{H}$ output by {\rm S1} is $64 \ve$-close to $X$, with probability at least $9/10$.\end{claim}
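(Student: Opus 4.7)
The plan is to verify the two parts of the claim (complexity and correctness) in turn by direct appeals to Lemma~\ref{thm:slow tournament theorem} applied to the sub-collection $\mathcal{H}'$, combined with a standard random-sampling argument.

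For the complexity bounds, I would substitute $|\mathcal{H}'| = \lceil 3\sqrt{N}\rceil$, accuracy $8\ve$, and confidence $e^{-3}$ into the guarantees of Lemma~\ref{thm:slow tournament theorem}. The sample complexity per distribution becomes
\[
O\!\left(\frac{\log(|\mathcal{H}'|/e^{-3})}{(8\ve)^2}\right) \;=\; O\!\left(\frac{\log N}{\ve^2}\right),
\]
which is drawn only from $X$ and from those distributions in $\mathcal{H}'$ (the remaining distributions in $\mathcal{H}\setminus\mathcal{H}'$ are never sampled). The total running time becomes
\[
O\!\left(\frac{|\mathcal{H}'|^2 \log(|\mathcal{H}'|/e^{-3})}{\ve^2}\right) \;=\; O\!\left(\frac{N \log N}{\ve^2}\right),
\]
matching the claimed bounds.

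For correctness, the key event is that $\mathcal{H}'$ contains at least one distribution that is $8\ve$-close to $X$. Since a $p$-fraction of $\mathcal{H}$ is $8\ve$-close to $X$, and since $p \ge 1/\sqrt{N}$ by assumption, a standard calculation shows
\[
\Pr\!\left[\mathcal{H}' \text{ contains no $8\ve$-close distribution}\right] \;\le\; (1-p)^{\lceil 3\sqrt{N}\rceil} \;\le\; \left(1-\tfrac{1}{\sqrt{N}}\right)^{3\sqrt{N}} \;\le\; e^{-3}
\]
(with a minor adjustment for sampling without replacement, which I would absorb into the constant $3$; taking $N$ sufficiently large as is already assumed). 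Condition on this event and apply Lemma~\ref{thm:slow tournament theorem} to $\mathcal{H}'$ with accuracy parameter $8\ve$ and confidence $e^{-3}$: since some $H^\star \in \mathcal{H}'$ satisfies $\dtv(H^\star,X) \le 8\ve$, the output $\tilde{H}$ satisfies $\dtv(\tilde{H},X) \le 8\cdot 8\ve = 64\ve$ except with probability $e^{-3}$.

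Finally I would union-bound the two failure events (random subsample missing a close hypothesis, and SlowTournament failing on the subsample), obtaining a total failure probability of at most $2e^{-3} < 1/10$, which yields the $9/10$ success probability. The only mildly delicate step is the random-sampling calculation — in particular making sure the constant $3$ in the subsample size $\lceil 3\sqrt{N}\rceil$ is large enough to drive the first failure probability below $e^{-3}$ under sampling without replacement — but this is a routine estimate and not a genuine obstacle. The hypothesis "$\ve$-close distribution exists in $\mathcal{H}$" plays no direct role in S1's analysis beyond ensuring $p \ge 1/N > 0$; the real driver is the assumption $p \ge 1/\sqrt{N}$.
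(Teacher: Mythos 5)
Your proof is correct and follows essentially the same approach as the paper's: bound the probability that $\mathcal{H}'$ misses all $8\ve$-close distributions by $(1-p)^{\lceil 3\sqrt{N}\rceil} \le e^{-3}$, apply Lemma~\ref{thm:slow tournament theorem} (with accuracy $8\ve$, confidence $e^{-3}$) to get a $64\ve$-close output conditional on a good subsample, and union-bound. One small note: your caution about sampling without replacement is unnecessary --- for a hypergeometric draw the no-hit probability is $\prod_{i=0}^{k-1}\frac{N-pN-i}{N-i} \le (1-p)^k$, so the bound already holds as stated with no adjustment to the constant $3$; this is why the paper uses $(1-p)^{\lceil 3\sqrt{N}\rceil}$ directly.
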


\paragraph{S2:} There are two phases in this strategy: 
\begin{itemize}
\item {\bf Phase 1:} This phase proceeds in {$T = \lfloor \log_2 {\sqrt{N} \over 2} \rfloor$} iterations, $i_1,\ldots,i_T$. Iteration $i_\ell$ takes as input a subset ${\cal H}_{i_{\ell-1}} \subseteq {\cal H}$ (where ${\cal H}_{i_0} \equiv {\cal H}$), and produces some ${\cal H}_{i_{\ell}} \subset {\cal H}_{i_{\ell-1}}$, such that $|{\cal H}_{i_{\ell}}| = \left\lceil {|{\cal H}_{i_{\ell-1}}| \over 2} \right\rceil$,  as follows: randomly pair up the elements of ${\cal H}_{i_{\ell-1}}$ (possibly one element is left unpaired), and for every pair $(H_i, H_j)$ run {\tt ChooseHypothesis}$(X,{H}_i,{H}_j,\ve,1/3N)$. We do this with a small caveat: instead of drawing $O(\log(3N) /\ve^2)$  fresh samples (as required by Lemma~\ref{lem:choosehypothesis}) in every execution of {\tt ChooseHypothesis} (from whichever distributions are involved in that execution), we draw
$O(\log(3N) /\ve^2)$ samples from each of $X, H_1,\ldots,H_N$ once and for all, and reuse the same samples in all executions of {\tt ChooseHypothesis}.

\item {\bf Phase 2:} Given the collection ${\cal H}_{i_T}$ output by Phase 1, we run {\tt SlowTournament}$(X,{\cal H}_{i_T}, \ve, 1/4)$ to select some distribution $\hat{H} \in {\cal H}_{i_T}$. (We use fresh samples for the execution of {\tt SlowTournament}.)
\end{itemize}
We show the following in Appendix~\ref{sec: tournament appendix}.
\begin{claim} \label{claim2:tournament}
The number of samples drawn by {\rm S2} from each of the distributions in ${\cal H} \cup \{X\}$ is $O({1\over \ve^2} \log N)$, and the total number of operations is $O({1 \over \ve^2} N \log N)$. Moreover, if $p \in (0,{1 \over \sqrt{N}}]$ and there is some distribution in ${\cal H}$ that is $\ve$-close to $X$, then the distribution $\hat{H}$ output by {\rm S2} is $8 \ve$-close to $X$, with probability at least $1/4$.\end{claim}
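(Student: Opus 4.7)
My plan is to handle the sample/time complexity (which is bookkeeping) and the correctness guarantee (which is the substantive part) separately. The resource bounds follow directly from the algorithm's description: Phase~1 draws $O(\log(3N)/\ve^2)=O(\log N/\ve^2)$ samples from each of $X,H_1,\ldots,H_N$ once and reuses them in every \texttt{ChooseHypothesis} call, and Phase~2 invokes \texttt{SlowTournament} on the $O(\sqrt{N})$ survivors with confidence $1/4$, which by Lemma~\ref{thm:slow tournament theorem} costs an additional $O(\log(\sqrt{N})/\ve^2)=O(\log N/\ve^2)$ samples per surviving distribution. Summing the geometric series of pair-comparisons across the $T$ iterations of Phase~1 (iteration $\ell$ performs $O(N/2^\ell)$ comparisons of cost $O(\log N/\ve^2)$ each) gives $O(N\log N/\ve^2)$ operations, and Lemma~\ref{thm:slow tournament theorem} applied to the $O(\sqrt{N})$ hypotheses in Phase~2 contributes another $O(N\log N/\ve^2)$.

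For correctness, I will fix some $H^* \in {\cal H}$ with $\dtv(H^*,X)\le \ve$ (guaranteed to exist by assumption) and let $A=\{H\in {\cal H}: \dtv(H,X)\le 8\ve\}$, so that $H^*\in A$ and $|A|\le pN\le \sqrt{N}$. The overall strategy is to show that $H^*$ itself survives all of Phase~1 with probability at least $1/3$; conditional on this, the Phase~2 call to \texttt{SlowTournament} with confidence $1/4$ receives an input that contains an $\ve$-close distribution and hence returns an $8\ve$-close one with probability at least $3/4$ by Lemma~\ref{thm:slow tournament theorem}, giving the desired $1/3\cdot 3/4 = 1/4$ overall.

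To bound the survival probability of $H^*$, I will decouple the randomness of the (shared) Phase~1 samples from the randomness of the pairings, which are independent. On the sample side, define the event $E$: for every $H_j$ with $\dtv(H_j,X)>8\ve$, the Phase~1 samples are such that a hypothetical execution of \texttt{ChooseHypothesis}$(X,H^*,H_j,\ve,1/(3N))$ on them would declare $H^*$ the winner. By Property~1 of Lemma~\ref{lem:choosehypothesis} and a union bound over the at most $N$ such $H_j$, $\Pr[E]\ge 2/3$. Under $E$ the only way $H^*$ can be eliminated in Phase~1 is by being paired with another member of $A$, so it remains to bound the probability, over the random pairings, that $H^*$ is ever paired with an $A$-distribution. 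Conditional on $H^*\in {\cal H}_{i_\ell}$, its partner in round $\ell+1$ is uniform on ${\cal H}_{i_\ell}\setminus\{H^*\}$, which has size $\lceil N/2^\ell\rceil-1$ and contains at most $|A|-1\le \sqrt{N}-1$ other $A$-hypotheses. A union bound over $\ell=0,\ldots,T-1$, using $T=\lfloor \log_2(\sqrt{N}/2)\rfloor$ (so that $2^\ell\le \sqrt{N}/4$ throughout and $N/2^\ell-1\ge (1-o(1))\,N/2^\ell$), gives $\sum_{\ell=0}^{T-1}\sqrt{N}/(N/2^\ell-1) \le (1+o(1))(2^T-1)/\sqrt{N}\le 1/2$ for $N$ larger than a sufficiently large constant, which completes the survival bound and thus the proof.

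The main obstacle I anticipate is the coupling introduced by the shared Phase~1 samples: every pair-comparison uses the same $O(\log N/\ve^2)$ batch, so the $\binom{N}{2}$ potential comparison outcomes are far from independent, and one cannot simply apply Lemma~\ref{lem:choosehypothesis} inductively across rounds. Packaging all of this sample-side dependence into the single event $E$---which bundles Property~1 of Lemma~\ref{lem:choosehypothesis} over the at most $N$ matchups involving $H^*$---is what allows the random pairings to be treated as an independent combinatorial process, after which the survival bound reduces to a clean geometric sum that exactly matches the choice of $T$.
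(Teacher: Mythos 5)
Your proposal is correct and mirrors the paper's proof at every structural joint: the same decoupling of the shared-sample randomness from the pairing randomness, the same event $E$ bundling Property~1 of Lemma~\ref{lem:choosehypothesis} over matchups involving $H^*$ with a union bound giving $\Pr[E]\ge 2/3$, and the same $1/3 \cdot 3/4 = 1/4$ assembly at the end. The only place you genuinely deviate is in bounding the probability (given $E$) that $H^*$ is ever paired with an $8\ve$-close hypothesis: you use an additive union bound over rounds, $\sum_{\ell=0}^{T-1} |A|/(|{\cal H}_{i_\ell}|-1) \le 1/2$, whereas the paper multiplies conditional survival probabilities $(1-p)(1-2p)\cdots(1-2^{T-1}p) \ge 2^{-2p(2^T-1)} \ge 1/2$. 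Both routes cash out to the same $1/2$ and rest on the same counting fact (${\cal H}_{i_\ell} \subseteq {\cal H}$ so at most $|A| \le pN \le \sqrt{N}$ close hypotheses survive), and the same choice of $T = \lfloor \log_2(\sqrt{N}/2)\rfloor$ is what makes either calculation close. Your union-bound version is arguably a bit more elementary and transparent about where the $\sqrt{N}$ cap on $|A|$ is being used; the paper's product version avoids the $N/2^\ell - 1$ versus $N/2^\ell$ bookkeeping you have to sweep into an $o(1)$ term. Either is a legitimate way to land the estimate.
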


Given strategies {\rm S1} and {\rm S2}, we first design an algorithm which has the stated worst-case number of operations.
The algorithm {\tt FastTournament}$_A$ works as follows:
\begin{enumerate}
\item Execute strategy {\rm S1} $k_1=\log_2 {2 \over \delta}$ times, with fresh samples each time. Let $\tilde{H}_1,\ldots,\tilde{H}_{k_1}$ be the distributions output by these executions.
\item Execute strategy {\rm S2} $k_2=\log_4 {2 \over \delta}$ times, with fresh samples each time. Let $\hat{H}_1,\ldots,\hat{H}_{k_2}$ be the distributions output by these executions.
\item Set ${\cal G} \equiv \{\tilde{H}_1,\ldots,\tilde{H}_{k_1}, \hat{H}_1,\ldots,\hat{H}_{k_2}\}$. Execute {\tt SlowTournament}$(X, {\cal G}, 64 \ve, \delta/2)$.
\end{enumerate}
\begin{claim}\label{claim:fast tournament a final claim}
{\tt FastTournament}$_A$ satisfies the properties described in the statement of Theorem~\ref{thm:tournament theorem}, except for the bound on the expected number of operations.
\end{claim}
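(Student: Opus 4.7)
The plan is to establish, in order, the three worst-case properties that Theorem~\ref{thm:tournament theorem} asserts of {\tt FastTournament}$_A$: the per-distribution sample complexity of $O((\log(1/\delta)\log N)/\ve^2)$, the worst-case operation count of $O((\log(1/\delta)/\ve^2)(N\log N + \log^2(1/\delta)))$, and the correctness guarantee that the returned hypothesis is $512\ve$-close to $X$ with probability at least $1-\delta$. The key enabling feature is that Steps~1, 2, and~3 use mutually independent samples, and so do the individual executions within each of the first two steps; this legitimizes the amplification arguments below.

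For the sample and operation bounds, I would simply aggregate the per-execution counts. Each of the $k_1 + k_2 = O(\log(1/\delta))$ runs of S1 or S2 draws $O((\log N)/\ve^2)$ samples from every distribution in ${\cal H} \cup \{X\}$ and performs $O((N\log N)/\ve^2)$ operations by Claims~\ref{claim1:tournament} and~\ref{claim2:tournament}, so Steps~1 and~2 together contribute $O((\log(1/\delta)\log N)/\ve^2)$ samples per distribution and $O((\log(1/\delta)\cdot N\log N)/\ve^2)$ operations. Step~3 invokes {\tt SlowTournament}$(X,{\cal G},64\ve,\delta/2)$ on $|{\cal G}| = k_1+k_2 = O(\log(1/\delta))$ hypotheses; Lemma~\ref{thm:slow tournament theorem} then gives an additional $O(\log(1/\delta)/\ve^2)$ samples per distribution and $O(|{\cal G}|^2 \log(|{\cal G}|/\delta)/\ve^2) = O(\log^3(1/\delta)/\ve^2)$ operations. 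Summing the two contributions yields the claimed bounds.

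For correctness, I would case-split on the unknown density $p$ of $8\ve$-close hypotheses in ${\cal H}$. When $p \geq 1/\sqrt{N}$, Claim~\ref{claim1:tournament} guarantees that each S1 run independently outputs a $64\ve$-close candidate with probability at least $9/10$, so the probability that none of $\tilde{H}_1,\ldots,\tilde{H}_{k_1}$ is $64\ve$-close to $X$ is at most $(1/10)^{k_1}$, which is at most $\delta/2$ for $k_1 = \log_2(2/\delta)$, since $(1/10)^{\log_2(2/\delta)} = (\delta/2)^{\log_2 10}$ and $\log_2 10 > 1$. When $p < 1/\sqrt{N}$, Claim~\ref{claim2:tournament} gives an $8\ve$-close candidate per S2 execution with probability at least $1/4$, and $(3/4)^{k_2}$ bounds the probability that none of $\hat{H}_1,\ldots,\hat{H}_{k_2}$ succeeds; with $k_2 = \Theta(\log(1/\delta))$ chosen so that $(3/4)^{k_2} \leq \delta/2$, we again obtain failure probability at most $\delta/2$. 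Either way, with probability at least $1-\delta/2$ the set ${\cal G}$ contains a hypothesis that is $64\ve$-close to $X$. Applying Lemma~\ref{thm:slow tournament theorem} to the final {\tt SlowTournament}$(X,{\cal G},64\ve,\delta/2)$ call yields, with probability at least $1-\delta/2$, a hypothesis that is $8 \cdot 64\ve = 512\ve$-close to $X$; a union bound over the two failure events gives overall failure probability at most $\delta$.

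The main obstacle I anticipate is the constant-chasing in the second case of the correctness argument: the per-execution success probability $1/4$ of S2 is slack enough that the literal choice $k_2 = \log_4(2/\delta)$ in the algorithm may need to be rescaled to $c\log(1/\delta)$ for a sufficiently large absolute constant $c$ in order to drive $(3/4)^{k_2}$ below $\delta/2$. This adjustment does not affect any asymptotic bound, since $k_2$ enters only as $O(\log(1/\delta))$. The remaining work is bookkeeping that plugs Claims~\ref{claim1:tournament} and~\ref{claim2:tournament} together with Lemma~\ref{thm:slow tournament theorem}, whose guarantees have already been established.
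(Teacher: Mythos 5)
Your proposal is correct and follows essentially the same route as the paper's own proof: aggregate the sample and operation counts from Claims~\ref{claim1:tournament} and~\ref{claim2:tournament} and Lemma~\ref{thm:slow tournament theorem}, then case-split on $p$ and argue that with probability $1-\delta/2$ some element of ${\cal G}$ is $64\ve$-close (or $8\ve$-close) to $X$, after which {\tt SlowTournament}$(X,{\cal G},64\ve,\delta/2)$ yields a $512\ve$-close output with probability $1-\delta/2$. Your observation about $k_2$ is a genuine (if minor) sharpening that the paper glosses over: with per-run success probability $1/4$, the failure bound is $(3/4)^{k_2}$, and the literal choice $k_2 = \log_4(2/\delta)$ gives $(3/4)^{\log_4(2/\delta)} = (\delta/2)^{\log_4(4/3)} \approx (\delta/2)^{0.21}$, which exceeds $\delta/2$; one should take $k_2 = \log_{4/3}(2/\delta)$ (or $c\log(1/\delta)$ for any $c \geq 1/\log_2(4/3)$), which leaves all asymptotics unchanged, exactly as you note. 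Your remaining bookkeeping for samples and operations matches the paper.
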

\begin{prevproof}{Claim}{claim:fast tournament a final claim}
The bounds on the number of samples and operations follow immediately from our choice of $k_1, k_2$, Claims~\ref{claim1:tournament} and~\ref{claim2:tournament}, and Lemma~\ref{thm:slow tournament theorem}. Let us justify the correctness of the algorithm. Suppose that there is some distribution in ${\cal H}$ that is $\ve$-close to $X$. We distinguish two cases, depending on the fraction $p$ of distributions in ${\cal H}$ that are $\ve$-close to $X$:
\begin{itemize}
\item $p \in [{1 \over \sqrt{N}},1]$: In this case, each execution of {\rm S1} has probability at least $9/10$ of outputting a distribution that is $64 \ve$-close to $X$. So the probability that none of $\tilde{H}_1,\ldots,\tilde{H}_{k_1}$ is $64 \ve$-close to $X$ is at most $({1 \over 10})^{k_1} \le \delta/2$. Hence, with probability at least $1-{\delta/2}$, $\cal G$ contains a distribution that is $64\ve$-close to $X$. Conditioning on this, {\tt SlowTournament}$(X, {\cal G}, 64 \ve, \delta/2)$ will output a distribution that is $512\ve$-close to $X$ with probability at least $1-\delta/2$, by Lemma~\ref{thm:slow tournament theorem}. Hence, with overall probability at least $1-\delta$, the distribution output $\text{by {\tt FastTournament} is $512 \ve$-close~to~$X$.}$

\item $p \in (0,{1 \over \sqrt{N}}]$: This case is analyzed analogously. With probability at least $1-\delta/2$, at least one of $\hat{H}_1,\ldots,\hat{H}_{k_2}$ is $8\ve$-close to $X$ (by Claim~\ref{claim2:tournament}). Conditioning on this, {\tt SlowTournament}$(X, {\cal G}, 64 \ve, \delta/2)$ outputs a distribution that is $512\ve$-close to $X$, with probability at least $1-\delta/2$ (by Lemma~\ref{thm:slow tournament theorem}). So, with overall probability at least $1-\delta$, the distribution output {by {\tt FastTournament} is $512 \ve$-close~to~$X$.}
\end{itemize}
\end{prevproof}

We now describe an algorithm which has the stated expected number of operations.
The algorithm {\tt FastTournament}$_B$ works as follows:
\begin{enumerate}
\item Execute strategy {\rm S1}, let $\tilde{H}_1$ be the distribution output by this execution.
\item Execute strategy {\rm S2}, let $\tilde{H}_2$ be the distribution output by this execution.
\item Execute {\tt ChooseHypothesis}$(X,\tilde{H}_i,H,64\ve,\delta/N^3)$ for $i \in \{1,2\}$ and all $H \in \mathcal{H}$.
  If either $\tilde{H}_1$ or $\tilde{H}_2$ never loses, output that hypothesis.
  Otherwise, remove $\tilde{H}_1$ and $\tilde{H}_2$ from $\mathcal{H}$, and repeat the algorithm starting from step $1$, unless ${\cal H}$ is empty.
\end{enumerate}

\begin{claim}\label{claim:fast tournament b final claim}
{\tt FastTournament}$_B$ satisfies the properties described in the statement of Theorem~\ref{thm:tournament theorem}, except for the worst-case bound on the number of operations.
\end{claim}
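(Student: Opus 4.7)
The plan is to analyze {\tt FastTournament}$_B$ iteration by iteration. At the start of each iteration there is still a guaranteed $\ve$-close hypothesis $H^* \in \mathcal H$, and I will establish three facts, each a consequence of Lemma~\ref{lem:choosehypothesis} applied at accuracy parameter $64\ve$: (i) any $\tilde H_i$ with $\dtv(\tilde H_i, X) \le 64\ve$ never loses any comparison in step~3 and is therefore output; (ii) any $\tilde H_i$ with $\dtv(\tilde H_i, X) > 512\ve$ loses its comparison against $H^*$, provided $H^* \in \mathcal H$; and (iii) $H^*$ is never silently removed from $\mathcal H$, so it remains available as an ``executioner'' in every non-terminating iteration.

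For (i), I do a case split on the opponent $H$: if $\dtv(H, X) > 4\cdot 64\ve = 256\ve$, Property~2 of Lemma~\ref{lem:choosehypothesis} ensures $H$ is not declared winner, so $\tilde H_i$ does not lose; otherwise $\dtv(H, X) \le 256\ve$, and the triangle inequality yields $\dtv(\tilde H_i, H) \le 320\ve = 5\cdot 64\ve$, so Property~4 forces a draw. Fact (ii) comes from Property~1 with $H^*$ in the ``close'' slot, since $\dtv(H^*, X) \le \ve \le 64\ve$ and $\dtv(\tilde H_i, X) > 8 \cdot 64\ve$. Fact (iii) reduces to (i) applied with $\tilde H_i = H^*$: whenever $H^*$ is picked as one of the two candidates, (i) forces the algorithm to output $H^*$ rather than remove it, so $H^*$ can only leave $\mathcal H$ simultaneously with termination on $H^*$ itself.

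A union bound over the at most $2N$ comparisons per iteration and the at most $N/2$ possible iterations, each configured at confidence $\delta/N^3$, shows the conjunction of (i)--(iii) holds throughout the run with probability at least $1 - 2\delta/N$. Conditioning on this event, if the algorithm ever outputs a hypothesis $\tilde H$, then $\dtv(\tilde H, X) \le 512\ve$, since otherwise (ii) and (iii) would have made $\tilde H$ lose to $H^*$, contradicting the output rule. For termination, Claims~\ref{claim1:tournament} and~\ref{claim2:tournament} together guarantee that, regardless of the current fraction $p$ of $\ve$-close members in $\mathcal H$, at least one of S1, S2 outputs a $64\ve$-close $\tilde H_i$ with probability $\ge 1/4$, which by (i) triggers termination. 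Hence the number of iterations is stochastically dominated by a geometric random variable of success probability $1/4$, giving expectation $O(1)$ and an $O(\log(1/\delta))$ high-probability tail.

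Multiplying the per-iteration cost of $O(N \log N / \ve^2)$ for S1, S2 plus $O(N \log(N/\delta)/\ve^2)$ for the $2N$ calls to {\tt ChooseHypothesis} by the expected $O(1)$ iterations yields the expected operation count $O(N \log(N/\delta)/\ve^2)$. The per-distribution sample complexity $O(\log(1/\delta) \log N/\ve^2)$ follows similarly from the $O(\log(1/\delta))$ high-probability iteration bound, combined with reuse of the samples drawn once for S1/S2 across the ChooseHypothesis calls within each iteration. The main obstacle I expect is fact (iii): I must carefully justify that $H^*$ persists in $\mathcal H$ throughout all non-terminating iterations, because without it Claims~\ref{claim1:tournament} and~\ref{claim2:tournament} would cease to apply to the shrinking pool in later iterations; the argument above resolves this by observing that the only way $H^*$ leaves $\mathcal H$ is via an iteration that immediately outputs $H^*$---a favourable event, not a failure mode.
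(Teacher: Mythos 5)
Your facts (i)--(iii) and the union bound over the at most $N^2$ {\tt ChooseHypothesis} executions reproduce, with a somewhat more explicit case analysis, exactly the paper's correctness argument (the two cases $\dtv(X,\tilde H)\le 64\ve$ and $\dtv(X,\tilde H)>512\ve$ plus the persistence of a $64\ve$-close hypothesis in $\mathcal H$), and your geometric-domination argument for the expected number of iterations via Claims~\ref{claim1:tournament} and~\ref{claim2:tournament} matches the paper as well. For these parts the proposal is correct and takes essentially the same approach.

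The gap is in the sample-complexity accounting. You deduce the stated $O(\log(1/\delta)\log N/\ve^2)$ draw bound from ``the $O(\log(1/\delta))$ high-probability iteration bound, combined with reuse of the samples drawn once for S1/S2 across the ChooseHypothesis calls within each iteration.'' Two problems arise. First, a high-probability bound on the number of iterations yields only a high-probability (or in-expectation) bound on the number of draws, whereas the theorem's sample bound is a worst-case guarantee; fresh per-iteration sampling leaves a worst-case sample count of order $N\log N/\ve^2$. Second, the samples actually drawn inside S1 and S2 are only $O(\log N/\ve^2)$ per distribution (their internal SlowTournament/ChooseHypothesis calls use fixed constant confidence parameters such as $e^{-3}$ and $1/4$, not $\delta$), and they are drawn from $X$ and a sub-sampled subset of $\mathcal H$, not from every $H_i$; yet each Step~3 {\tt ChooseHypothesis} call is run at confidence $\delta/N^3$ against an arbitrary $H\in\mathcal H$, requiring $O(\log(N^3/\delta)/\ve^2)$ samples from each of $X,H_1,\dots,H_N$. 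When $\log(1/\delta)\gg\log N$ the S1/S2 samples are simply insufficient. The paper sidesteps both issues by drawing a single batch of $O(\log(N^3/\delta)/\ve^2)$ samples from each of $X,H_1,\dots,H_N$ once, \emph{before} the loop starts, and reusing this batch for every {\tt ChooseHypothesis} execution; this pre-draw is what makes the $O(\log(1/\delta)\log N/\ve^2)$ sample bound deterministic, and your argument should invoke it in place of the high-probability iteration count.
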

\begin{prevproof}{Claim}{claim:fast tournament b final claim}
  We note that we will first draw $O(\log(N^3/\d)/\ve^2)$ from each of $X, H_1,\dots,H_N$ and use the same samples for every execution of {\tt ChooseHypothesis} to avoid blowing up the sample complexity.
  Using this fact, the sample complexity is as claimed.

  We now justify the correctness of the algorithm.
  Since we run {\tt ChooseHypothesis} on a given pair of hypotheses at most once, there are at most $N^2$ executions of this algorithm.
  Because each fails with probability at most $\frac{\d}{N^3}$, by the union bound, the probability that any execution of {\tt ChooseHypothesis} ever fails is at most $\d$, so all executions succeed with probability at least $1 - {\d \over N}$. Condition on this happening for the remainder of the proof of correctness. In Step 3 of our algorithm, we compare some $\tilde{H}$ with every other hypothesis. We analyze two cases:
  \begin{itemize}
    \item Suppose that $\dtv(X,\tilde{H}) \leq 64\ve$.
      By Lemma~\ref{lem:choosehypothesis}, $\tilde{H}$ will never lose, and will be output by {\tt FastTournament}$_B$.
    \item Suppose that $\dtv(X,\tilde{H}) > 512\ve$.
      Then by Lemma~\ref{lem:choosehypothesis}, $\tilde{H}$ will lose to any candidate $H'$ with $\dtv(X,H') \leq 64\ve$.
      We assumed there exists at least one hypothesis with this property in the beginning of the algorithm.
      Furthermore, by the previous case, if this hypothesis were selected by {\rm S1} or {\rm S2} at some prior step, the algorithm would have terminated; so in particular, if the algorithm is still running, this hypothesis could not have been removed from $\mathcal{H}$.
      Therefore, $\tilde{H}$ will lose at least once and will not be output by {\tt FastTournament}$_B$.
  \end{itemize}
  The correctness of our algorithm follows from the second case above. Indeed, if the algorithm outputs a distribution $\tilde{H}$, it must be the case that $\dtv(X,\tilde{H}) \leq 512\ve$. 
  Moreover, we will not run out of hypotheses before we output a distribution.
  Indeed, we only discard a hypothesis if it was selected by {\rm S1} or {\rm S2} and then lost at least once in Step 3. Furthermore, in the beginning of our algorithm there exists a distribution $H$ such that $\dtv(X,H) \leq 64\ve$. If ever selected  by {\rm S1} or {\rm S2}, $H$ will not lose to any distribution in Step 3, and the algorithm will output a distribution. If it is not selected by {\rm S1} or {\rm S2},  $H$ won't be removed from ${\cal H}$.

  We now reason about the expected running time of our algorithm. 
  First, consider the case when all executions of {\tt ChooseHypothesis} are successful, which happens with probability $\geq 1 - \frac{\d}{N}$.
  If either {\rm S1} or {\rm S2} outputs a distribution such that $\dtv(X,\tilde{H}) \leq 64\ve$, then by the first case above it will be output by {\tt FastTournament}$_B$.
  If this happened with probability at least $p$ independently in every iteration of our algorithm, then the number of iterations would be stochastically dominated by a geometric random variable with parameter $p$, so the expected number of rounds would be upper bounded by $\frac{1}{p}$.
  By Claims~\ref{claim1:tournament} and~\ref{claim2:tournament}, $p \geq \frac14$, so, when {\tt ChooseHypothesis} never fails, the expected number of rounds is at most $4$.
  Next, consider when at least one execution of {\tt ChooseHypothesis} fails, which happens with probability $\leq \frac{\d}{N}$.
  Since {\tt FastTournament}$_B$ removes at least one hypothesis in every round, there are at most $N$ rounds.
  Combining these two cases, the expected number of rounds is at most $(1 - \frac{\d}{N})4 + \frac{\d}{N}N \leq 5$.
  
  By Claims~\ref{claim1:tournament} and~\ref{claim2:tournament} and Lemma~\ref{lem:choosehypothesis}, each round requires $O(N\log N + N \log{N/\d})$ operations.
  Since the expected number of rounds is $O(1)$, we obtain the desired bound on the expected number of operations.
\end{prevproof}

In order to obtain all the guarantees of the theorem simultaneously, our algorithm {\tt FastTournament} will alternate between steps of {\tt FastTournament}$_A$ and {\tt FastTournament}$_B$, where both algorithms are given an error parameter equal to $\frac{\d}{2}$.
If either algorithm outputs a hypothesis, {\tt FastTournament} outputs it.
By union bound and Claims~\ref{claim:fast tournament a final claim} and \ref{claim:fast tournament b final claim}, both {\tt FastTournament}$_A$ and {\tt FastTournament}$_B$ will  be correct with probability at least $1 - \d$.
The worst-case running time is as desired by Claim~\ref{claim:fast tournament a final claim} and since interleaving between steps of the two tournaments will multiply the number of steps by a factor of at most $2$.
We have the expected running time similarly, by Claim~\ref{claim:fast tournament b final claim}.

\end{prevproof}

\section{Proof of Theorem~\ref{thm:main theorem}}

Theorem~\ref{thm:main theorem} is an immediate consequence of Theorems~\ref{thm:list of candidates} and~\ref{thm:tournament theorem}. Namely,  we run the algorithm of Theorem~\ref{thm:list of candidates} to produce a collection of Gaussian mixtures, one of which is within $\ve$ of the unknown mixture $F$. Then we use {\tt FastTournament} of Theorem~\ref{thm:tournament theorem} to select from among the candidates a mixture that is $O(\ve)$-close to $F$. For the execution of {\tt FastTournament}, we need a PDF comparator for all pairs of candidate mixtures in our collection. Given that these are described with their parameters, our PDF comparators evaluate the densities of two given mixtures at a challenge point $x$ and decide which one is largest. We also need sample access to our candidate mixtures. Given a parametric description $(w, \mu_1, \sigma_1^2, \mu_2, \sigma_2^2 )$ of a mixture, we can draw a sample from it as follows: first draw a uniform $[0,1]$ variable whose value compared to $w$ determines whether to sample from ${\cal N}(\mu_1, \sigma_1^2)$ or ${\cal N}(\mu_2, \sigma_2^2)$ in the second step; for the second step, use the Box-Muller transform~\cite{BoxM58} to obtain sample from either ${\cal N}(\mu_1, \sigma_1^2)$ or ${\cal N}(\mu_2, \sigma_2^2)$ as decided in the first step.

\bibliographystyle{alpha}
\bibliography{biblio}

\newcommand{\etalchar}[1]{$^{#1}$}
\begin{thebibliography}{KMR{\etalchar{+}}94}

\bibitem[AJOS14a]{AJOS14a}
Jayadev Acharya, Ashkan Jafarpour, Alon Orlitsky, and Ananda~Theerta Suresh.
\newblock Near-optimal-sample estimators for spherical {G}aussian mixtures.
\newblock {\em Online manuscript}, 2014.

\bibitem[AJOS14b]{AJOS14b}
Jayadev Acharya, Ashkan Jafarpour, Alon Orlitsky, and Ananda~Theertha Suresh.
\newblock Sorting with adversarial comparators and application to density
  estimation.
\newblock In {\em Proceedings of the 2014 IEEE International Symposium on
  Information Theory}, ISIT '14, Washington, DC, USA, 2014. IEEE Computer
  Society.

\bibitem[AK01]{AroraKannanSTOC01}
Sanjeev Arora and Ravi Kannan.
\newblock Learning mixtures of arbitrary {G}aussians.
\newblock In {\em Proceedings of the 33rd Annual ACM Symposium on the Theory of
  Computing}, STOC '01, pages 247--257, New York, NY, USA, 2001. ACM.

\bibitem[AM05]{AchlioptasMcSherryCOLT05}
Dimitris Achlioptas and Frank McSherry.
\newblock On spectral learning of mixtures of distributions.
\newblock In {\em Proceedings of the 18th Annual Conference on Learning
  Theory}, COLT '05, pages 458--469. Springer, 2005.

\bibitem[BM58]{BoxM58}
G.~E.~P. Box and Mervin~E. Muller.
\newblock A note on the generation of random normal deviates.
\newblock {\em The Annals of Mathematical Statistics}, 29(2):610--611, June
  1958.

\bibitem[BS10]{BelkinSinhaFOCS10}
Mikhail Belkin and Kaushik Sinha.
\newblock Polynomial learning of distribution families.
\newblock In {\em Proceedings of the 51st Annual IEEE Symposium on Foundations
  of Computer Science}, FOCS '10, pages 103--112, Washington, DC, USA, 2010.
  IEEE Computer Society.

\bibitem[BV08]{BrubakerVempalaFOCS08}
Spencer~Charles Brubaker and Santosh Vempala.
\newblock Isotropic {PCA} and affine-invariant clustering.
\newblock In {\em Proceedings of the 49th Annual IEEE Symposium on Foundations
  of Computer Science}, FOCS '08, pages 551--560, Washington, DC, USA, 2008.
  IEEE Computer Society.

\bibitem[CDSS13]{ChanDSS13a}
Siu~On Chan, Ilias Diakonikolas, Rocco~A. Servedio, and Xiaorui Sun.
\newblock Learning mixtures of structured distributions over discrete domains.
\newblock In {\em Proceedings of the 24th Annual ACM-SIAM Symposium on Discrete
  Algorithms}, SODA '13, pages 1380--1394, Philadelphia, PA, USA, 2013. SIAM.

\bibitem[CDSS14]{ChanDSS13b}
Siu~On Chan, Ilias Diakonikolas, Rocco~A. Servedio, and Xiaorui Sun.
\newblock Efficient density estimation via piecewise polynomial approximation.
\newblock In {\em Proceedings of the 46th Annual ACM Symposium on the Theory of
  Computing}, STOC '14, New York, NY, USA, 2014. ACM.

\bibitem[Das99]{DasguptaFOCS99}
Sanjoy Dasgupta.
\newblock Learning mixtures of {G}aussians.
\newblock In {\em Proceedings of the 40th Annual IEEE Symposium on Foundations
  of Computer Science}, FOCS '99, pages 634--644, Washington, DC, USA, 1999.
  IEEE Computer Society.

\bibitem[DDO{\etalchar{+}}13]{DaskalakisDOST13}
Constantinos Daskalakis, Ilias Diakonikolas, Ryan O'Donnell, Rocco~A. Servedio,
  and Li~Yang Tan.
\newblock Learning sums of independent integer random variables.
\newblock In {\em Proceedings of the 54th Annual IEEE Symposium on Foundations
  of Computer Science}, FOCS '13, pages 217--226, Washington, DC, USA, 2013.
  IEEE Computer Society.

\bibitem[DDS12]{DaskalakisDiakonikolasServedioSTOC12}
Constantinos Daskalakis, Ilias Diakonikolas, and Rocco~A. Servedio.
\newblock Learning {P}oisson binomial distributions.
\newblock In {\em Proceedings of the 44th Annual ACM Symposium on the Theory of
  Computing}, STOC '12, pages 709--728, New York, NY, USA, 2012. ACM.

\bibitem[DKW56]{dvoretzky1956}
A.~Dvoretzky, J.~Kiefer, and J.~Wolfowitz.
\newblock Asymptotic minimax character of the sample distribution function and
  of the classical multinomial estimator.
\newblock {\em The Annals of Mathematical Statistics}, 27(3):642--669, 09 1956.

\bibitem[DL96]{DL96}
Luc Devroye and G\'abor Lugosi.
\newblock A universally acceptable smoothing factor for kernel density
  estimation.
\newblock {\em The Annals of Statistics}, 24:2499--2512, 1996.

\bibitem[DL97]{DL97}
Luc Devroye and G\'abor Lugosi.
\newblock Nonasymptotic universal smoothing factors, kernel complexity and
  yatracos classes.
\newblock {\em The Annals of Statistics}, 25:2626--2637, 1997.

\bibitem[DL01]{DL:01}
Luc Devroye and G\'abor Lugosi.
\newblock {\em Combinatorial methods in density estimation}.
\newblock Springer, 2001.

\bibitem[DLR77]{Dempster-Laird-Rubin77}
A.~P. Dempster, N.~M. Laird, and D.~B. Rubin.
\newblock {Maximum Likelihood from Incomplete Data via the EM Algorithm}.
\newblock {\em Journal of the Royal Statistical Society. Series B
  (Methodological)}, 39(1):1--38, 1977.

\bibitem[FOS06]{FeldmanOS06}
Jon Feldman, Ryan O'Donnell, and Rocco~A. Servedio.
\newblock {PAC} learning axis-aligned mixtures of {G}aussians with no
  separation assumption.
\newblock In {\em Proceedings of the 19th Annual Conference on Learning
  Theory}, COLT '06, pages 20--34, Berlin, Heidelberg, 2006. Springer-Verlag.

\bibitem[GS02]{probmetricsreview}
Alison~L. Gibbs and Francis~E. Su.
\newblock {On choosing and bounding probability metrics}.
\newblock {\em International Statistical Review}, 70(3):419--435, December
  2002.

\bibitem[HK13]{Hsu2013}
Daniel Hsu and Sham~M. Kakade.
\newblock Learning mixtures of spherical {G}aussians: Moment methods and
  spectral decompositions.
\newblock In {\em Proceedings of the 4th Conference on Innovations in
  Theoretical Computer Science}, ITCS '13, pages 11--20, New York, NY, USA,
  2013. ACM.

\bibitem[KMR{\etalchar{+}}94]{KearnsMRRSSellieSTOC94}
Michael Kearns, Yishay Mansour, Dana Ron, Ronitt Rubinfeld, Robert~E. Schapire,
  and Linda Sellie.
\newblock On the learnability of discrete distributions.
\newblock In {\em Proceedings of the 26th Annual ACM Symposium on the Theory of
  Computing}, STOC '94, pages 273--282, New York, NY, USA, 1994. ACM.

\bibitem[KMV10]{KalaiMV10}
Adam~Tauman Kalai, Ankur Moitra, and Gregory Valiant.
\newblock Efficiently learning mixtures of two {G}aussians.
\newblock In {\em Proceedings of the 42nd Annual ACM Symposium on the Theory of
  Computing}, STOC '10, pages 553--562, New York, NY, USA, 2010. ACM.

\bibitem[Mas90]{massart1990}
P.~Massart.
\newblock The tight constant in the {D}voretzky-{K}iefer-{W}olfowitz
  inequality.
\newblock {\em The Annals of Probability}, 18(3):1269--1283, 07 1990.

\bibitem[MV10]{MoitraValiantFOCS10}
Ankur Moitra and Gregory Valiant.
\newblock Settling the polynomial learnability of mixtures of {G}aussians.
\newblock In {\em Proceedings of the 51st Annual IEEE Symposium on Foundations
  of Computer Science}, FOCS '10, pages 93--102, Washington, DC, USA, 2010.
  IEEE Computer Society.

\bibitem[VW02]{VempalaWang04}
Santosh Vempala and Grant Wang.
\newblock A spectral algorithm for learning mixtures of distributions.
\newblock In {\em Proceedings of the 43rd Annual IEEE Symposium on Foundations
  of Computer Science}, FOCS '02, pages 113--123, Washington, DC, USA, 2002.
  IEEE Computer Society.

\bibitem[Yat85]{Yatracos85}
Yannis~G. Yatracos.
\newblock Rates of convergence of minimum distance estimators and kolmogorov's
  entropy.
\newblock {\em The Annals of Statistics}, 13(2):768--774, 1985.

\end{thebibliography}
\appendix
\section{Gridding}
\label{app:grid}
We will encounter settings where we have bounds $L$ and $R$ on an unknown measure $X$ such that $L \leq X \leq R$, and wish to obtain an estimate $\hat X$ such that $(1 - \ve)X \leq \hat X \leq (1 + \ve)X$.
Gridding is a common technique to generate a list of candidates that is guaranteed to contain such an estimate.
\begin{fact}
  \label{fact:add}
  Candidates of the form $L + k\ve L$ define an additive grid with at most $\frac{1}{\ve}\left(\frac{R-L}{L}\right)$ candidates.
\end{fact}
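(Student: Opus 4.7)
The plan is a direct counting argument combined with a standard multiplicative-versus-additive observation; the statement is really two claims rolled into one (an upper bound on the grid size and the implicit approximation guarantee promised in the preceding paragraph), and I would address them in turn.

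First I would bound the number of candidates. The candidates are $L + k\ve L$ for nonnegative integers $k$, and to cover $[L,R]$ it suffices to take $k$ up to the smallest integer with $L + k\ve L \geq R$. Solving $L + k\ve L \leq R$ gives $k \leq (R-L)/(\ve L)$, so the number of candidates is at most $\lceil (R-L)/(\ve L)\rceil = \tfrac{1}{\ve}\!\left(\tfrac{R-L}{L}\right)$ (up to the usual ceiling convention), matching the stated bound.

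Next I would verify that this additive grid yields the promised multiplicative approximation. For arbitrary $X \in [L, R]$, set $k^\star = \lfloor (X-L)/(\ve L)\rfloor$ and take $\hat X = L + k^\star \ve L$. By construction $\hat X \leq X$, and $X - \hat X < \ve L$. Since $X \geq L$, the difference is at most $\ve L \leq \ve X$, so $(1-\ve)X \leq \hat X \leq X \leq (1+\ve)X$, as required.

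There is no real obstacle here; the only subtlety is the off-by-one treatment in the count (whether to use $\lfloor\cdot\rfloor$ or $\lceil\cdot\rceil$, and whether to include $k=0$), and this is absorbed into the stated bound. I expect the written justification to be at most one or two lines, since the whole content is the observation that on $[L,R]$ an additive grid of width $\ve L$ is automatically a multiplicative $(1\pm\ve)$-grid.
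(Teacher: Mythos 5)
Your proposal is correct and matches the natural argument the paper intends; the paper itself states this Fact without proof (it is treated as self-evident in Appendix~\ref{app:grid}), so your write-up simply fills in the obvious reasoning. The only minor point is the off-by-one: the exact count including $k=0$ is $\lfloor (R-L)/(\ve L)\rfloor + 1$, which can exceed $\lceil (R-L)/(\ve L)\rceil$ when the ratio is an integer, but as you note this is a rounding convention the paper also glosses over and it does not affect any downstream use of the Fact.
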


\begin{fact}
  \label{fact:mult}
  Candidates of the form $L(1 + \ve)^k$ define a multiplicative grid with at most $\frac{1}{\log{(1 + \ve)}}\log{\left(\frac{R}{L}\right)}$ candidates.
\end{fact}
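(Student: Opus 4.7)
The plan is to verify two things about the sequence $L, L(1+\varepsilon), L(1+\varepsilon)^2, \ldots$: first, that it genuinely provides a multiplicative $(1+\varepsilon)$-grid on the interval $[L, R]$, and second, that the number of terms required is bounded as claimed. For the covering property, I would fix an arbitrary target value $X \in [L,R]$ and take $k = \lfloor \log(X/L) / \log(1+\varepsilon) \rfloor$, so that by construction
\[
L(1+\varepsilon)^k \;\le\; X \;<\; L(1+\varepsilon)^{k+1}.
\]
Rearranging gives $X/(1+\varepsilon) < L(1+\varepsilon)^k \le X$, so the candidate $L(1+\varepsilon)^k$ lies within a multiplicative factor of $1+\varepsilon$ of $X$, which is the required grid property.

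For the counting bound, I would observe that the relevant indices $k$ range from $0$ up to the smallest integer $K$ for which $L(1+\varepsilon)^K \ge R$, namely $K = \lceil \log(R/L) / \log(1+\varepsilon) \rceil$. Thus the collection contains at most $K+1$ candidates, which is bounded by $\frac{\log(R/L)}{\log(1+\varepsilon)} + O(1)$, matching the stated bound (absorbing a small additive constant into the expression).

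The main obstacle, such as it is, is purely bookkeeping: being careful about floor/ceiling conventions and whether the endpoints $L$ and $R$ should be counted separately. There is no genuine mathematical difficulty, since the argument is just the observation that the exponential function turns a multiplicative grid into an additive one after taking logarithms, and counting lattice points in an interval of length $\log(R/L)$ with spacing $\log(1+\varepsilon)$ yields exactly the claimed count.
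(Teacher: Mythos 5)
The paper states Fact~\ref{fact:mult} without a proof (as with the other gridding facts in Appendix~A), so there is no paper argument to compare against directly; your proof is the natural one, and it is essentially correct. Taking logarithms converts the multiplicative grid $L(1+\ve)^k$ into an additive grid of spacing $\log(1+\ve)$ on the interval $[\log L, \log R]$, and counting lattice points gives the stated bound. The one point worth being careful about, which you already flag, is that the literal count of indices $k\ge 0$ with $L(1+\ve)^k \le R$ is $\lfloor \log(R/L)/\log(1+\ve)\rfloor + 1$, which can exceed $\log(R/L)/\log(1+\ve)$ by up to one; the fact as stated is therefore off by an additive $O(1)$ (or, equivalently, implicitly starts the grid at $k=1$). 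This discrepancy is immaterial for how the fact is used elsewhere in the paper, since the gridding bounds are only ever invoked up to constant factors, but your instinct to write the bound as $\log(R/L)/\log(1+\ve)+O(1)$ is the more honest version. Your covering argument (choosing $k=\lfloor \log(X/L)/\log(1+\ve)\rfloor$ to sandwich $X$ between $L(1+\ve)^k$ and $L(1+\ve)^{k+1}$) is exactly the right way to verify that the grid has the needed approximation property, which the paper also leaves implicit.
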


We also encounter scenarios where we require an additive estimate $X - \ve \leq \hat X \leq X + \ve$.

\begin{fact}
  \label{fact:absadd}
  Candidates of the form $L + k\ve$ define an absolute additive grid with $\frac{1}{\ve}\left(R-L\right)$ candidates.
\end{fact}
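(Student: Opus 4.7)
The proof is a direct counting argument for a standard additive grid. The plan is to define explicitly the candidate set $\{L + k\ve : k = 0, 1, \ldots, K\}$ where $K = \lceil (R-L)/\ve \rceil$, verify that this set both covers $[L,R]$ in the required sense and has the claimed cardinality.

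First I would argue correctness: for any $X \in [L,R]$, set $k^* = \lfloor (X-L)/\ve \rfloor$, which lies in $\{0,1,\ldots,K\}$ by the definition of $K$ and the assumption $X \le R$. Then $\hat X = L + k^* \ve$ satisfies $0 \le X - \hat X < \ve$, so in particular $X - \ve \le \hat X \le X + \ve$. Thus the grid furnishes an additive $\ve$-approximation to every $X \in [L,R]$.

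Next I would count. The candidates are indexed by $k \in \{0, 1, \ldots, K\}$, which gives $K+1 = \lceil (R-L)/\ve \rceil + 1$ points. For the purposes of the bound in the statement, one can use the weaker upper bound $K+1 \le \frac{R-L}{\ve} + O(1)$, or, after absorbing the additive constant into the $\frac{1}{\ve}(R-L)$ expression (for $\ve$ small relative to $R-L$, as is the regime of interest in the paper), report the bound as $\frac{1}{\ve}(R-L)$ up to an additive constant.

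There is no real obstacle here; the fact is a one-line observation about arithmetic progressions and is included only for use as a reference in the preceding candidate-generation lemmas. The only mild subtlety is that, read literally, $\frac{1}{\ve}(R-L)$ is the length of the interval divided by the spacing and undercounts by one endpoint, so the proof should either include the $+1$ or remark that the stated bound holds up to an additive $1$; the paper's usage treats this as $O\!\bigl(\tfrac{R-L}{\ve}\bigr)$ and is unaffected.
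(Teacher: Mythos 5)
Your proof is correct. The paper states Fact~\ref{fact:absadd} without proof, treating it as a self-evident counting observation alongside Facts~\ref{fact:add} and~\ref{fact:mult}, and your argument is exactly the natural one that fills in this omission: index the grid points by $k$, show that rounding down to the nearest grid point gives an additive $\ve$-approximation for any $X \in [L,R]$, and count. You are also right that, read literally, the claimed count $\frac{1}{\ve}(R-L)$ undercounts by one endpoint; since the paper only uses this bound asymptotically in the candidate-generation lemmas, the additive $+1$ is immaterial, and your remark to that effect is appropriate.
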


\section{Proofs Omitted from Section~\ref{sec:prelims}}

\begin{prevproof}{Proposition}{prop:mixtv}
  We use $\dtv(P,Q)$ and $\frac12\|P - Q\|_1$ interchangeably in the cases where $P$ and $Q$ are not necessarily probability distributions.
  Let $\mathcal{N}_i = \mathcal{N}(\m_i, \s_i^2)$ and $ \mathcal{\hat N}_i = \mathcal{N}(\hat \m_i, \hat \s_i^2)$.
  By triangle inequality, 
  $$\dtv(\hat w\mathcal{\hat N}_1 + (1 - \hat w)\mathcal{\hat N}_2, w\mathcal{N}_1 + (1 - w)\mathcal{N}_2)  \\
  \leq \dtv(\hat w\mathcal{\hat N}_1, w\mathcal{N}_1) +  \dtv((1 - \hat w)\mathcal{\hat N}_2, (1 - w)\mathcal{N}_2) $$
  Inspecting the first term,
  $$\frac12\left\|w \mathcal{N}_1 - \hat w \mathcal{\hat N}_1\right\|_1 =
    \frac12\left\|w \mathcal{N}_1 - w \mathcal{\hat N}_1 + w \mathcal{\hat N}_1 - \hat w \mathcal{\hat N}_1\right\|_1 
  \leq w\dtv(\mathcal{N}_1,\mathcal{\hat N}_1) + \frac12\left|w - \hat w\right|, $$
  again using the triangle inequality.
  A symmetric statement holds for the other term, giving us the desired result.
\end{prevproof}

\begin{prevproof}{Lemma}{lem:med}
  \label{app:med}
  We will use $x$ to denote the median of our distribution, where $\hat F(x) = \frac12$.
  Since $\dk(F, \hat F) \leq \ve$, $F(x) \leq \frac12 + \ve$.
  Using the CDF of the normal distribution, we obtain
  $\frac12 + \frac12\erf\left(\frac{x - \m}{\sqrt{2\s^2}}\right) \leq \frac12 + \ve$.
  Rearranging, we get $x \leq \m + \sqrt{2}\erf^{-1}(2\ve)\s \leq \m + 2\sqrt{2}\ve\s$, where the inequality uses the Taylor series of $\erf^{-1}$ around $0$ and Taylor's Theorem.
  By symmetry of the Gaussian distribution, we can obtain the corresponding lower bound for $x$.
\end{prevproof}

\begin{prevproof}{Lemma}{lem:iqr}
  \label{app:iqr}
  First, we show that 
  $$F^{-1}\left(\frac34\right) \in 
  \left[\m + \sqrt{2}\erf^{-1}\left(\frac12\right)\s - \frac{5\sqrt{2}}{2}\s\ve,
   \m + \sqrt{2}\erf^{-1}\left(\frac12\right)\s + \frac{7\sqrt{2}}{2}\s\ve\right].$$

  Let $x = F^{-1}\left(\frac34\right)$.
  Since $\dk(F, \hat F) \leq \ve$, $F(x) \leq \frac34 + \ve$.
  Using the CDF of the normal distribution, we obtain
  $\frac12 + \frac12\erf\left(\frac{x - \m}{\sqrt{2\s^2}}\right) \leq \frac34 + \ve$.
  Rearranging, we get $x \leq \m + \sqrt{2}\erf^{-1}\left(\frac12 + 2\ve\right)\s \leq \m + \sqrt{2}\erf^{-1}\left(\frac12\right)\s + \frac{7\sqrt{2}}2\ve\s$, where the inequality uses the Taylor series of $\erf^{-1}$ around $\frac12$ and Taylor's Theorem.
  A similar approach gives the desired lower bound.

  By symmetry, we can obtain the bounds
  $$F^{-1}\left(\frac14\right) \in  
  \left[\m - \sqrt{2}\erf^{-1}\left(\frac12\right)\s - \frac{7\sqrt{2}}{2}\s\ve,
   \m - \sqrt{2}\erf^{-1}\left(\frac12\right)\s + \frac{5\sqrt{2}}{2}\s\ve\right].$$
  Combining this with the previous bounds and rescaling, we obtain the lemma statement.\end{prevproof}
\section{Proofs Omitted from Section~\ref{sec:candidate generation}}

\begin{prevproof}{Proposition}{prop:nearmean}
\label{app:nearmean}
  The probability that a sample is from $\mathcal{N}_i$ is $w_i$.
  Using the CDF of the half-normal distribution, given that a sample is from $\mathcal{N}_i$, the probability that it is at a distance $\leq \ve\s_i$ from $\m_i$ is $\erf\left(\frac{\ve}{\sqrt{2}}\right)$.
  
  If we take a single sample from the mixture, it will satisfy the desired conditions with probability at least
  $w_i\erf\left(\frac{\ve}{\sqrt{2}}\right)$.
  If we take $\frac{20\sqrt{2}}{3w_i\ve}$ samples from the mixture, the probability that some sample satisfies the conditions is at least
  $$1 - \left(1 - w_i\erf\left(\frac{\ve}{\sqrt{2}}\right)\right)^{\frac{20\sqrt{2}}{3w_i\ve}}
  \geq 1 - \left(1 - w_i\cdot \frac{3}{4}\frac{\ve}{\sqrt2}\right)^{\frac{20\sqrt{2}}{3w_i\ve}}
  \geq 1 - e^{-5} \geq \frac{99}{100}$$
  where the first inequality is by noting that $\erf(x) \geq \frac{3}{4}x$ for $x \in [0,1]$.
\end{prevproof}

\begin{prevproof}{Proposition}{prop:west}
  $w_i \geq \ve$ implies $w_i \geq \frac{w_i + \ve}{2}$, and thus $\frac{2}{\hat w_i} \geq \frac{2}{w_i + \ve} \geq \frac{1}{w_i}.$
\end{prevproof}

\begin{prevproof}{Lemma}{lem:means}
\label{app:means}
  Aside from the size of the collection, the rest of the conclusions follow from Propositions \ref{prop:nearmean} and \ref{prop:west}.
  
  For a given $\hat w$, the number of candidates $\hat \m_1$ we consider is 
  $\frac{40\sqrt{2}}{3\hat w\ve}$.
  We sum this over all candidates for $\hat w$, namely, $\ve, 2\ve, \dots, 1 - \ve$, giving us
  $$\sum_{t=1}^{\frac{1}{\ve} - 1}\frac{40\sqrt{2}}{3k\ve^2} = \frac{40\sqrt{2}}{3\ve^2}H_{\frac{1}{\ve}-1}
  = O\left(\frac{\log{\ve^{-1}}}{\ve^2}\right)$$
  where $H_n$ is the $n$th harmonic number.
\end{prevproof}

\begin{prevproof}{Lemma}{lem:dev1}
  Let $Y$ be the nearest sample to $\hat \m_1$.
  From Lemma \ref{lem:closeptgmmrob}, with probability $\geq \frac{9}{10}$, $|Y - \hat \m_1| \in [\frac{c_3}{4}\s_1, (\sqrt{2} + \frac{c_3}{4})\s_1]$.

  We can generate candidates by rearranging the bounds to obtain
  $$\frac{Y}{\sqrt{2} + \frac{c_3}{4}}
  \leq \s_1
  \leq \frac{Y}{\frac{c_3}{4}}$$

  Applying Fact $\ref{fact:mult}$ and noting that $\frac{R}{L} = O(1)$, we conclude that we can grid over this range with $O(\frac{1}{\ve})$ candidates.\end{prevproof}

\begin{prevproof}{Lemma}{lem:lastparams}
  \label{app:lastparams}
  The proof follows the sketch outlined in Section \ref{sec:lastcomp}.
  We first use Proposition \ref{prop:kolcons} to construct an approximation $\hat F$ of the GMM $F$.
  Using Proposition \ref{prop:dksubtract}, we see that
  $\dk\left(\mathcal{N}(\m_2,\s_2^2), \frac{\hat F - \hat w^* \mathcal{N}(\hat \m_1^*, \hat \s_1^{*2})}{1-\hat w^*}\right) \leq \frac{O(\ve)}{1 - w}$.
  By Lemma \ref{lem:mono}, we can compute a distribution $\hat H$ such that $\dk(\mathcal{N}(\m_2,\s_2^2), \hat H) \leq \frac{O(\ve)}{1 - w}$.
  Finally, using the median and interquartile range and the guaranteed provided by Lemmas \ref{lem:med} and \ref{lem:iqr}, we can compute candidates $\hat \m_2^* \in \m_2 \pm O\left(\frac{\ve}{1-w}\right)\s_2$ and $\hat \s_2^* \in \left(1\pm O\left(\frac{\ve}{1-w}\right)\right)\s_2$ from $\hat H$, as desired.
\end{prevproof}

\begin{prevproof}{Proposition}{prop:dksubtract}
\begin{align*}
\dk\left(\mathcal{N}(\m_2,\s_2^2), \frac{\hat F - \hat w^* \mathcal{N}(\hat \m_1^*, \hat \s_1^{*2})}{1-\hat w^*}\right)
&= \frac{1}{1 - \hat w}\dk(\hat w^* \mathcal{N}(\hat \m_1^*, \hat \s_1^{*2}) + (1-\hat w^*)\mathcal{N}(\m_2,\s_2^2), \hat F) \\
&\leq \frac{1}{1 - \hat w}(\dk(\hat w^* \mathcal{N}(\hat \m_1^*, \hat \s_1^{*2}) + (1-\hat w^*)\mathcal{N}(\m_2,\s_2^2), F) + \dk(F, \hat F)) \\
&\leq \frac{1}{1 - \hat w}(\dtv(\hat w^* \mathcal{N}(\hat \m_1^*, \hat \s_1^{*2}) + (1-\hat w^*)\mathcal{N}(\m_2,\s_2^2), F) + O(\ve)) \\
&\leq \frac{1}{1 - \hat w}(|w - \hat w| + \dtv(\mathcal{N}(\m_1,\s_1),\mathcal{N}(\hat \m_1^*,\hat \s_1^{*2}) + O(\ve)) \\
&\leq \frac{O(\ve)}{1 - \hat w} \\
&\leq \frac{O(\ve)}{1 - w}
\end{align*}
The equality is a rearrangement of terms,
the first inequality is the triangle inequality,
the second inequality uses Fact \ref{fact:dkdtv},
and the third and fourth inequalities use Propositions \ref{prop:ddodtv} and \ref{prop:mixtv} respectively.\end{prevproof}

\subsection{Proof of Theorem~\ref{thm:list of candidates}}

\begin{prevproof}{Theorem}{thm:list of candidates}
  We produce two lists of candidates corresponding to whether $\min{(w, 1-w)} = \Omega(\ve)$ or not: 
  \begin{itemize}
  
  \item In the first case, combining Lemmas \ref{lem:means}, \ref{lem:dev1}, and \ref{lem:lastparams} and taking the Cartesian product of the resulting candidates for the mixture's parameters, we see that we can obtain a collection of $O\left(\frac{\log \ve^{-1}}{\ve^3}\right)$ candidate mixtures.
  With probability $\geq \frac{4}{5}$, this will contain a candidate $(\hat w^*, \hat \m_1^*, \hat \m_2^*, \hat \s_1^*, \hat \s_2^*)$ such that
  $\hat w \in w \pm O(\ve), \hat \m_i \in \m_i \pm O(\ve)\s_i$ for $i=1,2$, and $\hat \s_i \in (1\pm O(\ve))\s_i$ for $i=1,2$.
  Note that we can choose the hidden constants to be as small as necessary for Lemma $\ref{lem:paramest}$, and thus we can obtain the desired total variation distance.
  
  Finally, note that the number of samples that we need for the above to hold is $O(1/\ve^2)$. For this, it is crucial that we {\em first} draw a sufficient $O(1/\ve^2)$ samples from the mixture (specified by the worse requirement among Lemmas~\ref{lem:means}, \ref{lem:dev1}, and \ref{lem:lastparams}), and {\em then} execute the candidate generation algorithm outlined in Lemmas~\ref{lem:means}, \ref{lem:dev1}, and \ref{lem:lastparams}. In particular, we do not want to redraw samples for every branching of this algorithm.
  
 Finally, to boost the success probability, we repeat the entire process $\log_{5} \d^{-1}$ times and let our collection of candidate mixutres be the union of the collections from these repetitions.
  The probability that none of these collections contains a suitable candidate distribution is $\leq \left(\frac15\right)^{\log_{5} \d^{-1}} \leq \d$.

\item In the second case, i.e. when one of the weights, w.l.o.g. $w_2$, is $O(\ve)$, we set $\hat{w}_1=1$ and we only produce candidates for $(\mu_1,\sigma_1^2)$. 
  Note that this scenario fits into the framework of Lemmas \ref{lem:med} and \ref{lem:iqr}.
  Our mixture $F$ is such that $\dk(F,\mathcal{N}(\m_1,\s_1^2)) \leq \dtv(F,\mathcal{N}(\m_1,\s_1^2)) \leq O(\ve)$.
  By the DKW inequality (Theorem \ref{thm:dkw}), we can use $O(\frac{1}{\ve^2} \cdot \log{\frac{1}{\d}})$ samples to generate the empirical distribution, which gives us a distribution $\hat F$ such that $\dtv(\hat F,\mathcal{N}(\m_1,\s_1^2)) \leq O(\ve)$ (by triangle inequality), with probability $\geq 1 - \d$.
  From this distribution, using the median and interquartile range and the guarantees of Lemmas \ref{lem:med} and \ref{lem:iqr}, we can extract $\hat \m_1^*$ and $\hat \s_1^*$ such that
  $|\hat \m_1^* - \m_1| \leq O(\ve)\s_1$ and 
  $|\hat \s_1^* - \s_1| \leq O(\ve)\s_1$.
  Thus, by Lemma \ref{lem:paramest}, we can achieve the desired total variation distance.
  
\end{itemize}

\end{prevproof}

\section{Details about Representing and Manipulating CDFs}
\label{app:intervalpartition}
First, we remark that, given a discrete random variable $X$ over a support of size $n$, in $\tilde O(n)$ time, we can construct a data structure which allows us to compute $F_X^{-1}(x)$ for $x \in [0,1]$ at the cost of $O(\log n)$ time per operation.
This data structures will be a set of disjoint intervals which form a partition of $[0,1]$, each associated with a value.
A value $x \in [0,1]$ will be mapped to the value associated to the interval which contains $x$.
This data structure can be constructed by mapping each value to an interval of width equal to the probability of that value.
This data structure can be queried in $O(\log n)$ time by performing binary search on the left endpoints of the intervals.
We name this the \emph{$n$-interval partition} representation of a distribution.
To avoid confusion with intervals that represent elements of the $\s$-algebra of the distribution, we refer to the intervals that are stored in the data structure as \emph{probability intervals}.

We note that, if we are only concerned with sampling, the order of the elements of the support is irrelevant.
However, we will sort the elements of the support in order to perform efficient modifications later.

At one point in our learning algorithm, we will have a candidate which correctly describes one of the two components in our mixture of Gaussians.
If we could ``subtract out'' this component from the mixture, we would be left with a single Gaussian - in this setting, we can efficiently perform parameter estimation to learn the other component.
However, if we naively subtract the probability densities, we will obtain negative probability densities, or equivalently, non-monotonically increasing CDFs.
To deal with this issue, we define a process we call monotonization.
Intuitively, this will shift negative probability density to locations with positive probability density.
We show that this preserves Kolmogorov distance and that it can be implemented efficiently.

\begin{definition}
  Given a bounded function $f : \mathbb{R} \rightarrow \mathbb{R}$,
  the \emph{monotonization} of $f$ is $\hat f$, where $\hat f(x) = \sup_{y \leq x} f(x)$.
\end{definition}

We argue that if a function is close in Kolmogorov distance to a monotone function, then so is its monotonization.
\begin{proposition}
  \label{prop:monoapprox}
  Suppose we have two bounded functions $F$ and $G$ such that $\dk(F,G) \leq \ve$, where $F$ is monotone non-decreasing.
  Then $\hat G$, the monotonization of $G$, is such that $\dk(F,\hat G) \leq \ve$.
\end{proposition}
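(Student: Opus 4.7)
The plan is to fix an arbitrary point $x \in \reals$ and bound $|F(x) - \hat G(x)|$ by $\ve$ by splitting into two cases depending on the sign of $\hat G(x) - F(x)$. Since $\hat G(x) = \sup_{y \le x} G(y) \ge G(x)$ always, one direction is essentially free, and the other direction is where we must exploit the monotonicity of $F$.

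First I would handle the easy case: if $\hat G(x) \le F(x)$, then
$$0 \le F(x) - \hat G(x) \le F(x) - G(x) \le |F(x) - G(x)| \le \ve,$$
using only $\hat G(x) \ge G(x)$ and the hypothesis $\dk(F,G) \le \ve$.

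Next I would handle the case $\hat G(x) > F(x)$, which is the interesting one. By definition of the supremum, for any $\eta > 0$ there exists $y \le x$ with $G(y) \ge \hat G(x) - \eta$. I would then write
$$\hat G(x) - F(x) \le G(y) + \eta - F(x) \le G(y) - F(y) + \eta \le \ve + \eta,$$
where the second inequality is exactly where monotonicity of $F$ enters: since $y \le x$ and $F$ is non-decreasing, $F(y) \le F(x)$, so $-F(x) \le -F(y)$. Letting $\eta \to 0$ gives $\hat G(x) - F(x) \le \ve$.

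Combining the two cases yields $|F(x) - \hat G(x)| \le \ve$ for every $x$, and taking the supremum over $x$ gives $\dk(F,\hat G) \le \ve$, as desired. I do not anticipate any real obstacle here; the only subtlety is handling the supremum in $\hat G(x)$ carefully (either via an $\eta$-approximation as above, or by noting that if the sup is attained this step is immediate, and otherwise a limiting argument suffices).
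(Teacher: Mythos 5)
Your proof is correct and takes essentially the same approach as the paper's: it splits into the same two cases, using $\hat G(x)\ge G(x)$ for the easy direction and an approximating-sequence argument combined with the monotonicity of $F$ for the other.
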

\begin{proof}
  We show that $|F(x) - \hat G(x)| \leq \ve$ holds for an arbitrary point $x$, implying that $\dk(F,\hat G) \leq \ve$.  
  There are two cases: $F(x) \geq \hat G(x)$ and $F(x) < \hat G(x)$.

  If $F(x) \geq \hat G(x)$, using the fact that $\hat G(x) \geq G(x)$ (due to monotonization), we can deduce
  $|F(x) - \hat G(x)| \leq |F(x) - G(x)| \leq \ve$.

  If $F(x) < \hat G(x)$, consider an infinite sequence of points $\{y_i\}$ such that $G(y_i)$ becomes arbitrarily close to $\sup_{y \leq x} G(x)$.
  By monotonicity of $F$, we have that $|F(x) - \hat G(x)| \leq |F(y_i) - G(y_i)| + \d_i \leq \ve + \d_i$, where $\d_i = |\hat G(x) - G(y_i)|$.
  Since $\d_i$ can be taken arbitrarily small, we have $|F(x) - \hat G(x)| \leq \ve$.
\end{proof}

We will need to efficiently compute the monotonization in certain settings, when subtracting one monotone function from another.
\begin{proposition}
  \label{prop:comppart}
  Suppose we have access to the $n$-interval partition representation of a CDF $F$.
  Given a monotone non-decreasing function $G$, we can compute the $n$-interval partition representation of the monotonization of $F - G$ in $O(n)$ time.
\end{proposition}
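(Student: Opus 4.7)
\medskip\noindent\emph{Proof proposal.} The plan is to exploit the fact that, in the $n$-interval partition representation, $F$ is a right-continuous step function whose only jumps lie at the $n$ stored support values $v_1 < v_2 < \cdots < v_n$. Between two consecutive support values $F$ is constant while $G$ is non-decreasing, so $H := F - G$ is non-increasing on each interval $(v_i, v_{i+1})$ as well as on $(-\infty, v_1)$. Consequently the running supremum $\hat H(x) = \sup_{y \le x} H(y)$ can only ever increase at a jump of $F$, and for $x \in [v_i, v_{i+1})$ it equals $\max_{j \le i} H(v_j)$.

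This reduces the problem to two elementary subtasks: (i) evaluating the $n$ numbers $H_i := F(v_i) - G(v_i)$, and (ii) computing their prefix maxima $M_i := \max_{j \le i} H_j$ with the convention $M_0 := \lim_{y \to -\infty} H(y) = -\lim_{y \to -\infty} G(y)$. For (i) we walk through the probability intervals of $F$ in sorted order: the value $F(v_i)$ is immediately available as the right endpoint of the $i$-th probability interval of the stored representation, and $G(v_i)$ is evaluated in $O(1)$ time by assumption. For (ii) a single left-to-right sweep maintains the running maximum in $O(1)$ per step.

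To assemble the output, we associate with each support value $v_i$ the probability interval $[M_{i-1}, M_i]$. Since the $M_i$ are non-decreasing by construction, these intervals form a contiguous partition of $[M_0, M_n]$, of exactly the form consumed by the data structure from Appendix~\ref{app:intervalpartition}. Those indices $i$ at which the jump of $F$ fails to exceed the running maximum give probability intervals of width zero and may be dropped, leaving at most $n$ intervals. Every step of the pipeline costs $O(1)$ per support point, so the total time is $O(n)$.

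The only delicate point is the bookkeeping at the left endpoint and the observation that the probability intervals partition the range of $\hat H$ rather than $[0,1]$; neither causes real difficulty, since the downstream use in Lemma~\ref{lem:mono} applies a rescaling by $1/(1-w)$ that restores a valid CDF. Fundamentally the whole argument is a one-pass prefix-maximum computation over the existing sorted support of $F$, so I do not anticipate any substantive obstacle beyond this edge-case bookkeeping.
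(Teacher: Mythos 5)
Your proposal is correct and is essentially the same as the paper's own proof: both rest on the observation that $F-G$ is non-increasing between consecutive support points, so the monotonization is piecewise constant on the same partition and is obtained from a single left-to-right prefix-maximum sweep, deleting the probability intervals that become degenerate. The only cosmetic differences are that you phrase the sweep via the explicit sequences $H_i, M_i$ while the paper maintains a running variable $m$, and you note that $M_0 = -\lim_{y\to-\infty}G(y)$ whereas the paper simply initializes $m=0$ (which coincides in the application since $G$ there is a nonnegative scaled CDF vanishing at $-\infty$).
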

\begin{proof}
  Consider the values in the $n$-interval partition of $F$.
  Between any two consecutive values $v_1$ and $v_2$, $F$ will be flat, and since $G$ is monotone non-decreasing, $F-G$ will be monotone non-increasing.
  Therefore, the monotonization of $F-G$ at $x \in [v_1,v_2)$ will be the maximum of $F-G$ on $(-\infty,v_1]$.
  The resulting monotonization will also be flat on the same intervals as $F$, so we will only need to update the probability intervals to reflect this monotonization.

  We will iterate over probability intervals in increasing order of their values, and describe how to update each interval.
  We will need to keep track of the maximum value of $F-G$ seen so far.
  Let $m$ be the maximum of $F-G$ for all $x \leq v$, where $v$ is the value associated with the last probability interval we have processed.
  Initially, we have the value $m = 0$.
  Suppose we are inspecting a probability interval with endpoints $[l,r]$ and value $v$.
  The left endpoint of this probability interval will become $\hat l = m$, and the right endpoint will become $\hat r = r - G(v)$.
  If $\hat r \leq \hat l$, the interval is degenerate, meaning that the monotonization will flatten out the discontinuity at $v$ - therefore, we simply delete the interval.
  Otherwise, we have a proper probability interval, and we update $m = \hat r$.

  This update takes constant time per interval, so the overall time required is $O(n)$.
\end{proof}

To conclude, we prove Lemma \ref{lem:mono}.

\begin{prevproof}{Lemma}{lem:mono}
First, by assumption, we know that $\frac{1}{1-w}\dk((1-w)H, F - wG) \leq \ve$.
By Proposition \ref{prop:comppart}, we can efficiently compute the monotonization of $F - wG$ - name this $(1-w)\hat H$.
By Proposition \ref{prop:monoapprox}, we have that $\frac{1}{1-w}\dk((1-w)H, (1-w)\hat H) \leq \ve$.
Renormalizing the distributions gives the desired approximation guarantee.

To justify the running time of this procedure, we must also argue that the normalization can be done efficiently.
To normalize the distribution $(1-w)\hat H$, we make another $O(n)$ pass over the probability intervals and multiply all the endpoints by $\frac{1}{r^*}$, where $r^*$ is the right endpoint of the rightmost probability interval.
We note that $r^*$ will be exactly $1-w$ because the value of $F - wG$ at $\infty$ is $1-w$, so this process results in the distribution $\hat H$.
\end{prevproof}

\section{Robust Estimation of Scale from a Mixture of Gaussians}
\label{sec:closestpointkol}

In this section, we examine the following statistic:

Given some point $x \in \mathbb{R}$ and $n$ IID random variables $X_1, \dots, X_n$, what is the minimum distance between $x$ and any $X_i$?

We give an interval in which this statistic is likely to fall (Proposition \ref{thm:closept}), and examine its robustness when sampling from distributions which are statistically close to the distribution under consideration (Proposition \ref{thm:closeptrob}).
We then apply these results to mixtures of Gaussians (Proposition \ref{thm:closeptgmm} and Lemma \ref{lem:closeptgmmrob}).

\begin{proposition}
  \label{thm:closept}
  Suppose we have $n$ IID random variables $X_1, \dots, X_n \sim X$, some $x \in \mathbb{R}$, and $y = F_X(x)$.
  Let $I_N$ be the interval $[F_X^{-1}(y - \frac{c_1}{n}), F_X^{-1}(y + \frac{c_1}{n})]$ and
      $I_F$ be the interval $[F_X^{-1}(y - \frac{c_2}{n}), F_X^{-1}(y + \frac{c_2}{n})]$ for some constants $0 < c_1 < c_2 \leq n$, and $I = I_F \backslash I_N$.
      Let $j = \arg\min_i |X_i - x|$.
      Then $\Pr[X_j \in I] \geq \frac{9}{10}$ for all $n > 0$.
\end{proposition}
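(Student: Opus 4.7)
The plan is to upper-bound the complementary probability $\Pr[X_j \notin I]$ by $1/10$ for appropriate constants $0 < c_1 < c_2$. Since $I_N \subseteq I_F$, the complement decomposes as $\{X_j \notin I\} = \{X_j \in I_N\} \cup \{X_j \notin I_F\}$, and by a union bound it suffices to control each piece separately. I will assume $F_X$ is continuous (so that $\Pr[X_1 \in [F_X^{-1}(a), F_X^{-1}(b)]] = b - a$ for $0 \leq a < b \leq 1$), which holds in the intended Gaussian-type applications.

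For the first piece, the event $X_j \in I_N$ forces at least one $X_i$ to lie in $I_N$, so a union bound over the $n$ samples gives $\Pr[X_j \in I_N] \leq n \cdot (2c_1/n) = 2 c_1$.

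For the second piece, let $d_- = x - F_X^{-1}(y - c_2/n)$ and $d_+ = F_X^{-1}(y + c_2/n) - x$ denote the two one-sided radii of $I_F$ around $x$, and set $d = \min(d_-, d_+) \geq 0$. If $X_j \notin I_F$ then $|X_j - x| > d$, and since $X_j$ is the \emph{closest} sample this forces every $X_i$ to lie outside the symmetric sub-interval $[x - d, x + d]$. Taking without loss of generality $d = d_+$, the mass in this sub-interval satisfies $\Pr[X_1 \in [x - d, x + d]] = (y + c_2/n) - F_X(x - d_+) \geq c_2/n$, using $F_X(x - d_+) \leq F_X(x) = y$. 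By independence of the $X_i$'s, $\Pr[X_j \notin I_F] \leq (1 - c_2/n)^n \leq e^{-c_2}$.

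Combining the two bounds, $\Pr[X_j \notin I] \leq 2 c_1 + e^{-c_2}$, which can be made at most $1/10$ by taking, for instance, $c_1 = 1/40$ and $c_2 = \ln 20$ (so that each summand contributes $1/20$). I expect the main subtlety to be the potentially asymmetric placement of $I_F$ around $x$: the naive identity $\{X_j \in I\} = \{\text{no sample in } I_N\} \cap \{\text{some sample in } I_F\}$ can fail because a sample close to $x$ may sit just outside $I_F$ on the ``short side'' while a sample on the ``long side'' of $I_F$ is much farther away. Passing to the symmetric sub-interval $[x - d, x + d]$ sidesteps this, at the cost of at most a factor of two in the available probability mass, and this is what makes the clean geometric argument above go through.
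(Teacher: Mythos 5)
Your proof is correct and follows the paper's top-level decomposition (bad event A: some sample lands in $I_N$; bad event B: the closest sample lands outside the ``far'' interval; union bound). For event A you use the same union/first-moment bound $\Pr[\exists i: X_i \in I_N] \le 2c_1$. For event B, however, you take a genuinely different and more elementary route: the paper applies the second-moment method (Paley--Zygmund, $\Pr[Z>0] \ge \E[Z]^2/\E[Z^2]$) to the count $Z$ of samples in $I_F$, obtaining $\Pr[Z=0] \le 1/(2c_2+1)$ and thus needing $c_2 = 19/2$, whereas you use independence directly to get $(1-c_2/n)^n \le e^{-c_2}$, which is both simpler and quantitatively tighter (you can take $c_2 = \ln 20$). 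More importantly, you correctly flagged an asymmetry issue that the paper glosses over. The paper's argument implicitly treats ``some sample in $I_F$'' as implying ``$X_j \in I_F$,'' which is not valid when $I_F = [x-d_-,\,x+d_+]$ has $d_- \ne d_+$: the Euclidean-closest sample can sit just outside $I_F$ on the short side while a farther sample sits inside $I_F$ on the long side, so $\{X_j \notin I_F\} \not\subseteq \{\text{no sample in } I_F\}$. Your fix---passing to the symmetric sub-interval $[x-d,\,x+d]$ with $d=\min(d_-,d_+)$, for which ``$X_j$ outside'' really does force all $n$ samples outside, while retaining mass at least $c_2/n$ since at most the one-sided surplus of $I_F$ is discarded---is exactly the right way to make the argument rigorous for a general, not-necessarily-symmetric distribution $X$. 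So your proof is not only correct but is a cleaner and more careful version of the paper's.
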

\begin{proof}
  We show that $\Pr[X_j \not \in I] \leq \frac{1}{10}$ by showing that the following two bad events are unlikely:
  \begin{enumerate}
    \item We have a sample which is too close to $x$
    \item All our samples are too far from $x$
  \end{enumerate}
  Showing these events occur with low probability and combining with the union bound gives the desired result.

  Let $Y$ be the number of samples at distance $< \frac{c_1}{n}$ in distance in the CDF, i.e., $Y = |\{i\mid|F_X^{-1}(X_i) - y| < \frac{c_1}{n}\}|$.
  By linearity of expectation, $E[Y] = 2c_1$.
  By Markov's inequality, $\Pr(Y > 0) < 2c_1$.
  This allows us to upper bound the probability that one of our samples is too close to $x$.

  Let $Z$ be the number of samples at distance $< \frac{c_2}{n}$ in distance in the CDF, i.e., $Z = |\{i\mid|F_X^{-1}(X_i) - y| < \frac{c_2}{n}\}|$, and let $Z_i$ be an indicator random variable which indicates this property for $X_i$.
  We use the second moment principle,
  $$\Pr(Z > 0) \geq \frac{E[Z]^2}{E[Z^2]}$$
  By linearity of expectation, $E[Z]^2 = 4c_2^2$.
  \begin{align*}
    E[Z^2] &= \sum_i E[Z_i^2] + \sum_i \sum_{j \neq i} E[Z_iZ_j] \\
           &= 2c_2 + n(n-1)\left(\frac{4c_2^2}{n^2}\right) \\
           &\geq 2c_2 + 4c_2^2
  \end{align*}
  And thus, $\Pr(Z = 0) \leq \frac{1}{2c_2+1}$.
  This allows us to upper bound the probability that all of our samples are too far from $x$.

  Setting $c_1 = \frac{1}{40}$ and $c_2 = \frac{19}{2}$ gives probability $< \frac{1}{20}$ for each of the bad events, resulting in a probability $< \frac{1}{10}$ of either bad event by the union bound, and thus the desired result.
\end{proof}

We will need the following property of Kolmogorov distance, which states that probability mass within every interval is approximately preserved:
\begin{proposition}
  \label{prop:approxkol}
  If  $\dk(f_X,f_Y) \leq \ve$, then
  for all intervals $I \subseteq \mathbb{R}$, $|f_X(I) - f_Y(I)| \leq 2\ve$.
\end{proposition}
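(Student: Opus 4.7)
The plan is to reduce the statement directly to the definition of Kolmogorov distance by expressing the probability mass on any interval as a difference of two CDF values. Write a generic bounded interval as $I = (a,b]$ (the other cases, including half-infinite and closed/open at either endpoint, follow by the same argument, or as limits of such intervals; in particular $f_X((-\infty,b])=F_X(b)$ and $f_X((a,\infty))=1-F_X(a)$, and these give even tighter bounds of $\ve$ rather than $2\ve$). Then, by definition of the CDF,
\[
f_X(I) - f_Y(I) = \bigl(F_X(b) - F_X(a)\bigr) - \bigl(F_Y(b) - F_Y(a)\bigr) = \bigl(F_X(b) - F_Y(b)\bigr) - \bigl(F_X(a) - F_Y(a)\bigr).
\]

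Applying the triangle inequality and invoking the hypothesis $\dk(f_X,f_Y) \le \ve$, which by definition means $|F_X(t) - F_Y(t)| \le \ve$ for every $t \in \mathbb{R}$, gives
\[
|f_X(I) - f_Y(I)| \le |F_X(b) - F_Y(b)| + |F_X(a) - F_Y(a)| \le \ve + \ve = 2\ve,
\]
as required. Since the argument only used the Kolmogorov bound at the two endpoints, it extends to arbitrary intervals (open, closed, half-open, or half-infinite) by taking appropriate one-sided limits of the endpoint evaluations, and the bound $2\ve$ holds uniformly.

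There is no real obstacle here; the proposition is essentially just unpacking the definition of the Kolmogorov metric, with the only mild subtlety being the treatment of one-sided or unbounded intervals (and of point masses at endpoints, where the same two-endpoint decomposition still works because the bound on CDFs holds pointwise).
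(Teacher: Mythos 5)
Your proof is correct and follows essentially the same route as the paper's: express the mass of $I$ as a difference of CDF values at the endpoints, split by the triangle inequality, and bound each term by $\ve$ using the Kolmogorov distance. The only difference is your (harmless, slightly more careful) discussion of open vs.\ closed endpoints and unbounded intervals, which the paper glosses over by simply writing $I=[a,b]$.
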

\begin{proof}
  For an interval $I = [a,b]$, we can rewrite the property as
  \begin{align*}
    |f_X(I) - f_Y(I)| &= |(F_X(b) - F_X(a)) - (F_Y(b) - F_Y(a))|  \\
                     &\leq |F_X(b) - F_Y(b)| + |F_X(a) - F_Y(a)| \\
                     &\leq 2\ve
  \end{align*}
  as desired, where the first inequality is the triangle inequality and the second inequality is due to the bound on Kolmogorov distance.
\end{proof}

The next proposition says that if we instead draw samples from a distribution which is close in total variation distance, the same property approximately holds with respect to the original distribution.

\begin{proposition}
  \label{thm:closeptrob}
  Suppose we have $n$ IID random variables $\hat X_1, \dots, \hat X_n \sim \hat X$ where $\dk(f_X,f_{\hat X}) \leq \d$, some $x \in \mathbb{R}$, and $y = F_X(x)$.
  Let $I_N$ be the interval $[F_X^{-1}(y - \frac{c_1}{n} + \d), F_X^{-1}(y + \frac{c_1}{n} - \d)]$ and
      $I_F$ be the interval $[F_X^{-1}(y - \frac{c_2}{n} - \d), F_X^{-1}(y + \frac{c_2}{n} + \d)]$ for some constants $0 < c_1 < c_2 \leq n$, and $I = I_F \backslash I_N$.
      Let $j = \arg\min_i |X_i - a|$.
      Then $\Pr[X_j \in I] \geq \frac{9}{10}$ for all $n > 0 $.
\end{proposition}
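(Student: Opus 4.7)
The plan is to mirror the proof of Proposition~\ref{thm:closept}, using Proposition~\ref{prop:approxkol} as the bridge between probabilities under $X$ and under $\hat X$. The reason the interval endpoints are shrunk by $\delta$ on the $I_N$ side and expanded by $\delta$ on the $I_F$ side is precisely to absorb the $2\delta$ slack in Proposition~\ref{prop:approxkol} so that $\hat X$-probabilities of these intervals inherit exactly the same bounds as $X$-probabilities of the original intervals in Proposition~\ref{thm:closept}.

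As in the original proof, I would union-bound two bad events: (i) some sample $\hat X_i$ lands in $I_N$ (too close to $x$) and (ii) no sample $\hat X_i$ lands in $I_F$ (all too far). For (i), let $Y = |\{i : \hat X_i \in I_N\}|$. By construction $f_X(I_N) = 2(c_1/n - \delta)$, and Proposition~\ref{prop:approxkol} gives $\Pr[\hat X_i \in I_N] \leq f_X(I_N) + 2\delta = 2c_1/n$. Thus $\mathbb{E}[Y] \leq 2c_1$, and Markov yields $\Pr[Y > 0] \leq 2c_1$, matching the first moment bound in the original proof.

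For (ii), let $Z = |\{i : \hat X_i \in I_F\}|$. Since $f_X(I_F) = 2(c_2/n + \delta)$, Proposition~\ref{prop:approxkol} gives $\Pr[\hat X_i \in I_F] \geq 2c_2/n$, so $\mathbb{E}[Z] \geq 2c_2$. Applying the second moment inequality $\Pr[Z>0] \geq \mathbb{E}[Z]^2/\mathbb{E}[Z^2]$ exactly as in Proposition~\ref{thm:closept}, and noting the upper bound $\Pr[\hat X_i \in I_F] \leq 2c_2/n + 4\delta$ for $\mathbb{E}[Z^2] \leq \mathbb{E}[Z] + n(n-1)\Pr[\hat X_i \in I_F]^2$, I would conclude $\Pr[Z = 0] \leq 1/(2c_2 + 1) + O(\delta)$. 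Choosing the same constants $c_1 = 1/40$ and $c_2 = 19/2$ and using the union bound gives $\Pr[\hat X_j \not\in I] \leq 1/10$.

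The main technical subtlety is the second moment upper bound, because the extra $4\delta$ slack in $\Pr[\hat X_i \in I_F]$ could in principle inflate $\mathbb{E}[Z^2]$. However, in every intended application (such as Lemma~\ref{lem:closeptgmmrob}, where $\delta = c_1/(2n)$) we have $\delta = O(1/n)$, so $(2c_2/n + 4\delta)^2 = O(1/n^2)$ and the second moment argument goes through with essentially the same constants. I would either fold this assumption explicitly into the choice of constants or, more cleanly, replace the $\leq$ in the bound by a constant factor and keep the argument otherwise identical to the proof of Proposition~\ref{thm:closept}.
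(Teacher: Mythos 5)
Your proposal follows the paper's route: argue via Proposition~\ref{prop:approxkol} that $I_N$ carries $\hat X$-measure at most $2c_1/n$ and $I_F$ carries $\hat X$-measure at least $2c_2/n$, then invoke the two moment arguments from Proposition~\ref{thm:closept} with those bounds. The paper's own proof states exactly these two facts and then says "applying the same argument to these intervals," with the same constants $c_1 = 1/40$, $c_2 = 19/2$.

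However, the "technical subtlety" you raise about the second moment is not actually a subtlety, and the workarounds you propose (assuming $\delta = O(1/n)$, or inflating the constants) are unnecessary. For $Z \sim \mathrm{Bin}(n,q)$ with $q = \Pr[\hat X_i \in I_F]$, the second-moment lower bound evaluates to
\[
\Pr[Z > 0] \;\geq\; \frac{\mathbb{E}[Z]^2}{\mathbb{E}[Z^2]} \;=\; \frac{n^2 q^2}{nq(1-q) + n^2 q^2} \;=\; \frac{nq}{1 + (n-1)q},
\]
which is \emph{monotone increasing} in $q$. So only the lower bound $q \geq 2c_2/n$ matters; a larger $q$ can only improve $\Pr[Z > 0] \geq 2c_2/(1+2c_2)$, and no upper bound on $q$ (hence no restriction on $\delta$) is needed. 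This is precisely what the paper relies on implicitly, and it is why the interval $I_F$ is widened by $\delta$ on each side but no matching shrinkage is required. With this observation your argument closes without modification.
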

\begin{proof}
  First, examine interval $I_N$.
  This interval contains $\frac{2c_1}{n} - 2\d$ probability measure of the distribution $F_X$.
  By Proposition \ref{prop:approxkol}, $|F_X(I_N) - F_{\hat X}(I_N)| \leq 2\d$, so $F_{\hat X}(I_N) \leq \frac{2c_1}{n}$.
  One can repeat this argument to show that the amount of measure contained by $F_{\hat X}$ in $[F_X^{-1}(y - \frac{c_2}{n} - \d),F_X^{-1}(y + \frac{c_2}{n} + \d)]$ is $\geq \frac{2c_2}{n}$.

  As established through the proof of Proposition \ref{thm:closept}, with probability $\geq \frac{9}{10}$, 
  there will be no samples in a window containing probability measure $\frac{2c_1}{n}$, 
  but there will be at least one sample in a window containing probability measure $\frac{2c_2}{n}$. 
  Applying the same argument to these intervals, we can arrive at the desired result.
\end{proof}

We examine this statistic for some mixture of $k$ Gaussians with PDF $f$ around the point corresponding to the mean of the component with the minimum value of $\frac{\s_i}{w_i}$.
Initially, we assume that we know this location exactly and that we are taking samples according to $f$ exactly.

\begin{proposition}
  \label{thm:closeptgmm}
  Consider a mixture of $k$ Gaussians with PDF $f$, components $\mathcal{N}(\m_1,\s_1^2), \dots, \mathcal{N}(\m_k, \s_k^2)$ and weights $w_1, \dots, w_k$.
  Let $j = \arg\min_i \frac{\s_i}{w_i}$.
  If we take $n$ samples $X_1,\dots,X_n$ from the mixture (where $n > \frac{3\sqrt{\p}c_2}{2w_j}$), then
  $\min_i |X_i - \m_j| \in \left[\frac{\sqrt{2\p}c_1\s_j}{kw_jn},\frac{3\sqrt{2\p}c_2\s_j}{2w_jn}\right]$ with probability $\geq \frac{9}{10}$, where $c_1$ and $c_2$ are as defined in Proposition \ref{thm:closept}.
\end{proposition}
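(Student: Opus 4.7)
The plan is to reduce Proposition~\ref{thm:closeptgmm} to the generic concentration statement of Proposition~\ref{thm:closept} applied at $x=\m_j$ (so $y = F_f(\m_j)$) with the same constants $c_1, c_2$. That proposition guarantees, with probability $\ge 9/10$, that the sample nearest to $\m_j$ lies in $I_F\setminus I_N$, where $I_F,I_N$ are quantile intervals of $f$ around $y$. The only work left is to translate this quantile-based guarantee into the stated interval of distances from $\m_j$, which I will do via two PDF/mass bounds on $f$ in a neighborhood of $\m_j$.

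\textbf{Lower bound on $\min_i|X_i-\m_j|$ (from $I_N$).} Since each component density is at most $\frac{1}{\s_i\sqrt{2\p}}$, the mixture density obeys
\[
f(x) \;\le\; \frac{1}{\sqrt{2\p}}\sum_{i=1}^k \frac{w_i}{\s_i} \;\le\; \frac{k\, w_j}{\s_j \sqrt{2\p}},
\]
using that $j=\argmax_i w_i/\s_i$ by definition. Integrating over $[\m_j,\m_j+\D]$ (and symmetrically over $[\m_j-\D,\m_j]$), the mass assigned by $f$ is at most $\D\cdot k w_j/(\s_j\sqrt{2\p})$. For this mass to reach $c_1/n$, we need $\D\ge \sqrt{2\p}\,c_1\s_j/(k w_j n)$. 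Hence both endpoints of $I_N=[F_f^{-1}(y-c_1/n),F_f^{-1}(y+c_1/n)]$ lie at distance at least $\sqrt{2\p}c_1\s_j/(kw_j n)$ from $\m_j$, so any point outside $I_N$ is at least this far from $\m_j$.

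\textbf{Upper bound on $\min_i|X_i-\m_j|$ (from $I_F$).} The mass $f$ places on $[\m_j,\m_j+\D]$ is at least the contribution of component $j$ alone, namely $\tfrac{w_j}{2}\erf\!\left(\D/(\s_j\sqrt{2})\right)$. Setting $\D=\tfrac{3\sqrt{2\p}\,c_2\s_j}{2w_j n}$, the argument to $\erf$ is $\tfrac{3\sqrt{\p}\,c_2}{2w_j n}$, which lies in $(0,1)$ thanks to the hypothesis $n>\tfrac{3\sqrt{\p}c_2}{2w_j}$. I will use the bound $\erf(x)\ge \tfrac{4x}{3\sqrt{\p}}$ for $x\in[0,1]$, which follows from the Taylor expansion $\erf(x)=\tfrac{2}{\sqrt{\p}}(x-x^3/3+x^5/10-\dots)$: the tail is an alternating series with decreasing magnitude on $[0,1]$, so $\erf(x)\ge \tfrac{2}{\sqrt{\p}}(x-x^3/3)\ge \tfrac{2}{\sqrt{\p}}\cdot \tfrac{2x}{3}$. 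Plugging in gives $\tfrac{w_j}{2}\erf(\D/(\s_j\sqrt{2}))\ge c_2/n$, so $F_f^{-1}(y+c_2/n)\le \m_j+\D$. The symmetric argument handles the left endpoint of $I_F$, placing both endpoints within distance $\tfrac{3\sqrt{2\p}c_2\s_j}{2w_j n}$ of $\m_j$.

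\textbf{Combining.} Conditioning on the $9/10$-event of Proposition~\ref{thm:closept}, the closest sample $X_{j^*}$ lies in $I_F\setminus I_N$. Being outside $I_N$ gives the lower bound $|X_{j^*}-\m_j|\ge \sqrt{2\p}\,c_1\s_j/(kw_j n)$, and being inside $I_F$ gives the upper bound $|X_{j^*}-\m_j|\le 3\sqrt{2\p}\,c_2\s_j/(2w_j n)$, as required. The main subtlety is simply getting the $\erf$ lower bound tight enough to match the stated constants; the PDF-sum bound is routine and relies only on the defining property of $j$.
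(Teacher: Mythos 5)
Your proof is correct and follows essentially the same route as the paper's: apply Proposition~\ref{thm:closept} at $x=\m_j$, bound the mass of $f$ near $\m_j$ from above by $k w_j/(\s_j\sqrt{2\p})$ using that $j$ maximizes $w_i/\s_i$ to handle the $I_N$ endpoint, and bound it from below by the single-component contribution together with the inequality $\erf(x)\ge 4x/(3\sqrt{\p})$ on $[0,1]$ to handle the $I_F$ endpoint. The only cosmetic difference is that you justify the $\erf$ lower bound via the alternating-series remainder, whereas the paper simply invokes the Taylor expansion.
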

\begin{proof}
  We examine the CDF of the mixture around $\m_i$.
  Using Proposition 1 (and symmetry of a Gaussian about its mean), it is sufficient to show that
  $$\left[\m_i + \frac{\sqrt{2\p}c_1\s_i}{kw_in},\m_i + \frac{3\sqrt{2\p}c_2\s_i}{2w_in}\right] \supseteq
\left[F^{-1}(F(\m_i) + \frac{c_1}{n}), F^{-1}(F(\m_i) + \frac{c_2}{n})\right],$$ where $F$ is the CDF of the mixture.
  We show that each endpoint of the latter interval bounds the corresponding endpoint of the former interval.

  First, we show $\frac{c_1}{n} \geq F\left(\m_i +\frac{\sqrt{2\p}c_1\s_i}{kw_in}\right) -  F(\m_i)$.
  Let $I = \left[\m_i, \m_i + \frac{\sqrt{2\p}c_1\s_i}{kw_in}\right]$, $f$ be the PDF of the mixture, and $f_i$ be the PDF of component $i$ of the mixture.
  The right-hand side of the inequality we wish to prove is equal to 
  \begin{align*}
    \int_{I}\! f(x)\, \mathrm{d}x &= \int_{I}\! \sum_{j=1}^k w_jf_j(x)\, \mathrm{d}x \\
                                  &\leq \int_{I}\! \sum_{j=1}^k w_j\frac{1}{\s_j\sqrt{2\p}}\, \mathrm{d}x \\
                                  &\leq \int_{I}\! \frac{kw_i}{\s_i\sqrt{2\p}}\, \mathrm{d}x \\
                                  &= \frac{c_1}{n}
  \end{align*}
  where the first inequality is since the maximum of the PDF of a Gaussian is $\frac{1}{\s\sqrt{2\p}}$, and the second is since $\frac{\s_j}{w_j} \leq \frac{\s_i}{w_i}$ for all $j$.

  Next, we show $\frac{c_2}{n} \leq F\left(\m_i + \frac{3\sqrt{2\p}c_2\s_i}{2w_in}\right) -  F(\m_i)$.
  We note that the right-hand side is the probability mass contained in the interval - a lower bound for this quantity is the probability mass contributed by the particular Guassian we are examining, which is $\frac{w_i}{2}\erf{\left(\frac{3\sqrt{\p}c_2}{2w_in}\right)}$.
  Taking the Taylor expansion of the error function gives
  $$\erf(x) = \frac{2}{\sqrt{\p}}\left(x -\frac{x^3}{3} + O(x^5)\right) \geq \frac{2}{\sqrt{\p}}\left(\frac{2}{3}x\right)$$
  if $x < 1$.
  Applying this here, we can lower bound the contributed probability mass by $\frac{w_i}{2}\frac{2}{\sqrt{\p}}\frac{2}{3}\frac{3\sqrt{\p}c_2}{2w_in} = \frac{c_2}{n}$, as desired.
\end{proof}

Finally, we deal with uncertainties in parameters and apply the robustness properties of Proposition \ref{thm:closeptrob} in the following lemma:

\begin{prevproof}{Lemma}{lem:closeptgmmrob}
  We analyze the effect of each uncertainty:
  \begin{itemize}
    \item First, we consider the effect of sampling from $\hat f$, which is $\d$-close to $f$.
          By using Proposition \ref{thm:closeptrob}, we know that the nearest sample to $\m_j$ 
          will be at CDF distance between $\frac{c_1}{n} - \d \geq \frac{c_1}{2n}$ 
          and $\frac{c_2}{n} + \d \leq \frac{3c_2}{2n}$.
          We can then repeat the proof of Proposition \ref{thm:closeptgmm} 
          with $c_1$ replaced by $\frac{c_1}{2}$ and $c_2$ replaced by $\frac{3c_2}{2}$.
          This gives us that 
          $\min_i |X_i - \m_j| \in 
          \left[\frac{\sqrt{\p}c_1}{\sqrt{2}kw_jn}\s_j, \frac{9\sqrt{\p}c_2}{2\sqrt{2}w_jn}\s_j\right]$
          (where $n \geq \frac{9\sqrt{\p}c_2}{4w_j}$)
          with probability $\geq \frac{9}{10}$.
    \item Next, substituting in the bounds $\frac12\hat w_j \leq w_j \leq 2\hat w_j$, we get
          $\min_i \left|X_i - \m_j\right| \in 
          \left[\frac{\sqrt{\p}c_1}{2\sqrt{2}k\hat w_jn}\s_j, \frac{9\sqrt{\p}c_2}{\sqrt{2}\hat w_jn}\s_j\right]$
          (where $n \geq \frac{9\sqrt{\p}c_2}{2\hat w_j}$)
          with probability $\geq \frac{9}{10}$.
    \item We use $n = \frac{9\sqrt{\p}c_2}{2\hat w_j}$ samples to obtain:
          $\min_i \left|X_i - \m_j\right| \in 
          \left[\frac{c_3}{k}\s_j, \sqrt{2}\s_j\right]$
          with probability $\geq \frac{9}{10}$.
    \item Finally, applying $| \hat \m_j - \m_j| \leq \frac{c_3}{2k}\s_j$ gives the lemma statement.
  \end{itemize}
\end{prevproof}

\section{Omitted Proofs from Section~\ref{sec:tournament}} \label{sec: tournament appendix}

\begin{prevproof}{Lemma}{lem:choosehypothesis}
We set up a competition between $H_1$ and $H_2$, in terms of the following subset of $\cal D$:
$$
{\cal W}_1\equiv{\cal W}_1(H_1,H_2) := \left\{w \in {\cal D}~\vline~H_1(w) > H_2(w) \right\}. \label{eq:W1}
$$

\noindent In terms of ${\cal W}_1$ we define $p_1 = H_1({\cal W}_1)$ and $p_2 = H_2({\cal W}_1)$. Clearly, $p_1 > p_2$ and
 $\dtv(H_1, H_2) = p_1-p_2$. The competition between $H_1$ and $H_2$ is carried out as follows:

\bigskip

\noindent
\framebox{
\medskip \noindent \begin{minipage}{16.2cm}
\medskip

\begin{enumerate}
\item[1a.] Draw $m=O\left({\log(1/\delta) \over \ve^2}\right)$ samples $s_1,\ldots,s_m$ from $X$, and let $\hat{\tau} = {1 \over m} | \{i~|~s_i
\in {\cal W}_1 \}|$ be the fraction of  them that fall inside ${\cal W}_1.$ \label{algorithm:step 1a}

\item[1b.] Similarly, draw $m$ samples from $H_1$, and let $\hat{p}_1$ be the fraction of  them that fall inside ${\cal W}_1.$ \label{algorithm:step 1b}

\item[1c.] Finally, draw $m$ samples from $H_2$, and let $\hat{p}_2$ be the fraction of  them that fall inside ${\cal W}_1.$ \label{algorithm:step 1c}

\item[2.] If $\hat{p}_1-\hat{p}_2\leq 6 \ve$, declare a draw. Otherwise:

\item[3.] If $\hat{\tau} > \hat{p}_1- 2 \ve$, declare $H_1$ as winner and return $H_1$; otherwise,

\item[4.] if $\hat{\tau} < \hat{p}_2+ 2 \ve$, declare $H_2$ as winner and return $H_2$; otherwise,

\item[5.] Declare a draw.
\end{enumerate}
\end{minipage}}

\medskip Notice that, in Steps~1a, 1b and 1c, the algorithm utilizes the PDF comparator for distributions $H_1$ and $H_2$. The correctness of the algorithm is a consequence of the following claim.

\begin{claim} \label{lem:kostas3}
Suppose that $\dtv(X,H_1) \leq \ve$. Then:
\begin{enumerate}
\item  If $\dtv(X, H_2)>8 \ve$, then the probability that the competition between $H_1$ and $H_2$ does not declare $H_1$ as the winner is at most $6e^{- {m \ve^2 /2 } }$;
\item If $\dtv(X, H_2)>4 \ve$, then the probability that the competition between $H_1$ and $H_2$ returns $H_2$ as the winner is at most $6e^{- {m \ve^2 /2 } }$.
\end{enumerate}
The analogous conclusions hold if we interchange $H_1$ and $H_2$ in the above claims. Finally, if $\dtv(H_1, H_2) \le 5 \ve$, the algorithm will declare a draw with probability at least $1-6e^{- {m \ve^2 /2 } }$.
\end{claim}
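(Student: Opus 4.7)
\begin{proofsketch}
The plan is to use Hoeffding's inequality on the three empirical frequencies $\hat\tau, \hat p_1, \hat p_2$, each of which is the mean of $m$ independent Bernoulli indicators of the event ``sample falls in ${\cal W}_1$'' (under $X$, $H_1$, $H_2$ respectively). Setting the Hoeffding deviation threshold at $\ve/2$, each empirical quantity is within $\ve/2$ of its true expectation ($\tau := X({\cal W}_1)$, $p_1$, $p_2$) except with probability $2e^{-2m(\ve/2)^2} = 2e^{-m\ve^2/2}$, so by a union bound the ``good event'' $G$ that all three estimates are $\ve/2$-accurate holds with probability $\ge 1-6e^{-m\ve^2/2}$. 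Everything below is argued under $G$.

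The two facts driving the argument are (i) $|\tau - p_i| \le \dtv(X,H_i)$ for $i=1,2$, since $\tau, p_i$ are just the values assigned to ${\cal W}_1$ by $X, H_i$; and (ii) $p_1-p_2 = \dtv(H_1,H_2) \ge \dtv(X,H_2)-\dtv(X,H_1)$ by the triangle inequality, combined with $p_1>p_2$ which is built into the definition of ${\cal W}_1$.

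For Case 1 ($\dtv(X,H_1)\le\ve$, $\dtv(X,H_2)>8\ve$), fact (ii) gives $p_1-p_2>7\ve$, so $\hat p_1-\hat p_2 > 7\ve - \ve = 6\ve$ and step 2 does not declare a draw. Fact (i) gives $\tau\ge p_1-\ve$, hence $\hat\tau \ge p_1 - 3\ve/2 \ge \hat p_1 - 2\ve$, so step 3 fires and declares $H_1$ winner. For Case 2 ($\dtv(X,H_2)>4\ve$), note that if $\hat p_1-\hat p_2\le 6\ve$ a draw is declared and $H_2$ does not win, so I may assume $\hat p_1-\hat p_2>6\ve$. Under this assumption, $p_1-p_2 \ge \hat p_1-\hat p_2-\ve > 5\ve$, so $\tau \ge p_1-\ve \ge p_2+4\ve$, and therefore $\hat\tau \ge p_2 + 4\ve - \ve/2 \ge \hat p_2 + 3\ve$, so step 4 fails to declare $H_2$. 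The analogous conclusions with $H_1,H_2$ interchanged follow by running the same argument using the set ${\cal W}_2 := \{w : H_2(w)>H_1(w)\}$ instead of ${\cal W}_1$, which flips the roles of $p_1$ and $p_2$ symmetrically. Finally, for the last statement, if $\dtv(H_1,H_2)\le 5\ve$ then $\hat p_1-\hat p_2 \le (p_1-p_2)+\ve \le 6\ve$ and step 2 declares a draw.

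The main thing to get right is not any single deviation bound but the bookkeeping in Case 2: a direct attempt to show ``step 4 does not fire'' via the inequality chain $\hat\tau - \hat p_2 \ge (p_1-p_2) - \ve - 2\delta$ is too weak when $\dtv(X,H_2)$ is only slightly larger than $4\ve$, because the straightforward bound yields $\hat\tau-\hat p_2 > \ve$ rather than the needed $\ge 2\ve$. The resolution is to split on whether step 2 has already declared a draw: the extra margin $\hat p_1-\hat p_2 > 6\ve$ translates through fact (ii) into a strengthened lower bound on $p_1-p_2$, which is precisely what is needed to push $\hat\tau$ past $\hat p_2+2\ve$. With that observation in place the rest is straightforward bookkeeping with constants $\ve/2$ and $6\ve$.
\end{proofsketch}
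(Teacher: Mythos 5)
Your proof is correct and follows essentially the same approach as the paper: a union bound over three Hoeffding/Chernoff deviation events at threshold $\ve/2$, the identity $p_1-p_2=\dtv(H_1,H_2)$ combined with the triangle inequality, and a case split on whether Step 2 already declared a draw. The only cosmetic difference is in the $\dtv(X,H_2)>4\ve$ case: the paper shows Step 3 fires (declaring $H_1$, hence $H_2$ cannot win), while you show directly that Step 4's condition $\hat\tau<\hat p_2+2\ve$ cannot hold; both arguments exploit the extra margin $\hat p_1-\hat p_2>6\ve$ obtained from the split, and both are valid.
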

\begin{prevproof}{Claim}{lem:kostas3}
Let $\tau=X({\cal W}_1)$. The Chernoff bound (together with a union bound) imply that, with probability at least $1-6 e^{- {m \ve^2/2 }}$, the following are simultaneously true: $|p_1-\hat{p}_1| < \ve/2$, $|p_2-\hat{p}_2| < \ve/2$, and $|\tau-\hat{\tau}| < \ve/2$. Conditioning on these:
\begin{itemize}

\item If $\dtv(X,H_1) \leq \ve$ and $\dtv(X, H_2)>8 \ve$, then from the triangle inequality we get that  $p_1-p_2=\dtv(H_1, H_2) > 7 \ve$, hence $\hat{p}_1 - \hat{p}_2 > p_1-p_2 -\ve> 6 \ve$.  Hence, the algorithm will go beyond Step 2. Moreover, $\dtv(X,H_1) \leq \ve$ implies that $|\tau-p_1| \le \ve$, hence $|\hat{\tau}-\hat{p}_1| < 2 \ve$. So the algorithm will stop at Step 3, declaring $H_1$ as the winner of the competition between $H_1$ and $H_2$.

\item  If $\dtv(X,H_2) \leq \ve$ and $\dtv(X, H_1)>8 \ve$, then as in the previous case we get from the triangle inequality that  $p_1-p_2=\dtv(H_1, H_2) > 7 \ve$, hence $\hat{p}_1 - \hat{p}_2 > p_1-p_2 -\ve> 6 \ve$.  Hence, the algorithm will go beyond Step 2. Moreover, $\dtv(X,H_2) \leq \ve$ implies that $|\tau-p_2| \le \ve$, hence $|\hat{\tau}-\hat{p}_2| < 2 \ve$. So $\hat{p}_1 > \hat{\tau}+4 \ve$. Hence, the algorithm will not stop at Step 3, and it will stop at Step 4 declaring $H_2$ as the winner of the competition between $H_1$ and $H_2$.

\item If $\dtv(X,H_1) \leq \ve$ and $\dtv(X, H_2)>4 \ve$, we distinguish two subcases. If $\hat{p}_1 - \hat{p}_2 \le 6 \ve$, then the algorithm will stop at Step 2 declaring a draw. If $\hat{p}_1 - \hat{p}_2 > 6 \ve$, the algorithm proceeds to Step 3. Notice that $\dtv(X,H_1) \leq \ve$ implies that $|\tau-p_1| \le \ve$, hence $|\hat{\tau}-\hat{p}_1| < 2 \ve$. So the algorithm will stop at Step 3, declaring $H_1$ as the winner of the competition between $H_1$ and $H_2$.

\item If $\dtv(X,H_2) \leq \ve$ and $\dtv(X, H_1)>4 \ve$, we distinguish two subcases. If $\hat{p}_1 - \hat{p}_2 \le 6 \ve$, then the algorithm will stop at Step 2 declaring a draw. If $\hat{p}_1 - \hat{p}_2 > 6 \ve$, the algorithm proceeds to Step 3. Notice that $\dtv(X,H_2) \leq \ve$ implies that $|\tau-p_2| \le \ve$, hence $|\hat{\tau}-\hat{p}_2| < 2 \ve$. Hence, $\hat{p}_1 > \hat{p}_2  + 6 \ve \ge \hat{\tau} + 4 \ve$, so the algorithm will not stop at Step 3 and will proceed to Step 4. Given that $|\hat{\tau}-\hat{p}_2| < 2 \ve$, the algorithm will stop at Step 4, declaring $H_2$ as the winner of the competition between $H_1$ and $H_2$.

\item If $\dtv(H_1,H_2) \leq 5 \ve$, then $p_1-p_2 \le 5 \ve$, hence $\hat{p}_1-\hat{p}_2 \le 6 \ve$. So the algorithm will stop at Step 2 declaring a draw.
\end{itemize}
\end{prevproof}\end{prevproof}

\begin{prevproof}{Lemma}{thm:slow tournament theorem}
Draw $m=O(\log(2N/\delta) /\ve^2)$ samples from each of $X, H_1,\ldots,H_N$ and, using the same samples, run $$\text{{\tt ChooseHypothesis}}\left(X,H_i,H_j,\ve, {\delta \over 2N}\right),$$
for every pair of distributions $H_i, H_j \in {\cal H}$. If there is a distribution $H \in {\cal H}$ that was never a loser (but potentially tied with some distributions), output any such distribution. Otherwise, output ``failure.''

\medskip We analyze the correctness of our proposed algorithm in two steps. First, suppose there exists $H^* \in {\cal H}$ such that $\dtv(H^*,X) \leq \ve$. We argue that, with probability at least $1-{\delta \over 2}$, $H^*$ never
loses a competition against any other $H' \in \mathcal{H}$ (so the tournament does not output ``failure'').  Consider any $H'
\in {\cal H}$. If $\dtv(X, H') > 4 \ve$, by Lemma~\ref{lem:choosehypothesis} the probability that $H^*$ is not declared a winner or tie against 
$H'$ is at most ${\delta \over 2N}$. On the other hand, if $\dtv(X, H') \leq 4 \ve$, the triangle
inequality gives that $\dtv(H^*, H') \leq 5 \ve$ and, by Lemma~\ref{lem:choosehypothesis}, the probability that $H^*$ does not draw against $H'$ is at most $\delta \over 2N$.  A union
bound over all $N$ distributions in ${\cal H}$ shows that with probability at least $1-{\delta \over 2}$, the distribution $H^*$ never loses a competition.

We next argue that with probability at least $1-{\delta\over 2}$, every distribution $H \in \mathcal{H}$ that never loses must be $8\ve$-close to $X$. Fix a distribution $H$ such that $\dtv(X, H) > 8 \ve$. Lemma~\ref{lem:choosehypothesis} implies that
$H$ loses to $H^*$ with probability at least $1 - \delta/2N$.  A union bound gives that with
probability at least $1-{\delta \over 2}$, every distribution $H$ such that $\dtv(X, H) > 8 \ve$ loses some competition.

Thus, with overall probability at least $1-\delta$, the tournament does not output ``failure'' and outputs some distribution $H$ such that $\dtv(H, X) \le 8 \ve.$ \end{prevproof}

\begin{prevproof}{Claim}{claim1:tournament}
The probability that ${\cal H}'$ contains no distribution that is $8 \ve$-close to $X$ is at most
$$(1-p)^{\lceil {3 \sqrt{N}} \rceil } \le e^{-3}.$$
If ${\cal H}'$ contains at least one distribution that is $8 \ve$-close to $X$, then by Lemma~\ref{thm:slow tournament theorem} the distribution output by  {\tt SlowTournament}$(X,{\cal H}', 8\ve, e^{-3})$ is $64\ve$-close to $X$ with probability at least $1-e^{-3}$. From a union bound, it follows that the distribution output by {\rm S1} is $64 \ve$-close to $X$, with probability at least $1-2 e^{-3} \ge 9/10$. The bounds on the number of samples and operations follow from Lemma~\ref{thm:slow tournament theorem}.
\end{prevproof}

\begin{prevproof}{Claim}{claim2:tournament}
Suppose that there is some distribution $H^* \in {\cal H}$ that is $\ve$-close to $X$. We first argue that with probability at least ${1 \over 3}$, $H^* \in {\cal H}_{i_T}$. We show this in two steps:
\begin{itemize}
\item[(a)] Recall that we draw samples from $X, H_1,\ldots,H_N$ before Phase 1 begins, and reuse the same samples whenever required by some execution of {\tt ChooseHypothesis} during Phase 1. Fix a realization of these samples. We can ask the question of what would happen if we executed\\ {\tt ChooseHypothesis}$(X,H^*,{H}_j,\ve,1/3N)$, for some $H_j \in {\cal H} \setminus \{H^*\}$ using these samples. From Lemma~\ref{lem:choosehypothesis}, it follows that, if $H_j$ is farther than $8\ve$-away from $X$, then $H^*$ would be declared the winner by {\tt ChooseHypothesis}$(X,H^*,{H}_j,\ve,1/3N)$, with probability at least $1-1/3N$. By a union bound, our samples satisfy this property simultaneously for all $H_j \in {\cal H}\setminus \{H^*\}$ that are farther than $8\ve$-away from $X$, with probability at least $1-1/3$. Henceforth, we condition that our samples have this property.

\item[(b)] Conditioning on our samples having the property discussed in (a), we argue that $H^* \in {\cal H}_{i_T}$ with probability at least $1/2$ (so that, with overall probability at least $1/3$, it holds that $H^* \in {\cal H}_{i_T}$). It suffices to argue that, with probability at least $1/2$, in all iterations of Phase 1, $H^*$ is not matched with a distribution that is $8\ve$-close to $X$. This happens with probability at least:
$$(1-p)(1-2p) \cdots (1-2^{T-1}p) \ge  2^{-2p\sum_{i=0}^{T-1} 2^i} = 2^{-2p(2^T-1)}\ge 1/2.$$
Indeed, given the definition of $p$, the probability that $H^*$ is not matched to a distribution that is $8\ve$-close to $X$ is at least $1-p$ in the first iteration. If this happens, then (because of our conditioning from (a)), $H^*$ will survive this iteration. In the next iteration, the fraction of surviving distributions that are $8\ve$-close to $X$ and are different than $H^*$ itself is at most~$2p$. Hence, the probability that $H^*$ is not matched to a distribution that is $8\ve$-close to $X$ is at least $1-2p$ in the second iteration, etc.
\end{itemize}
Now, conditioning on $H^* \in {\cal H}_{i_T}$, it follows from Lemma~\ref{thm:slow tournament theorem} that the distribution $\hat{H}$ output by\\ {\tt SlowTournament}$(X,{\cal H}_{i_T}, \ve, 1/4)$ is $8\ve$-close to $X$ with probability at least $3/4$. 

Hence, with overall probability at least $1/4$, the distribution output by {\rm S2} is $8 \ve$-close to $X$.

\smallskip The number of samples drawn from each distribution in ${\cal H} \cup \{X\}$ is clearly $O({1\over \ve^2} \log N)$, as Phase 1 draws $O({1\over \ve^2} \log N)$ samples from each distribution and, by Lemma~\ref{thm:slow tournament theorem}, Phase 2 also draws $O({1\over \ve^2} \log N)$ samples from each distribution.

\smallskip The total number of operations is bounded by $O({1 \over \ve^2} N \log N).$
Indeed, Phase 1 runs {\tt ChooseHypothesis} $O(N)$ times, and by Lemma~\ref{lem:choosehypothesis} and our choice of $1/3N$ for the confidence parameter of each execution, each execution takes $O(\log N/\ve^2)$ operations. So the total number of operations of Phase 1 is $O({1 \over \ve^2} N \log N)$. On the other hand, the size of ${\cal H}_{i_T}$ is at most ${2^{\lceil \log_2 N \rceil} \over 2^{T}} = {2^{\lceil \log_2 N \rceil} \over 2^{\lfloor \log_2 {\sqrt{N} \over 2} \rfloor}} \le 8 \sqrt{N}$. So by Lemma~\ref{thm:slow tournament theorem}, Phase 2 takes $O({1 \over \ve^2} N \log N)$ operations.
\end{prevproof}

\section{Faster Slow and Fast Tournaments} \label{sec: tournament appendix 2}
In this section, we describe another hypothesis selection algorithm.
This algorithm is faster than {\tt SlowTournament}, though at the cost of a larger constant in the approximation factor.
In most reasonable parameter regimes, this algorithm is slower than {\tt FastTournament}, and still has a larger constant in the approximation factor.
Regardless, we go on to show how it can be used to improve upon the worst-case running time of {\tt FastTournament}.

\begin{theorem} \label{thm:recursive tournament theorem}
For any constant $\g > 0$, there is an algorithm {\tt RecursiveSlowTournament}$_{\g}(X, {\cal H},\ve,\delta)$, which is given sample access to some distribution $X$ and a collection of distributions ${\cal H}=\{H_1,\ldots,H_N\}$ over some set ${\cal D}$, access to a PDF comparator for every pair of distributions $H_i, H_j \in {\cal H}$, an accuracy parameter $\ve >0$, and a confidence parameter $\delta >0$.  The algorithm makes
$m=O(\log(N/\delta) /\ve^2)$ draws from each of $X, H_1,\ldots,H_N$ and returns some $H \in {\cal H}$ or declares ``failure.''  If there is some $H^* \in {\cal H}$ such that $\dtv(H^*,X) \leq \ve$ then with probability at least $1-\delta$ the distribution $H$ that {\tt
RecursiveSlowTournament} returns satisfies $\dtv(H,X) \leq O(\ve).$ The total number of operations of the algorithm is $O\left( N^{1+\g} \log(N/\delta) /\ve^2 \right)$.

\end{theorem}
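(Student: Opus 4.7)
The plan is to iterate \texttt{SlowTournament} recursively over $k$ levels, where $k$ depends only on $\g$. At each level, I would partition the surviving hypotheses into groups, run \texttt{SlowTournament} on each group, and pass only the winners to the next level. The group sizes will be chosen so that the per-level cost is balanced at $O(N^{1+\g} \log(N/\d)/\ve^2)$, and $k$ will be chosen so that after $k$ levels only a constant number of hypotheses remain. Since $\g$ is a fixed constant, $k$ will also be a constant (depending on $\g$), which is what keeps the approximation-factor blowup bounded.

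Concretely, let $n_i$ denote the number of hypotheses surviving after $i$ levels, with $n_0 = N$. I would partition at level $i$ into groups of size $s_i := N^{1+\g}/n_i$, so that $n_{i+1} = n_i/s_i = n_i^2/N^{1+\g}$. Solving this recurrence yields $n_i = N^{1-(2^i-1)\g}$, which drops below a constant once $(2^i-1)\g \ge 1$, i.e.\ after $k = \lceil \log_2(1/\g + 1)\rceil = O(\log(1/\g))$ levels. For correctness, the key point is that each \texttt{SlowTournament} call loses a factor of $8$ in accuracy (Lemma~\ref{thm:slow tournament theorem}), so at level $i$ I would invoke it with accuracy $\ve_i := 8^i \ve$: inductively, if some input hypothesis is $\ve_i$-close to $X$, then the group containing it outputs an $\ve_{i+1}$-close winner. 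After $k$ levels, the output is $8^k \ve = \mathrm{poly}(1/\g)\cdot \ve = O_\g(\ve)$-close to $X$, which matches the claimed $O(\ve)$ bound (with a constant depending on $\g$).

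For the cost and sample bounds, I would set the confidence parameter of each \texttt{SlowTournament} invocation to $\d / (k \cdot n_i/s_i)$ so that a union bound over the $O(N)$ total invocations across all levels gives overall failure probability at most $\d$. By Lemma~\ref{thm:slow tournament theorem}, level $i$ costs $n_i/s_i$ runs of \texttt{SlowTournament} on groups of size $s_i$, totaling $O(n_i s_i \log(N/\d)/\ve^2) = O(N^{1+\g} \log(N/\d)/\ve^2)$ by our choice of $s_i$; summing over $k = O_\g(1)$ levels gives the claimed total. Each \texttt{SlowTournament} on a group of size $s_i$ uses $O(\log(N/\d)/\ve_i^2) = O(\log(N/\d)/\ve^2)$ samples per distribution in that group, and drawing fresh samples at each of the $O_\g(1)$ levels keeps the per-distribution sample complexity at $O(\log(N/\d)/\ve^2)$.

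The main obstacle I anticipate is threading the needle between the two competing parameters: $k$ must be large enough that the recursion $n_{i+1}=n_i^2/N^{1+\g}$ actually collapses $n_k$ to $O(1)$, yet small enough that the multiplicative $8^k$ accuracy loss remains bounded by a constant (depending on $\g$ only). The choice $s_i = N^{1+\g}/n_i$ with $k = O(\log(1/\g))$ achieves both simultaneously, but one must verify carefully that the induction carries through on the ``good'' hypothesis---namely that the hypothesis $\ve_i$-close to $X$ always lies in \emph{some} group, so \texttt{SlowTournament} produces a $\ve_{i+1}$-close winner which then participates at level $i+1$---and that the union bound over $O(\log(1/\g))$ levels and $O(N/s_i)$ groups per level does not inflate the sample complexity by more than a constant factor.
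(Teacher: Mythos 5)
Your proposal is correct, and it takes a genuinely different route from the paper's. The paper defines {\tt RecursiveSlowTournament} as a \emph{self-recursive} construction: {\tt SlowTournament}$^{\otimes 1}$ partitions ${\cal H}$ into $\sqrt{N}$ groups of size $\sqrt{N}$, runs the base {\tt SlowTournament} on each group and then once more on the $\sqrt{N}$ winners (giving exponent $3/2$), and {\tt SlowTournament}$^{\otimes t}$ is obtained by substituting {\tt SlowTournament}$^{\otimes t-1}$ for the base tournament in this recipe; this halves the excess in the exponent at each step, giving $1 + 2^{-t}$ after $t$ layers, so a constant $t$ depending on $\g$ suffices. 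You instead use a \emph{flat chain} of $k$ levels with explicitly tuned group sizes $s_i = N^{1+\g}/n_i$, chosen so that every level costs exactly $O(N^{1+\g}\log(N/\d)/\ve^2)$ and the survivor count $n_i = N^{1-(2^i-1)\g}$ collapses after $k = \lceil\log_2(1+1/\g)\rceil$ levels; your recurrence and its closed form check out, and once $n_i^2\leq N^{1+\g}$ the ``partition'' degenerates to a single group and terminates cleanly. Both schemes lose a factor of $8$ per layer (so both give a final guarantee $8^{O(\log(1/\g))}\ve = \mathrm{poly}(1/\g)\cdot\ve$, which is $O(\ve)$ for fixed $\g$), and both control confidence by a union bound over $O(N)$ tournament invocations. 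One point you should make explicit, as the paper does: within a level, all groups' {\tt SlowTournament} runs must \emph{reuse} the same samples drawn from $X$; drawing fresh samples from $X$ per group would inflate the sample complexity from $X$ by a factor of $n_i/s_i$. With that caveat, your iterative decomposition is a clean and arguably more transparent alternative to the paper's nested recursion, since it makes the per-level cost balance and the depth/accuracy tradeoff fully explicit.
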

\begin{proof}
For simplicity, assume that $\sqrt{N}$ is integer. (If not, introduce into ${\cal H}$ multiple copies of an arbritrary $H \in {\cal H}$ so that $\sqrt{N}$ becomes an integer.) Partition ${\cal H}$ into $\sqrt{N}$ subsets, ${\cal H} = {\cal H}_1 \sqcup {\cal H}_2 \sqcup \ldots \sqcup {\cal H}_{\sqrt{N}}$ and do the following:
\begin{enumerate}
\item Set $\delta'= \delta/2$, draw $O(\log(\sqrt{N}/\delta') /\ve^2)$ samples from $X$ and, using the same samples, run {\tt SlowTournament}$(X,{\cal H}_i, \ve,\delta')$ from Lemma~\ref{thm:slow tournament theorem} for each $i$;

\item Run {\tt SlowTournament}$(X,{\cal W}, 8\ve,\delta')$, where ${\cal W}$ are the distributions output by {\tt SlowTournament} in the previous step. If ${\cal W}=\emptyset$ output ``failure''.
\end{enumerate}
Let us call the above algorithm {\tt SlowTournament}$^{\bigotimes 1}(X,{\cal H}, \ve,\delta)$, before proceeding to analyze its correctness, sample and time complexity. Suppose there exists a distribution $H \in {\cal H}$ such that $\dtv(H,X) \le \ve$.  Without loss of generality, assume that $H \in {\cal H}_1$. Then, from Lemma~\ref{thm:slow tournament theorem}, with probability at least $1-\delta'$, {\tt SlowTournament}$(X,{\cal H}_1,\ve,\delta')$ will output a distribution $H'$ such that $\dtv(H',X) \le 8 \ve$. Conditioning on this and applying Lemma~\ref{thm:slow tournament theorem} again, with conditional probability at least $1-\delta'$ {\tt SlowTournament}$(X,{\cal W},8 \ve,\delta')$ will output a distribution $H''$ such that $\dtv(H'',X) \le 64 \ve$. So with overall probability at least $1-\delta$, {\tt SlowTournament}$^{\bigotimes 1}(X,{\cal H}, \ve,\delta)$ will output a distribution that is $64\ve$-close to $X$. The number of samples that the algorithm draws from $X$ is $O(\log(N/\delta) /\ve^2)$, and the running time is 
$$\sqrt{N} \times O\left(N \log(N/\delta') /\ve^2 \right) + O\left(N \log(N/\delta') /(8\ve)^2 \right) = O\left(N^{3/2} \log(N/\delta) /\ve^2 \right).$$
So, compared to {\tt SlowTournament}, {\tt SlowTournament}$^{\bigotimes 1}$ has the same sample complexity asymptotics and the same asymptotic guarantee for the distance from $X$ of the output distribution, but the exponent of $N$ in the running time improved from $2$ to $3/2$.

For $t=2, 3, \ldots$, define {\tt SlowTournament}$^{\bigotimes t}$ by replacing  {\tt SlowTournament} by {\tt SlowTournament}$^{\bigotimes t-1}$ in the code of {\tt SlowTournament}$^{\bigotimes 1}$. It follows from the same analysis as above that as $t$ increases the exponent of $N$ in the running time gets arbitrarily close to $1$. In particular, in one step an exponent of $1+\alpha$ becomes an exponent of $1+\alpha/2$. So for some constant $t$, {\tt SlowTournament}$^{\bigotimes t}$ will satisfy the requirements of the theorem.
\end{proof}

As a corollary, we can immediately improve the running time of {\tt FastTournament} at the cost of the constant in the approximation factor.
The construction and analysis is nearly identical to that of {\tt FastTournament}.
The sole difference is in step 3 of {\tt FastTournament}$_A$ - we replace {\tt SlowTournament} with {\tt RecursiveSlowTournament}$_\g$.

\begin{corollary} \label{corr:recursive tournament corollary}
For any constant $\g > 0$, there is an algorithm {\tt FastTournament}$_\g(X, {\cal H},\ve,\delta)$, which is given sample access to some distribution $X$ and a collection of distributions ${\cal H}=\{H_1,\ldots,H_N\}$ over some set ${\cal D}$, access to a PDF comparator for every pair of distributions $H_i, H_j \in {\cal H}$, an accuracy parameter $\ve >0$, and a confidence parameter $\delta >0$.  The algorithm makes
{$O\left({\log {1/ \delta} \over \ve^2} \cdot \log N\right)$} draws from each of $X, H_1,\ldots,H_N$ and returns some $H \in {\cal H}$ or declares ``failure.''  If there is some $H^* \in {\cal H}$ such that $\dtv(H^*,X) \leq \ve$ then with probability at least $1-\delta$ the distribution $H$ that {\tt
SlowTournament} returns satisfies $\dtv(H,X) \leq O(\ve)$. The total number of operations of the algorithm is {$O\left( {\log{1 / \delta} \over \ve^2} (N \log N + \log^{1+\g} {1 \over \delta}) \right)$}.
Furthermore, the expected number of operations of the algorithm is {$O\left( {N\log{N /\d} \over \ve^2}\right)$}.
\end{corollary}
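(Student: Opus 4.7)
The plan is to mimic the proof of Theorem~\ref{thm:tournament theorem} verbatim, tracking where the change to {\tt RecursiveSlowTournament}$_\g$ propagates. Recall that {\tt FastTournament}$_A$ consists of three pieces: (i) $k_1=\log_2(2/\delta)$ independent runs of Strategy S1, (ii) $k_2=\log_4(2/\delta)$ independent runs of Strategy S2, and (iii) a final call to {\tt SlowTournament} on the set $\mathcal{G}$ of the $k_1+k_2 = O(\log(1/\delta))$ winners. The proposed {\tt FastTournament}$_\g$ is identical except that step (iii) calls {\tt RecursiveSlowTournament}$_\g(X,\mathcal{G},64\ve,\delta/2)$ instead.

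For correctness, I would reuse Claims~\ref{claim1:tournament} and~\ref{claim2:tournament} unchanged: depending on whether $p\ge 1/\sqrt N$ or $p<1/\sqrt N$, one of the two strategies produces, in each iteration, with constant probability, a distribution $O(\ve)$-close to $X$; hence with probability $\ge 1-\delta/2$, $\mathcal{G}$ contains a distribution that is $64\ve$-close to $X$. By Theorem~\ref{thm:recursive tournament theorem}, running {\tt RecursiveSlowTournament}$_\g$ on a set guaranteed to contain such a hypothesis returns, with probability at least $1-\delta/2$, some $H$ with $\dtv(H,X)=O(\ve)$. A union bound gives overall success probability $\ge 1-\delta$, and the approximation factor is still an absolute constant (larger than $512$ in general, but constant in $N,\ve,\delta$).

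For the worst-case running time, S1 and S2 each take $O(N\log N/\ve^2)$ operations by the earlier claims, so their total contribution across $O(\log(1/\delta))$ repetitions is $O(\tfrac{\log(1/\delta)}{\ve^2}\cdot N\log N)$. The final call to {\tt RecursiveSlowTournament}$_\g$ is now on a collection of size $O(\log(1/\delta))$, so by Theorem~\ref{thm:recursive tournament theorem} it uses $O\!\left(\log^{1+\g}(1/\delta)/\ve^2\right)$ operations, replacing the $\log^2(1/\delta)$ factor in the statement of Theorem~\ref{thm:tournament theorem}. This yields the claimed worst-case bound $O\!\left(\tfrac{\log(1/\delta)}{\ve^2}(N\log N+\log^{1+\g}(1/\delta))\right)$. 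The sample-complexity bound $O(\tfrac{\log(1/\delta)}{\ve^2}\log N)$ per distribution is unaffected, as S1, S2, and the final recursive tournament all sample at this rate.

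For the expected running time, I would interleave with {\tt FastTournament}$_B$ exactly as in the original proof: its analysis depends only on Claims~\ref{claim1:tournament},~\ref{claim2:tournament} and Lemma~\ref{lem:choosehypothesis}, none of which are touched by replacing the Step-3 subroutine. Thus the $O(N\log(N/\delta)/\ve^2)$ expected bound carries over unchanged. The only subtlety worth checking is that Theorem~\ref{thm:recursive tournament theorem} still operates correctly when its input collection has only $O(\log(1/\delta))$ members (rather than growing with $N$), but since its guarantees are stated for arbitrary $N$ and its analysis is self-contained, no obstacle arises; the main thing to verify carefully is that the constants absorbed into ``$O(\ve)$'' in the conclusion of Theorem~\ref{thm:recursive tournament theorem} compose correctly with the $64\ve$ accuracy passed in at step (iii), which is a routine multiplicative blow-up independent of $\g$.
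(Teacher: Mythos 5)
Your proposal matches the paper's own (very brief) proof exactly: replace {\tt SlowTournament} by {\tt RecursiveSlowTournament}$_\g$ in Step~3 of {\tt FastTournament}$_A$, keep {\tt FastTournament}$_B$ and the interleaving unchanged, and propagate the bounds. One small arithmetic slip is worth flagging: the final call to {\tt RecursiveSlowTournament}$_\g$ on $\mathcal{G}$ of size $O(\log(1/\delta))$ with confidence $\delta/2$ costs $O\bigl(\log^{1+\g}(1/\delta)\cdot\log(\log(1/\delta)/\delta)/\ve^2\bigr) = O\bigl(\log^{2+\g}(1/\delta)/\ve^2\bigr)$ operations, not $O(\log^{1+\g}(1/\delta)/\ve^2)$ as you wrote; however, the worst-case total you ultimately state, $O\bigl(\tfrac{\log(1/\delta)}{\ve^2}(N\log N+\log^{1+\g}(1/\delta))\bigr)$, expands to $O\bigl(\tfrac{\log(1/\delta)N\log N}{\ve^2}+\tfrac{\log^{2+\g}(1/\delta)}{\ve^2}\bigr)$ and is correct, so the slip is only in the intermediate sentence.
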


\end{document}